\newtheorem{theorem}{Theorem}[section]
\newtheorem{lemma}[theorem]{Lemma}
\newtheorem{claim}[theorem]{Claim}
\newtheorem{corollary}[theorem]{Corollary}
\theoremstyle{definition} \newtheorem{definition}[theorem]{Definition}
\theoremstyle{remark} \newtheorem{remark}[theorem]{Remark}}
\newenvironment{customthm}[1]
{\innercustomthm}
{\endinnercustomthm}
\newenvironment{proofof}[1]{\begin{proof}[Proof of #1]}{\end{proof}}
\newcommand{\bg}[1]{\medskip\noindent{\it #1}}
\newcommand{\R}{\ensuremath{\mathbb R}}
\newcommand{\Z}{\ensuremath{\mathbb Z}}
\newcommand{\A}{\ensuremath{\mathcal{A}}}
\newcommand{\C}{\ensuremath{\mathcal{C}}}
\newcommand{\F}{\ensuremath{\mathcal F}}
\newcommand{\D}{\ensuremath{\mathcal D}}
\newcommand{\W}{\ensuremath{\mathcal W}}
\newcommand{\Lc}{\ensuremath{\mathcal L}}
\newcommand{\OPT}{\ensuremath{\mathit{OPT}}}
\newcommand{\sm}{\ensuremath{\setminus}}
\newcommand{\es}{\ensuremath{\emptyset}}
\newcommand{\cost}{\ensuremath{\mathsf{cost}}}
\newcommand{\charge}{\ensuremath{\mathit{charge}}}
\newcommand{\argmin}{\ensuremath{\mathrm{argmin}}}
\newcommand{\frall}{\ensuremath{\text{ for all }}}
\newcommand{\ceil}[1]{\ensuremath{\left\lceil#1\right\rceil}}
\newcommand{\poly}{\operatorname{poly}}
\newcommand{\junk}[1]{}
\newcommand{\sse}{\subseteq}
\newcommand{\nbr}{\ensuremath{\mathsf{nbr}}}
\newcommand{\ctr}{\ensuremath{\mathsf{ctr}}}
\newcommand{\bq}{\ensuremath{\bar{q}}}
\newcommand{\bC}{\ensuremath{\bar{C}}}
\newcommand{\bF}{\ensuremath{\overline{F}}}
\newcommand{\brho}{\ensuremath{\overline{\rho}}}
\newcommand{\br}{\ensuremath{\overline{r}}}
\newcommand{\bell}{\ensuremath{\overline{\ell}}}
\newcommand{\bx}{\ensuremath{\overline{x}}}
\newcommand{\by}{\ensuremath{\overline{y}}}
\newcommand{\bz}{\ensuremath{\overline{z}}}
\newcommand{\assign}{\ensuremath{\leftarrow}}
\newcommand{\ttht}{\ensuremath{\tilde\theta}}
\newcommand{\tdt}{\ensuremath{t'}}
\newcommand{\tq}{\overset{\mathsf{\mbox{\tiny{int}}}}{q}}
\newcommand{\tw}{\ensuremath{\widetilde{w}}}
\newcommand{\tv}{\ensuremath{\widetilde{v}}}
\newcommand{\tz}{\overset{\mathsf{\mbox{\tiny{int}}}}{z}}
\newcommand{\tsg}{\ensuremath{\widetilde{\sigma}}}
\newcommand{\hb}{\ensuremath{\hat{b}}}
\newcommand{\hx}{\ensuremath{\hat{x}}}
\newcommand{\hq}{\ensuremath{\hat{q}}}
\newcommand{\hy}{\ensuremath{\hat{y}}}
\newcommand{\hw}{\ensuremath{\widehat{w}}}
\newcommand{\sg}{\ensuremath{\sigma}}
\newcommand{\dt}{\ensuremath{\delta}}
\newcommand{\Dt}{\ensuremath{\Delta}}
\newcommand{\e}{\ensuremath{\epsilon}}
\newcommand{\gm}{\ensuremath{\gamma}}
\newcommand{\Gm}{\ensuremath{\Gamma}}
\newcommand{\Om}{\ensuremath{\Omega}}
\newcommand{\ld}{\ensuremath{\lambda}}
\newcommand{\kp}{\ensuremath{\kappa}}
\newcommand{\al}{\ensuremath{\alpha}}
\newcommand{\tht}{\ensuremath{\theta}}
\newcommand{\lb}{\ensuremath{\mathsf{lb}}}
\newcommand{\high}{\ensuremath{\mathsf{hi}}}
\newcommand{\ub}{\ensuremath{\mathsf{ub}}}
\newcommand{\iopt}{\ensuremath{\mathit{opt}}}
\newcommand{\acost}{\ensuremath{\vec{c}}}
\newcommand{\down}{\ensuremath{\mskip2mu\downarrow}}
\newcommand{\obj}{\cost}
\newcommand{\vo}{\ensuremath{\vec{o}^{\down}}}
\newcommand{\vc}{\ensuremath{\vec{c}^{\down}}}
\newcommand{\pay}{P}
\newcommand{\thr}{\rho}
\newcommand{\thrlp}[1][{\thr}]{P\ensuremath{{}_{#1}}}
\newcommand{\thrdual}[1][{\thr}]{D\ensuremath{{}_{#1}}}
\newcommand{\load}{\ensuremath{\mathsf{load}}}
\newcommand{\lvec}{\ensuremath{\overrightarrow{\load}}}
\newcommand{\pset}{\ensuremath{\mathsf{POS}}}
\newcommand{\prox}{\ensuremath{\mathsf{prox}}}
\newcommand{\next}{\ensuremath{\mathsf{next}}}
\newcommand{\topl}{\ensuremath{\mathsf{Top}\text{-}\ell}\xspace}
\newcommand{\into}{\ensuremath{\mathrm{in}}}
\newcommand{\out}{\ensuremath{\mathrm{out}}}
\newcommand{\olblp}{\ensuremath{\mathsf{LP}}}
\newcommand{\nw}{\tw}
\newcommand{\ball}{\ensuremath{\mathbb B}}
\newcommand{\bopt}{\ensuremath{\mathit{Bopt}}}
\newcommand{\sgr}{\ensuremath{d}}
\newcommand{\hsgr}{\ensuremath{\widehat{\sgr}}}
\newcommand{\Scol}{\ensuremath{\C}}
\newcommand{\bon}{\ensuremath{\mathbbm{1}}}
\newcommand{\LP}{\mathsf{LP}}
\newcommand{\ny}{\hy}
\newcommand{\nq}{\hq}
\newcommand{\ocllp}[1][{\vec{t}}]{\ensuremath{\mathsf{CLP}_{#1}}}
\newcommand{\minmax}[1][m]{{#1}in-max\xspace}
\newcommand{\Exp}{\mathbf{Exp}}
\newcommand{\tm}{\ensuremath{\tau}}
\renewcommand{\bullet}{\cdot}
\newcommand{\cI}{\mathcal{I}}
\newcommand{\vv}{\vec{v}}
\newcommand{\vt}{\vec{t}}
\renewcommand{\epsilon}{\varepsilon}
\title{Approximation Algorithms for Minimum Norm 
	and \\
	Ordered Optimization Problems}
\author{{Deeparnab Chakrabarty\footnote{Dartmouth College, Email: \ttfamily{deeparnab@dartmouth.edu}}} \and {Chaitanya Swamy\footnote{University of Waterloo, Email: \ttfamily{cswamy@uwaterloo.edu}}}}
\date{}
\begin{document}
\maketitle	
	
	\begin{abstract}
\noindent
In many optimization problems, a feasible solution induces a multi-dimensional cost vector.
For example, in load-balancing a schedule induces a load vector across the machines.
In $k$-clustering, opening $k$ facilities induces an assignment cost vector across the clients.
Typically, one seeks a solution which either minimizes the sum- or the max- of this vector, and 
these problems (makespan minimization, $k$-median, and $k$-center) are classic NP-hard problems which have been extensively studied. \smallskip 

In this paper we consider the {\em minimum norm} optimization problem. Given an arbitrary monotone, symmetric norm,
the problem asks to find a solution which minimizes the norm of the induced cost-vector. 
These functions are versatile and model a wide range of problems under one umbrella.
We give a general framework to 
tackle the minimum norm problem, and illustrate its efficacy in the unrelated machine load balancing and $k$-clustering setting. 
Our concrete results are the following.
\begin{itemize}[noitemsep]
	\item We give constant factor approximation algorithms for the minimum norm load balancing problem in {\em unrelated} machines, and the minimum norm
	$k$-clustering problem. To our knowledge, our results constitute the {\em first} constant-factor approximations for such a general suite of objectives. % in these settings.
	\item In load balancing with unrelated machines, we give a $2$-approximation for the problem of finding an assignment minimizing the sum of the largest $\ell$ loads, for any $\ell$.
	We give a $(2+\e)$-approximation for the so-called ordered load-balancing problem.
	\item For $k$-clustering, we give a $(5+\e)$-approximation for the ordered $k$-median problem significantly improving the constant factor approximations from
	Byrka, Sornat, and Spoerhase (STOC 2018) and Chakrabarty and Swamy (ICALP 2018).
	\item Our techniques also imply $O(1)$ approximations to the best {\em simultaneous optimization factor} 
	for any instance of the unrelated machine load-balancing and the $k$-clustering setting.
	To our knowledge, these are the first \emph{positive} simultaneous optimization results in these settings.
\end{itemize}
At a technical level, our main insight is connecting minimum-norm optimization to what we call \minmax ordered optimization. 
The main ingredient in solving the \minmax ordered optimization is {\em deterministic, oblivious rounding} of linear programming relaxations for load-balancing and clustering, and this technique may be of independent interest.
	\end{abstract}

\thispagestyle{empty}
\setcounter{page}{0}
\newpage

\section{Introduction}
In many optimization problems, a feasible solution induces a multi-dimensional cost vector.
For example, in the load balancing setting with machines and jobs, a solution is an assignment of jobs to machines, 
and this induces a {\em load} on every machine.
In a clustering setting with facilities and clients, a solution is to open $k$ facilities and connecting clients to the nearest open facilities, 
which induces an
{\em assignment cost} on every client. This multi-dimensional vector dictates the quality of the solution.
Depending on the application, oftentimes one minimizes either the sum of the entries of the cost vector,
or the largest entry of the cost vector. %, and tries to find a solution which minimizes this single value.
For example, in the load balancing setting, the largest entry of the load vector is the {\em makespan} of the assignment, and 
minimizing makespan has been extensively studied~\cite{LenstraST90,ShmoysT93,EbenlendrKS14,Svensson12,ChakrabartyKL15,JansenR17}. Similarly, in the clustering setting, the problem of minimizing the sum of assignment costs 
is the $k$-median problem, and the problem of minimizing the largest assignment cost is the $k$-center problem. Both of these are classic combinatorial optimization 
problems~\cite{HochbaumS85a,Gonzalez85,CharikarGST02,CharikarG99,JainV01,LiS16,ByrkaPRS14}. However, the techniques to study the sum-versions and max-versions are often different, and it is a natural and important to investigate what the complexity of these problems become if one
is interested in a different statistic of the cost vector.

In this paper, we study a far-reaching generalization of the above two objectives. We study the {\em minimum norm optimization} problem, where given an arbitrary monotone, symmetric norm 
$f$,  one needs to find a solution which minimizes the norm $f$ evaluated on the induced cost vector.
In particular, we study (a) the minimum norm load balancing problem which asks to find the assignment of jobs to (unrelated) machines which minimizes
$f(\lvec)$ where $\lvec$ is the induced load vector on the machines, and (b) the minimum norm $k$-clustering problem which asks to open $k$-facilities
minimizing $f(\acost)$ where $\acost$ is the induced assignment costs on the clients.

\emph{Our main contribution is a framework to study minimum norm optimization problems. Using this, we give constant factor approximation algorithms
	for the minimum norm unrelated machine load balancing and the minimum norm $k$-clustering problem} (\Cref{thm:normlb} and~\Cref{thm:normcl}). To our knowledge our results constitute the {\em first} constant-factor approximations for a general suite of objectives in these settings.
We remark that the above result is contingent on how $f$ is given. We need a ball-optimization oracle (see \eqref{balloracle} for more details), and for most norms
it suffices to have access to a {\em first-order} oracle which returns the (sub)-gradient of $f$ at any point.

 %\smallskip

Monotone, symmetric norms capture a versatile collection of objective functions. 
We list a few relevant examples below and point to the reader to~\cite{Bhatia13,BlasiokBCKY17,AndoniNNRW17} for a more comprehensive list of examples.
%For each of these, we can get constant factor approximations in the load-balancing and $k$-clustering setting.

\begin{itemize}[leftmargin=10pt,parsep=0pt,listparindent=10pt,itemsep=1pt]
	%[noitemsep]
	\item {\bf $\ell_p$-norms.} Perhaps the most famous examples are $\ell_p$ norms where $f(\vv) := \left(\sum_{i=1}^n \vv_i^p\right)^{1/p}$ for $p\ge 1$. Of special interest are $p = \{1,2,\infty\}$. 
	For unrelated machines load-balancing, the $p = 1$ case is trivial while the $p = \infty$ case is makespan minimization. This has a $2$-approximation~\cite{LenstraST90,ShmoysT93} which has been notoriously difficult to beat. For the general $\ell_p$ norms, Azar and Epstein~\cite{AzarE05} give a $2$-approximation, with improvements given by~\cite{KumarMPS09,MakarychevS14}. For the $k$-clustering setting, the $p = \{1,2,\infty\}$ norms have been extensively studied over the years~\cite{Gonzalez85,HochbaumS85a,CharikarGST02,CharikarG99,JainV01,ByrkaPRS14,AhmadianNSW17}.
	One can also derive an $O(1)$-approximation for general $\ell_p$-norms using most of the algorithms\footnote{We could not find an explicit reference for this. The only work which we found that explicitly studies the $\ell_p$-norm minimization in the $k$-clustering  setting is by Gupta and Tangwongsan~\cite{GuptaT08}. They give a $O(p)$-approximation using local-search and prove that local-search can't do any better. However, $\ell_p^p$-``distances'' satisfy relaxed triangle inequality, in that, $d(u,v) \leq 2^p(d(u,w) + d(w,v))$. The algorithms of Charikar et al~\cite{CharikarGST02} and Jain-Vazirani~\cite{JainV01} need triangle inequality with only ``bounded hops'' and thus give $C^p$-approximations for the $\ell_p^p$ ``distances''. In turn this implies a constant factor approximation for the $\ell_p$-norm. } for the $k$-median problem.
	
	\item {\bf Top-$\ell$ norms and ordered norms.} Another important class of monotone, symmetric norms is the {\em Top-$\ell$-norm}, which given a vector $\vv$ returns the sum of the largest $\ell$ elements. These norms are another way to interpolate between the $\ell_1$ and the $\ell_\infty$ norm. 
	
	A generalization of the Top-$\ell$ norm optimization is what we call the {\em ordered norms}. 
	The norm is defined by a non-increasing, non-negative vector $w\in \R_+^n$ with $w_1\geq w_2 \ge \cdots \ge w_n \ge 0$. 
	Given these weights, the $w$-ordered, or simply, ordered norm of a vector $\vv\in \R_+^n$ is defined as
	$
	\obj(w;\vv) := \sum_{i=1}^n w_i \vv^{\down}_i
	$	
	where $\vv^{\down}$ is the entries of $\vv$ written in non-increasing order itself. It is not hard to see that the ordered norm is a non-negative linear combination of the Top-$\ell$ norms. 
	%As we show explain below the Top-$\ell$ and ordered norms appear in applications related to fairness and robust optimization.
	
	For load balancing in unrelated machines, we are not aware of any previous works studying these norms. \emph{We give a $2$-approximation for the Top-$\ell$-load balancing, and a $(2+\e)$-approximation for ordered load balancing} (\Cref{thm:top-l:loadbal} and~\Cref{thm:single-ordered:loadbal}). Note that the case of $\ell=1$ for Top-$\ell$-load balancing corresponds to makespan minimization for which beating factor $2$ is an open problem.
	
In $k$-clustering, the \topl optimization problem is called the $\ell$-centrum problem, and the ordered-norm minimization problem is called the ordered $k$-median problem. Only recently, 
a $38$-factor~\cite{ByrkaSS18} and $18+\e$-factor~\cite{ChakrabartyS18} approximation algorithm was given for the ordered $k$-median problem.
	% in the literature. 
%	Indeed, as we discuss below, this problem formed the starting point of our current work.
	%In this paper, 
	\emph{We give a much improved $(5+\e)$-factor approximation algorithm for the ordered $k$-median problem} (\Cref{thm:ordered-median}).

	\item {\bf Min-max ordered norm.} Of particular interest to us is what we call the {\em min-max ordered optimization} problem. In this, we are given $N$ non-increasing, non-negative weight vectors $w^{(1)},\ldots,w^{(N)} \in \R^n_+$, and the goal is to find a solution $\vv$ which minimizes $\max_{r=1}^N \obj(w^{(r)};\vv)$. This is a monotone, symmetric norm since it is a maximum over a finite collection of monotone, symmetric norms. 	
	%	One special case of this norm is the so-called {\em minimal norm}~\cite{BennettS88} which takes a non-decreasing 
	%	sequence $0\le a_1\leq a_2\le \cdots \le a_n$ and defines $||v||_{\mathsf{min}} := \max_{1\leq \ell \leq d} \left(a_\ell \cdot \textrm{Top-}\ell(v)\right)$.
	
	\emph{
		One of the main insights of this paper is that the minimum norm problem reduces to min-max ordered optimization} (\Cref{thm:minnorm}). In particular, we show that the value of any monotone, symmetric norm can be written as the maximum
		of a collection of (possibly infinite) ordered norms; this result may be of independent interest in other applications involving such norms~\cite{AndoniNNRW17,BlasiokBCKY17}.
%	More precisely, if we have a constant-factor approximation for the min-max ordered norm problem, then we have a constant factor approximation for the minimum norm problem.
	%	\item {\bf Orlicz Norms.} Given any increasing convex function $G:\R_+ \to \R_+$ with $G(0) = 0$, the Orlicz
	%	norm w.r.t $G$ is defined as $||v||_G := \inf\{t : \sum_{i=1}^n G(\frac{|v_i|}{t}) \leq 1\}$. This is closely related to the relation
	%	between minimum norm and approximate-solutions a la Goel and Meyerson for general convex functions as described above.
	
	\item {\bf Operations.} One can construct monotone, symmetric norms using various operations such as (a) taking a nonnegative linear combination of monotone, symmetric norms; %is a symmetric norm;
	(b) taking the maximum over any finite collection of monotone, symmetric norms; %is a symmetric norm;
	(c) given a (not-necessarily symmetric) norm $g:\R^n\to\R_+$, setting
	$f(v):=g(v^{\down})$, or 
	$f(v):= \Exp_{\pi}[g\bigl(\{v_{\pi(i)}\}_{i\in[n]}\bigr)]$ where $\pi$ is a random permutation of $[n]$; 
	%are both symmetric norms;
	(d) given a monotone, symmetric norm $g:\R^k\to\R_+$, where $k\leq n$, 
	setting $f(v)=\sum_{S\sse[n]:|S|=k}g(\{v_i\}_{i\in S})$. %is a symmetric norm.
	The richness of these norms makes the minimum-norm optimization problem a %rather
	versatile and appealing model which captures a variety of optimization problems  %{\bf deepc: which ones?}
	%(often considered separately) 
	under one umbrella. 
	
	As an illustration, consider the following stochastic optimization problem  in the clustering setting (this is partly motivated by the stochastic fanout model described in ~\cite{KabiljoKPPS17} for a different setting). 
	We are given a universe of plausible clients, and a symmetric probability distribution over actual client instances. Concretely, say, each client materializes i.i.d with probability $p \in (0,1)$. The problem is to open a set of $k$ facilities such that the {\em expected maximum} distance of an instantiated client to an open facility is minimized.
	The expectation is indeed a norm (apply part (d) operation above) and thus we can get a constant factor approximation for it. In fact, the {\em expected maximum} for the i.i.d case is an ordered-norm, and so we can get a $(5+\e)$-approximation for this particular stochastic optimization problem.
	
	\item {\bf General Convex Functions.} One could ask to find a solution minimizing a general convex function of the cost vector. In general, such functions can be arbitrarily sharp and this precludes any non-trivial approximation. For instance in the clustering setting, consider the convex function $C(\acost)$ which takes the value $0$ if the sum of $\acost_j$'s (that is the $k$-median objective) is less than some threshold, and $\infty$ otherwise; for this function, it is NP-hard to get a finite solution. Motivated thus, Goel and Meyerson~\cite{GoelM06} call a solution $\vv$ an $\alpha$-approximate solution if $C(\vv/\alpha)\leq \iopt$ where $\vv$ is the induced cost vector, $\vv/\alpha$ is the coordinate-wise scaled vector, and $\iopt = \min_{\vec{w}}C(\vec{w})$. It is not hard to see\footnote{Consider the monotone, symmetric norm $f(x) := \min\{t: C(|x|/t)\le \iopt\}$.
		By definition $f(\vec{o})=1$, and so a $\al$-approximate min-norm solution $\vv$ satisfies $f(\vv)\le \al$, implying $C(\vv/\al)\le\iopt$.
		The definition requires knowing the value of $\iopt$ which can be guessed using binary search.}	
	%	Given any such non-constant function $C$ and any $\beta>0$, we can define the
	%	following norm: $f_{C,\beta}(x)$ is the minimum $\al\geq 0$ such that
	%	$C(|x|/\al)\leq\beta$ (where $0/0=0$). It is easy to verify that $f_{C,\beta}$ is a
	%	monotone, symmetric norm. Now for any $\beta\geq\iopt$, we have 
	%	$f_{C,\beta}(\vec{o})\leq 1$, and so if we have an 
	%	$\al$-approximation for the min-norm problem, we obtain cost-vector $v$ with
	%	$f_{C,\beta}(v)\leq\al$, which means that $C(v/\al)\leq\beta$. So by taking
	%	$\beta$ such that $\iopt\leq\beta\leq(1+\e)\iopt$, we obtain $v$ such that 
	%	$C\bigl(v/\al(1+\e)\bigr)\leq
	%	\frac{\e}{1+\e}\cdot C(0)+\frac{1}{1+\e}\cdot C(v/\al)\leq\frac{\beta}{1+\e}\leq\iopt$.} (see Section~\ref{sec:prelims}) 
	that a constant factor approximation for monotone, symmetric norm-minimization  implies a constant-approximate solution for any monotone, symmetric convex function. In particular, for the load-balancing and clustering setting we achieve this.

\end{itemize}
\noindent
{\bf Connections and implications for simultaneous/fair optimization.} 
In the minimum-norm optimization problem, we are given a fixed norm function $f$ and we wish to find a solution minimizing $f(\vv)$ where $\vv$ is the cost-vector induced by the solution. In {\em simultaneous optimization}~\cite{KumarK06,GoelM06}, the goal is to find a solution $\vv$, which {\em simultaneously} approximates all norms/convex functions. 
Such solutions are desirable as they possess certain fairness properties.
More precisely, the goal is to find a solution inducing a cost vector $\vv$ which is simultaneous $\alpha$-approximate, that is, 
$g(\vv) \leq \alpha \cdot \iopt(g)$
for {\em all} monotone, symmetric norms $g:\R^n_+ \to \R_+$, where $\iopt(g) := \min_{\vec{w}} g(\vec{w})$.
% where the minimization is taken over all feasible induced cost vectors. 
%The smallest such $\alpha$ is the {\em simultaneous optimization factor}.

Simultaneous optimization is clearly a much stronger goal than what we are shooting for, in that, if one can find a solution which 
is simultaneous $\alpha$-approximate, then this solution is clearly an $\alpha$-approximation for a fixed norm. 
It is rather remarkable that in the setting of load balancing with {\em identical jobs}, and even in the \emph{restricted assignment} setting where the jobs have fixed load but can be allocated only on a subset of machines, one can always achieve~\cite{AlonAWY98,AzarERW04,GoelM06} a simultaneous $2$-approximate solution. Unfortunately, for unrelated (even related) machines~\cite{AzarERW04} and $k$-clustering~\cite{KumarK06}, there are impossibility results ruling out the {\em existence} of any simultaneous $\alpha$-approximate solutions for constant $\alpha$. These impossibilities also show that the techniques used in~\cite{AlonAWY98,AzarERW04,GoelM06} are not particularly helpful when trying to optimize a {\em fixed} norm, which is the main focus in our paper.

Nevertheless, the techniques we develop give  {\em $O(1)$ approximations to the best simultaneous approximation factor possible in any instance of unrelated machines load-balancing and $k$-clustering} (\Cref{thm:simul}).
Fix an unrelated machines load balancing instance $\cI$. Let $\alpha^*_\cI$ be the smallest $\alpha$ for which there is a solution to $\cI$ which 
is simultaneous $\alpha$-approximate. Note that $\alpha^*_\cI$ could be a constant for a nice instance $\cI$; the impossibility result mentioned above states $\alpha^*_\cI$ can't be a constant for \emph{all} instances. It is natural, and important, to 
ask whether for such nice instances can one get constant factor simultaneous approximate solutions? We answer this in the affirmative.
We give an algorithm which, for any instance $\cI$, returns a solution 
inducing a load vector $\vv'$ such that $g(\vv') \leq O(\alpha^*_I)\cdot \iopt(g)$ for all monotone, symmetric norms simultaneously. 
We can also obtain a similar result for the $k$-clustering setting. 
These seem to be the first {\em positive} results on simultaenous optimization in these settings.
We remark that our algorithm is not a generic reduction to the minimum norm optimization, but is an artifact of our techniques developed to tackle the problem.
 %in the load-balancing and clustering case. 
 \medskip

%{\bf deepc: perhaps write here about the coordinate-wise idea as well.}
\noindent
{\bf Other related work.}
The ordered $k$-median and the $\ell$-centrum problem have been extensively studied in the Operations Research literature for more than two decades (see, e.g. the books~\cite{book1,book2}); we point the interested reader to these books, or the paper by Aouad and Segev~\cite{AouadS18}, and references within for more information on this perspective. From an approximation algorithms point of view, Tamir~\cite{Tamir01} gives
the first $O(\log n)$-approximation for the $\ell$-centrum problem, and  Aouad and Segev~\cite{AouadS18} give the first $O(\log n)$-approximations for the ordered $k$-median problem. 
Very recently, Byrka, Sornat, and Spoerhase~\cite{ByrkaSS18} and our earlier paper~\cite{ChakrabartyS18} give the first constant-factor approximations for the $\ell$-centrum and ordered $k$-median problems.
Another recent relevant work is of Alamdari and Shmoys~\cite{AlamdariS17} who consider the $k$-{\em centridian} problem where the objective is a weighted average of the $k$-center and the $k$-median objective (a special case of the ordered $k$-median problem);~\cite{AlamdariS17} give a constant-factor approximation algorithm for this problem.

In the load balancing setting, research has mostly focused on $\ell_p$ norms; we are not aware of any work studying the \topl optimization question in load balancing. For the $\ell_p$-norm Awerbuch et al.~\cite{AwerbuchAGKKV95} give a $\Theta(p)$-approximation for unrelated machines; their algorithm is in fact an {\em online} algorithm. Alon et al.~\cite{AlonAWY98} give a PTAS for the case of identical machines. This paper~\cite{AlonAWY98} also shows a polynomial time algorithm in the case of restricted assignment (jobs have fixed processing times but can't be assigned everywhere) with unit jobs which is optimal {\em simultaneously} in all $\ell_p$-norms. Azar et al.~\cite{AzarERW04} extend this result to get a $2$-approximation algorithm {\em simultaneously} in all $\ell_p$ norms in the restricted assignment case. This is 
generalized to a simultaneous $2$-approximation in all symmetric norms (again in the restricted assignment situation) by Goel and Meyerson~\cite{GoelM06}. As mentioned in the previous subsection, Azar et al.~\cite{AzarERW04} also note that even in the related machine setting, no constant factor approximation is possible simultaneously even with the $\ell_1$ and $\ell_\infty$ norm. For unrelated machines, for any fixed $\ell_p$ norm Azar and Epstein~\cite{AzarE05} give a $2$-approximation via convex programming. The same paper also gave a $\sqrt{2}$-approximation for the $p=2$ case. These factors have been improved (in fact for any constant $p$ the approximation factor is $< 2$) by Kumar et al.~\cite{KumarMPS09} and Makarychev and Sviridenko~\cite{MakarychevS14}. We should mention that the techniques in these papers are quite different from ours and in particular these strongly use the fact that the $\ell_p^p$ cost is separable.
Finally, in the clustering setting, Kumar and Kleinberg~\cite{KumarK06} and Golovin et al.~\cite{GolovinGKT08} give simultaneous constant factor approximations in all $\ell_p$ norms, but their results are \emph{bicriteria results} in that they open $O(k\log n)$ and $O(k\sqrt{\log n})$ facilities instead of $k$.

%\newpage
\section{Technical overview and organization}\label{sec:techideas}

We use this section to give an overview of the various technical ideas in this paper and point out the reader to where more details can be found. \medskip

\noindent
{\bf First approach and its failure.} Perhaps the first thing one may try for the minimum-norm optimization problem is to write a {\em convex program} 
$\min f(\vv)$ where $\vv$ ranges over {\em fractional} cost vectors, ideally, convex combinations of integral cost vectors. 
If there were a {\em deterministic} rounding algorithm
which given an optimal solution $\vv^*$ could return a solution $\vv$ such that for every coordinate $\vv_j \leq \rho\vv^*_j$, then by homogeneity of $f$, we would get a 
$\rho$-approximation. Indeed, for some optimization problems such a rounding is possible.
Unfortunately, for both unrelated load balancing and $k$-clustering, this strategy is a failure as 
there are simple instances for both problems, where even when $\vv^*$ is a convex combination of integer optimum solutions, no such rounding, with constant $\rho$, exists. In particular, the {\em integrality gaps} of these convex programs are unbounded.
%We note here that randomized algorithms which maintain such a property in expectation seem to be of little use since the objective function is not linear.
\medskip

\noindent
{\bf Reduction to min-max ordered optimization (\Cref{minnorm}).} 
Given the above failure,
at first glance, it may seem hard to be able to reason about a general norm.
One of the main insights of this paper is that the monotone, symmetric norm minimization problem reduces to min-max ordered optimization.
This is a key conceptual step since it allows us a foothold in arguing about the rather general problem. Our result may also be of interest in other settings dealing with symmetric norms.
In particular, we show that given any monotone, symmetric norm $f$, the function value at any point $f(x)$ is equal to $\max_{w \in \Scol} \obj(w;x)$ (Lemma~\ref{normeqv}) where $\Scol$ is a potentially infinite family of non-increasing subgradients on the unit-norm ball. 
That is, $f(x)$ equals the maximum over a collection of ordered norms.
Thus, finding the $x$ minimizing $f(x)$ boils to the min-max ordered-optimization problem. The snag is that collection of weight vectors could be infinite. This is where the next simple, but extremely crucial, technical observation helps us. \medskip

\noindent
{\bf Sparsification idea (\Cref{sparsify}).} %This is a simple, but crucial, idea.
Given a non-increasing, non-negative weight vector $w\in \R_+^n$,
the ordered norm of a vector $\vv\in \R^n_+$ is $\obj(w;\vv) := \sum_{i=1}^n w_i \vv^{\down}_i$. 
The main insight is that although $w$ may have all its $n$-coordinates distinct, only a few {\em fixed} coordinates matter. 
More precisely, if we focus only on the coordinates $\pset := \{1,2,4,8,\cdots,\}$ and define a $\tw$-vector with $\tw_i = w_i$ if $i\in \pset$, and $\tw_i = w_\ell$ where
$\ell$ is the nearest power of $2$ larger than $i$, then it is not too hard to see $\obj(\tw;x)\leq \obj(w;x) \leq 2\obj(\tw;x)$. 
Indeed, one can increase the granularity of the coordinates to (ceilings of) powers of $(1+\dt)$ to get arbitrarily close approximations where the number of relevant coordinates is $O(\log n/\dt)$.

%Why is this important? Firstly, this solves the infinite collection of vectors issue. 
The above sparsification shows that for the ordered norms, one can just focus 
on weight vectors which have breakpoints in {\em fixed} locations {\em independent} of what the weight vectors are. Note that other kinds of sparsification which round every coordinate of a weight vector to the nearest power of $(1+\dt)$ don't have this {\em weight-independence} in the positions of breakpoints. This fixedness of the locations (and the fact that there are only logarithmically many of them) allows us to form a polynomial sized $\epsilon$-net of weight vectors. More precisely, for any weight vector $w$, there is another weight vector $w'$ in this net such that for any vector $\vv$, $\obj(w;\vv)$ and $\obj(w';\vv)$ are within multiplicative $(1\pm \dt)$. In particular, this helps us bypass the problem of ``infinitely many vectors'' in $\Scol$ described above. \medskip

\noindent
{\bf Ordered optimization and proxy costs (\Cref{sec:proxy}).} Now we focus on min-max ordered optimization. First let us consider just simple ordered optimization, and in particular, just Top-$\ell$ optimization. To illustrate the issues, let us fix the optimization problem to be load balancing on unrelated machines. 
One of the main technical issues in tackling the Top-$\ell$ optimization problem is that one needs to find an assignment such that sum of loads on a set of $\ell$ machines is minimized, but this set of machines itself depends on the assignment. Intuitively, the problem would be easier (indeed, trivial) if we could sum the loads over all machines. Or perhaps sum some {\em function} of the loads, but over {\em all} machines. Then perhaps one could write a linear/convex program to solve this problem fractionally, and the objective function would be clear.
This is where the idea of {\em proxy costs} comes handy. We mention that this idea was already present in the paper of Aouad and Segev~\cite{AouadS18}, and then in different forms in Byrka et al~\cite{ByrkaSS18} and our earlier paper~\cite{ChakrabartyS18}.

The idea of this proxy cost is also simple. Suppose we knew what the $\ell$th largest load would be in the optimal solution -- suppose it was $\rho$. Then the Top-$\ell$ load can be written as 
$\ell \cdot \rho + \sum_{i: ~\textrm{all machines}} (\load(i) - \rho)^+$, where we use $(z)^+ := \max(z,0)$. This is the {\em proxy-cost} of the Top-$\ell$ norm given parameter $\rho$.
Note that the summation is over {\em all} machines; however, the summand is not the load of the machine but a function $h_\rho(\load(i))$ of the load. Furthermore, we could assume by binary search that we have a good guess of $\rho$.
%But now we can envision writing an LP for the problem (say the usual assignment LP to begin with) where the objective is this convex function. This doesn't quite work and in the next paragraph we discuss the challenges, but nevertheless it's a start.

For ordered optimization, first we observe that $\obj(w;\vv)$ can be written as a non-negative linear combination of the Top-$\ell$ norms (see~\Cref{topltoord}). In particular, if we have the guesses of the $\ell$th largest loads for all $\ell$, then we could write the proxy cost of $\obj(w;\vv)$. However, guessing $n$ of the $\rho_\ell$'s would be infeasible. This is where the sparsification idea described above comes handy again. Since the only relevant positions of $w$ to define $\tw$ are the ones in $\pset$, one just needs to guess approximations for $\rho_\ell$s only in these positions to define the proxy function. And this again can be done in polynomial time. Once again, what is key is that positions are {\em independent} of the particular weight vector. This is {\em key for min-max ordered optimization}.
Even though there are $N$ different weight functions, their sparsified versions have the {\em same} break points, and their proxy functions are defined using these same, logarithmically many break points. \medskip

\noindent
{\bf LP relaxations and deterministic oblivious rounding (\Cref{sec:minmaxapproach,sec:loadbal,sec:clustering}).} One can use the proxy costs to write linear programming relaxations for the problems at hand (in our case, load balancing and $k$-clustering). Indeed, for $k$-clustering, this was the approach taken by Byrka et al.~\cite{ByrkaSS18} and our earlier work~\cite{ChakrabartyS18} for ordered $k$-median.
With proxy costs, the LP relaxation for ordered $k$-median is the usual LP but the objective has {\em non-metric} costs. Nevertheless, both the papers showed constant integrality gaps for these LPs (our proxy costs were subtly different but within $O(1)$-factors). For load-balancing, the usual LP has a bad gap, and one needs to add additional constraints. After this, however, we can indeed show the LP has an integrality gap of $\leq 2$ (this is established in~\Cref{sec:loadbal:splcase}).

However, it is not at all clear how to use this LP for {\em min-max ordered problems} with multiple weight functions. The algorithms of Byrka et al~\cite{ByrkaSS18} are randomized which bound the {\em expected} cost of the ordered $k$-median; with multiple weights, this won't help solve the min-max problem unless one can argue very sharp concentration properties of the algorithm.
The same is true for our load-balancing algorithm. These algorithms can be derandomized, but these derandomizations lead to algorithms which use the (single) weight function crucially, and it is not clear at all how to minimize the max of even two weight functions. The primal-dual algorithm in~\cite{ChakrabartyS18} suffers from the same problem. Our approach in this paper is to consider {\em deterministic} rounding of the LP solution which are {\em oblivious to the weight} vectors. We can achieve this for the LP relaxations we write for load balancing and $k$-clustering (although we need to strengthen the latter furthermore). We defer further technical overview to~\Cref{sec:minmaxapproach}, and then give details for load-balancing in~\Cref{sec:loadbal} and for $k$-clustering in~\Cref{sec:clustering}. After reading~\Cref{sec:minmaxapproach}, the sections on load balancing and clustering can be read in any order.\medskip

\noindent
{\bf Extensions: connections to simultaneous optimization (\Cref{fairness}).} We end the paper by showing the power of deterministic, weight-oblivious rounding to give constant factor approximations to instance-optimal algorithms for simultaneous optimization. The key idea stems from the result of Goel and Meyerson~\cite{GoelM06}, which itself stems from the majorization theory of Hardy, Littlewood, and Polya~\cite{HardyLP34}, that if we want to simultaneously optimize all monotone, symmetric, norms, then it suffices to simultaneously optimize all the Top-$\ell$ norms. If the best simultaneous optimization for a given instance is $\alpha^*_{\cI}$, then one can cast this as a multi-budgeted ordered optimization problem where we need to find a solution where the ordered-norm with respect to the $r$th weight vector is at most some budget $B_r$. Once again, if we have a good deterministic, weight-oblivious rounding algorithm for the LP relaxation, the multi-budgeted ordered optimization problem can also be easily solved. As a result, for any load-balancing and $k$-clustering instance, we get $O(1)$-approximations to the best simultaneous optimization factor possible for that instance.

%Introduction (3-4 pages)

\section{Preliminaries}\label{prelim}

Solutions to the optimization problems we deal with in this paper induce cost vectors.
We use $\vec{v}$ to denote them when talking about problems in the abstract. 
In load-balancing, 
the vector of the loads on machines is denoted by $\lvec$, or $\lvec_\sg$ if $\sg$ is the assignment of jobs.
In $k$-clustering, we the vector of assignment costs of clients is denoted as $\acost$. 
We always use $\vec{o}$ to denote the cost vector in the optimum solution.

For an integer $n$, we use $[n]$ to denote the set $\{1,\ldots,n\}$.
For a vector $\vec{v}\in\R^n$, we use $\vec{v}^{\down}$ to denote the vector $v$ with
coordinates sorted in non-increasing order. That is, we have $\vec{v}^{\down}_i=\vec{v}_{\pi(i)}$, where
$\pi$ is a permutation of $[n]$ such that $\vec{v}_{\pi(1)}\geq \vec{v}_{\pi(2)}\geq\ldots
\vec{v}_{\pi(n)}$. %At times, for notational clarity, we may omit the arrow on top of these vectors.

Throughout the paper, we use $w$ (with or without superscripts) to denote a non-increasing, non-negative weight vector. The dimension of this vector is the dimension of the cost vector.
In the abstract, we use $n$ to denote this dimension; so $w\in \R_+^n$ and $w_1\ge w_2\ge \cdots\ge w_n\ge 0$. 
We use $\tw$ to denote the ``sparsified'' version of the weight vector $w$ which is defined in \Cref{sparsify}.\smallskip
%This also is another non-increasing, non-negative weight vector in the same dimension. \medskip

\noindent
{\bf Ordered and top-$\ell$ optimization.} Given a weight vector $w$ as above, 
the {\em ordered optimization} problem asks to find a solution with induced cost vector $\vec{v}$ which minimizes
$\obj(w;\vec{v}) := \sum_{i=1}^n w_i \vec{v}^{\down}_i$. This is the $w$-ordered norm, or simply ordered norm of $\vec{v}$.
We denote the special case of when $w$ is a $\{0,1\}$ vector {\em \topl optimization}.
That is, if $w_1 = \cdots w_\ell = 1$ and $w_i = 0$ otherwise, the problem asks to find a solution $\vec{v}$ minimizing the sum of the 
$\ell$ largest entries. We use the notation $\obj(\ell;\vec{v})$ to denote the cost of the \topl optimization problem. 
In the literature in the $k$-clustering setting, 
the \topl optimization problem is called the {\em $\ell$-centrum problem}, and the ordered optimization problem
is called the {\em ordered $k$-median} problem.
\smallskip
%
%
%We call the special case where $w_i\in\{0,1\}$ for all $i\in[n]$---so, we have
%$w_1=1=w_2=\ldots=w_\ell$ for some $\ell\in[n]$ and $w_i=0$ for all other $i$---the
%%{\em $\ell$-centrum} version of the optimization problem or the 
%{\em \topl optimization} problem. We abbreviate the objective function $\obj(w;v)$ for
%this \topl optimization problem %(i.e., $w_i=1$ if $i\in[\ell]$ and is $0$ otherwise) 
%to $\obj(\ell;v)$.

\noindent
{\bf \minmax[M] and multi-budgeted ordered optimization.} In a significant generalization of
ordered optimization, %In a {\em \minmax ordered-optimization} problem, 
we are given %$k$ 
multiple non-increasing weight vectors $w^{(1)},\ldots,w^{(N)}\in\R_+^n$, and \minmax ordered optimization asks to find a
solution with induced cost vector $\vec{v}$ which minimizes
$\max_{r\in[N]}\obj(w^{(r)};\vec{v})$. A related problem called {\em multi-budget ordered optimization} has the same setting
as \minmax ordered optimization, but one is also given $N$ budgets $B_1,\ldots, B_N \ge 0$.
The objective is to find a solution inducing cost vector $\vec{v}$ such that $\obj(w;\vec{v})\leq B_r$, for all $r$.
This problem leads to the connections 
with simultaneous optimization~\cite{KumarK06,GoelM06}; 
we discuss these connections more in~\Cref{fairness}. \smallskip

\noindent
{\bf Minimum norm optimization.} A function $f:\R^n \to \R$ is a norm
if 
(i) $f(x)=0$ iff $x=0$; 
(ii) $f(x+y)\leq f(x)+f(y)$ for all $x,y\in\R^n$ (triangle inequality); and 
(iii) $f(\ld x)=|\ld|f(x)$ for all $x\in\R^n,\ld\in\R$ (homogeneity). Properties (ii) and
(iii) imply that $f$ is convex. $f$ is {\em symmetric} if permuting the coordinates of $x$ does not affect its
value, i.e., $f(x)=f(x^{\down})$ for all $x\in\R^n$. 
$f$ is monotone if increasing its coordinate cannot decrease its value
%Recall that for $x\in\R$, we define $\sgn(x):=1$ if $x>0$, $-1$ if $x<0$, and $0$
%otherwise. We say that $f$ is {\em monotone} if $f(x)\leq f(y)$ for all $x,y\in\R^n$ such
%that $\sgn(x_i)(y_i-x_i)\geq 0$ for all $i\in[n]$.% 
%%We say that $f$ is {\em coordinate-monotone}, or simply {\em monotone}, if setting some of
%%the coordinates of $x\in\R^n$ to $0$ does not increase the norm: for any
%%$x\in\R^n$, and $I\sse[n]$, defining $x'$ by $x'_i=x_i$ for $i\in I$ and $0$ otherwise,
%%we have $f(x)\geq f(x')$.%
\footnote{Symmetric norms mayn't be monotone. For instance, consider the set
	$C\sse\R^2$, which is the convex hull of the points
	$\{(1,1),(-1,-1),(0,0.5),(0.5,0),(0,-0.5),(-0.5,0)\}$, and define $f(x)$ to be the
	smallest $\ld$ such that $x/\ld\in C$. It is not hard to see that $f$ is a symmetric norm
	over $\R^2$, $f(0,0.5)=1$, but $f(0.5,0.5)\leq 0.5$.}. 
In {\em minimum norm optimization} problem we are given a monotone, symmetric norm $f$, 
and we have to find a solution inducing a cost vector $\vv$ which minimized $f(\vv)$.
Notice that \topl optimization, ordered optimization, and min-max ordered optimization 
are special cases of this problem.  \smallskip

\noindent
{\bf Load balancing and $k$-clustering problems.} 
In the load balancing setting, we have $m$ machines, $n$ jobs, and a processing time $p_{ij} \ge 0$ of job $j$ on machine $i$.
The solution to the problem is an assignment $\sigma$ of jobs to machines. This induces a load $\load_{\sg}(i) := \sum_{j : \sigma(j) = i} p_{ij}$ on each machine.
The vector $\lvec_\sg$ of these loads is the cost-vector associated with the solution $\sigma$. Thus, the min-norm load balancing problem asks to find $\sg$ minimizing
$f(\lvec_\sg)$.

In the $k$-clustering setting, we have a metric space $\bigl(\D,\{c_{ij}\}_{i,j\in\D}\bigr)$, and an integer
$k\geq 0$. The solution to the problem is a set $F\subset \D$, $|F|=k$ of $k$ {\em open} facilities This induces a cost-vector
$\acost$, where $\acost_j := \min_{i\in F} c_{ij}$ is the assignment cost of $j$ to the nearest open facility.
The min-norm $k$-clustering problem asks to find the set $F$ of facilities which minimizes $f(\acost)$.

\section{Sparsifying weights} \label{sparsify}
%For any $x\geq 1$, define $\pow{x}_2:=2^{\ceil{\log_2 x}}$ to be the smallest power of $2$
%that is at least $x$. Given weights $w_1\geq w_2\geq\ldots\geq w_n\geq 0$, 
Let $\dt>0$ be a parameter.
We show how to sparsify $w\in\R^n$ to a weight vector $\tw\in\R^n$ (with non-increasing
coordinates) having $O(\log n/\dt)$ distinct weight values, 
such that for any vector $\vv$, we have $\obj(\tw;\vv) \le \obj(w;\vv) \leq (1+\dt)\obj(\tw;\vv)$.
%while losing a $(1+\dt)$-factor in the objective. 
Moreover, an important property we ensure is that
the breakpoints of $\tw$---i.e., the indices where $\tw_i>\tw_{i+1}$---lie in a set
that depends only by $n$ and $\dt$ and is {\em independent} of $w$. 
As explained in~\Cref{sec:techideas}, sparsification in two distinct places; one, to give a polynomial 
time reduction from min-norm optimization to \minmax ordered optimization (\Cref{minnorm}), and
two, to specify proxy costs which allow us to tackle \minmax ordered optimization.
%
%First, this will enable us in Section~\ref{minnorm} to reduce minimum-norm optimization to
%\minmax ordered-optimization with a {\em compact} (i.e., polynomial-size) set of weight
%vectors. We capitalize on the fact that a sparsified weight vector can be described
%using only $O(\log n/\dt)$ non-increasing weights, so that, with a small loss, we 
%%if the weights lie in a $\poly(n)$-bounded
%%multiplicative range, then there are only 
%can focus on a polynomial-size set of sparsified vectors that can be enumerated in
%polytime.  
%Here, as also when we consider \minmax ordered optimization, the breakpoint property
%mentioned above will be {\em essential}, 
%%in leveraging the sparsified weight vectors,
%since it allows us to sparsify multiple weight vectors in a uniform way.  
%%which in essential in order to in the manner described above.
%%which can enumerate all sparsified weight vectors in polytime, which 
%Second, we use this sparsification in Section~\ref{ordproxy} to specify a
%more convenient proxy cost function for ordered optimization involving $O(\log n/\dt)$
%non-decreasing thresholds, which paves the way for designing algorithms for 
%ordered optimization (and its generalizations). 
%%\minmax ordered-, optimization. 
%Again, the logarithmic number of thresholds will imply that we can enumerate all valid
%choices of thresholds in polytime and arrive at a good proxy function. 

%\medskip
For simplicity, we first describe a sparsification that leads to a factor-$2$ loss (instead of $1+\dt$), and
then refine this.
%For $s=0,1,\ldots,\ceil{\log_2 n}$, define $\pos_s:=\min\{2^s,n\}$.
For every index $i\in[n]$, we set $\tw_i=w_{i}$ if $i=\min\{2^s,n\}$ for some integer
$s\geq 0$; %is a power of $2$; 
otherwise, if $s\geq 1$ is such that $2^{s-1}<i<\min\{2^s,n\}$, set
$\tw_i=w_{\min\{2^s,n\}}=\tw_{\min\{2^s,n\}}$.  
Note that $\tw\leq w$ coordinate wise, and $\tw_1\geq\tw_2\geq\ldots\tw_n$.

%A subtle, interesting point is that $\pset_n$ depends only on $n$ and is 
%{\em oblivious} of the weight vector. 
%Consequently, unlike the sparsification idea in~\cite{AouadS18}, 
Observe that, unlike a different sparsification based on, say,  geometric bucketing of the $w_i$s, the
sparsified vector $\tw$ is {\em not} component-wise close to $w$; in fact $\tw_i$ could be
substantially smaller than $w_i$ for an index $i$. Despite this,~\Cref{simpsparse} shows
that $\obj(\tw;\vv)$ and $\obj(w;\vv)$ are close to each other. 
%The proof is based on a subtle charging argument; 
%serves as a good proxy for $w$.
%we sketch the proof since we prove a stronger statement in Lemma~\ref{wsparse}.

\begin{claim} \label{simpsparse}
	For any $\vv\in\R_+^n$, we have $\obj(\tw;\vv)\leq\obj(w;\vv)\leq 2\obj(\tw;\vv)$.
	%#improv
\end{claim}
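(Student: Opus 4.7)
The lower bound $\obj(\tw;\vv)\le\obj(w;\vv)$ is immediate from the definition: for any $i$ lying in the bucket $B_s:=\{i:2^{s-1}<i\le \min(2^s,n)\}$, we have $\tw_i=w_{\min(2^s,n)}\le w_i$ (since $w$ is non-increasing and $i\le\min(2^s,n)$); combined with $\vv^{\down}\ge 0$ this gives the claim. The substantive direction is $\obj(w;\vv)\le 2\obj(\tw;\vv)$, and my plan is to reduce this to a clean inequality about prefix sums of $\vv^{\down}$ via Abel (summation-by-parts).

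Write $T_\ell:=\sum_{i=1}^\ell \vv^{\down}_i$ for the top-$\ell$ sum, and set $w_{n+1}=\tw_{n+1}=0$. Abel summation gives
\begin{equation*}
\obj(w;\vv)=\sum_{\ell=1}^n (w_\ell-w_{\ell+1})\,T_\ell,\qquad \obj(\tw;\vv)=\sum_{\ell=1}^n (\tw_\ell-\tw_{\ell+1})\,T_\ell.
\end{equation*}
Because $\tw$ is piecewise constant on each bucket $B_s$ (taking value $w_{\min(2^s,n)}$), the difference $\tw_\ell-\tw_{\ell+1}$ is zero except when $\ell$ is the last index of some bucket, i.e., $\ell\in P:=\{\min(2^s,n):s\ge 0\}$. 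At such an $\ell=2^s$ (with $2^s<n$), $\tw_\ell-\tw_{\ell+1}=w_{2^s}-w_{2^{s+1}}$, which telescopes as $\sum_{\ell'=2^s}^{2^{s+1}-1}(w_{\ell'}-w_{\ell'+1})$. Substituting this back and re-indexing by $\ell'$, I get
\begin{equation*}
\obj(\tw;\vv)=\sum_{\ell=1}^n (w_\ell-w_{\ell+1})\,T_{p(\ell)},
\end{equation*}
where $p(\ell)\in P$ denotes the largest breakpoint with $p(\ell)\le\ell$ (so $p(\ell)\ge \lceil\ell/2\rceil$ by construction of $P$). The boundary case when $\ell$ falls past the last power of two but $\le n$ is handled identically by treating $n$ as the final breakpoint.

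Comparing term by term with $\obj(w;\vv)=\sum_\ell (w_\ell-w_{\ell+1})T_\ell$ and using $w_\ell-w_{\ell+1}\ge 0$, it suffices to prove $T_\ell\le 2\,T_{p(\ell)}$. This is where I invoke the standard fact that the running average $T_\ell/\ell$ is non-increasing in $\ell$ (because $\vv^{\down}_{\ell+1}\le T_\ell/\ell$). Applying this with $m=p(\ell)\le \ell$ yields $T_\ell/\ell\le T_{p(\ell)}/p(\ell)$, so $T_\ell\le(\ell/p(\ell))\,T_{p(\ell)}\le 2\,T_{p(\ell)}$, which finishes the proof. The only conceptual hurdle I anticipate is spotting the right reformulation via Abel summation and the prefix-sum identity; once that is done, the argument collapses to the one-line averaging inequality, and the $1+\dt$ refinement will follow by replacing the geometric grid $\{2^s\}$ with $\{\lceil(1+\dt)^s\rceil\}$ so that $\ell/p(\ell)\le 1+\dt$.
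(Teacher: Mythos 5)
Your proof is correct, and it takes a genuinely different route from the paper's. The paper's argument for this claim is a direct bucket-to-bucket charge: the contribution $\sum_{i=2^{s-1}+1}^{\min\{2^s,n\}}w_i\vv^{\down}_i$ to $\obj(w;\vv)$ is charged against the contribution of the previous bucket $\{2^{s-2}+1,\ldots,\min\{2^{s-1},n\}\}$ to $\obj(\tw;\vv)$, using the cardinality bound $\min\{2^s,n\}-2^{s-1}\le 2\bigl(\min\{2^{s-1},n\}-2^{s-2}\bigr)$ together with the fact that for $i$ in the $s$-th bucket and $i'$ in the $(s-1)$-st one, $w_i\le\tw_{i'}$ and $\vv^\down_i\le\vv^\down_{i'}$. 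You instead Abel-summarize both sides over the top-$\ell$ prefix sums $T_\ell$ (which is precisely the expansion used later in Claim~\ref{topltoord}, applied here to $w$ rather than $\tw$), obtaining $\obj(w;\vv)=\sum_\ell(w_\ell-w_{\ell+1})T_\ell$ and $\obj(\tw;\vv)=\sum_\ell(w_\ell-w_{\ell+1})T_{p(\ell)}$, and then close the argument with the averaging inequality $T_\ell\le(\ell/p(\ell))T_{p(\ell)}\le 2T_{p(\ell)}$, which rests on $T_\ell/\ell$ being non-increasing. Your route buys a single one-line inequality as the crux, ties the sparsification directly to the top-$\ell$ decomposition that the rest of the paper already relies on, and extends verbatim to the $(1+\dt)$-sparsification of Lemma~\ref{wsparse} once one verifies $\ell/p(\ell)\le 1+\dt$. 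It is worth noting that the paper's proof of Lemma~\ref{wsparse} in Appendix~\ref{append-wsparse} is also a summation-by-parts argument, but it works directly with $\alpha_i=w_i\vv^\down_i$ and $\beta_i=\tw_i\vv^\down_i$ and invokes Claim~\ref{clm:ceilingfact} plus the monotonicity of the $\alpha_i$'s and $\beta_i$'s, rather than your prefix-average fact; so even at the refined $(1+\dt)$ level, the two derivations differ in which sequence is summed by parts and which monotonicity fact does the work.
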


\begin{proof}
	Since $\tw\leq w$, it is immediate that $\obj(\tw;\vv)\leq\obj(w;\vv)$. The other
	inequality follows from a charging argument. Note that for any $s\geq 2$, we have
	%let $N=\ceil{\log_2 n}$. 
	$\bigl(\min\{2^s,n\}-2^{s-1}\bigr)\leq 2\bigl(\min\{2^{s-1},n\}-2^{s-2}\bigr)$;  
	hence, the cost contribution 
	$\sum_{i=2^{s-1}+1}^{\min\{2^s,n\}}w_i\vv^{\down}_i$ is at most twice the
	cost contribution in $\obj(\hw;v)$ from the indices 
	$i\in\bigl\{2^{s-2}+1,\ldots,\min\{2^{s-1},n\}\bigr\}$. 
	The remaining cost $w_1\vv^{\down}_1+w_2\vv^{\down}_2$ is at most $2\tw_1\vv^{\down}_1$.
	\begin{comment}
	More precisely, 
	we have
	\begin{equation*}
	\begin{split}
	\sum_{i=1}^n w_iv^{\down}_i
	& = w_1v^{\down}_1+w_2v^{\down}_2+\sum_{s=2}^{\ceil{\log_2 n}}\sum_{i=2^{s-1}+1}^{\min\{2^s,n\}}w_iv^{\down}_i 
	\leq \tw_1v^{\down}_1+\tw_2v^{\down}_2
	+\sum_{s=2}^{\ceil{\log_2 n}}\sum_{i=2^{s-1}+1}^{\min\{2^s,n\}}w_{2^{s-1}}v^{\down}_{2^{s-1}} \\
	& \leq 2\tw_1v^{\down}_1
	+2\sum_{s=2}^{\ceil{\log_2 n}}\sum_{i=2^{s-2}+1}^{\min\{2^{s-1},n\}}\tw_{2^{s-1}}v^{\down}_{2^{s-1}} 
	\leq 2\tw_1v^{\down}_1
	+2\sum_{s=2}^{\ceil{\log_2 n}}\sum_{i=2^{s-2}+1}^{\min\{2^{s-1},n\}}\tw_iv^{\down}_{i}=2\obj(\tw;v). 
	\qedhere
	\end{split}
	\end{equation*}
	\end{comment}
\end{proof}

%\begin{remark}
%We want to be able to reduce the factor-$2$ loss to a $(1+\e)$-factor loss. This should be
%possible, but requires some care in defining the index-set $\pset$ (for which
%$\tw_i=w_i$); this needs to be worked out.
%\end{remark}

For the refined sparsification that only loses a $(1+\dt)$-factor, we consider positions
that are powers of $(1+\dt)$.
%and we defer the details but proving the analogue of~\Cref{simpsparse} requires a much more delicate and refined charging argument.
Let $\pset_{n,\dt}:=\bigl\{\min\{\ceil{(1+\dt)^s},n\}: s\geq 0\bigr\}$. (Note that
$\{1,n\}\sse\pset_{n,\dt}$.) 
Observe that $\pset_{n,\dt}$ depends {\em only} on $n,\dt$ and is oblivious of the weight
vector. We abbreviate $\pset_{n,\dt}$ to $\pset$ in the remainder of this section, and
whenever $n,\dt$ are clear from the context. %Note that $|\pset|=O(\log2_2 n)$.
For $\ell\in\pset$, $\ell<n$, define $\next(\ell)$ to be the smallest index in $\pset$ larger
than $\ell$. %define $\next(n):=n+1$, and $\tw_{n+1}=0$.
For every index $i\in[n]$, we set $\tw_i=w_{i}$ if $i\in\pset$;
otherwise, if $\ell\in\pset$ is such that $\ell<i<\next(\ell)$ (note that $\ell<n$), set
%$2^{s-1}<i<\min\{2^s,n\}$, set
$\tw_i=w_{\next(\ell)}=\tw_{\next(\ell)}$.  
The following is a generalization of~\Cref{simpsparse}.
\begin{lemma} \label{wsparse}
	For any $\vv\in\R_+^n$, we have $\obj(\tw;\vv)\leq\obj(w;\vv)\leq(1+\dt)\obj(\tw;\vv)$.
\end{lemma}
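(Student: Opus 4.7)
The inequality $\obj(\tw;\vv)\le\obj(w;\vv)$ is immediate from the componentwise bound $\tw\le w$, exactly as in Claim~\ref{simpsparse}. For the reverse direction, the plan is to refine the doubling charging of that claim by exploiting the fact that consecutive breakpoints of $\pset$ are within a multiplicative factor of $1+\dt$ (rather than $2$).

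My approach recasts both ordered norms as non-negative combinations of the top-$\ell$ sums $T_\ell(\vv) := \sum_{i=1}^\ell \vv^{\down}_i$ via summation by parts. Setting $c_\ell := w_\ell - w_{\ell+1}\ge 0$ with $w_{n+1} := 0$, we have $\obj(w;\vv) = \sum_{\ell=1}^n c_\ell\,T_\ell(\vv)$. Writing the distinct breakpoints as $\pset = \{p_0 < p_1 < \cdots < p_k = n\}$, a telescoping computation gives $\tw_{p_j} - \tw_{p_j+1} = w_{p_j} - w_{p_{j+1}} = \sum_{\ell = p_j}^{p_{j+1}-1} c_\ell$ for $j < k$, and $\tw_n = c_n$. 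Grouping terms yields
\[
\obj(\tw;\vv) = \sum_{\ell=1}^n c_\ell\,T_{\pi(\ell)}(\vv), \qquad \pi(\ell) := \max\{p\in\pset : p\le\ell\}.
\]
Since every $c_\ell \ge 0$, the lemma reduces to proving $T_\ell(\vv) \le (1+\dt)\,T_{\pi(\ell)}(\vv)$ for each $\ell$.

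For this termwise inequality, set $p := \pi(\ell)$. Non-increasingness of $\vv^{\down}$ gives $\vv^{\down}_i \le T_p(\vv)/p$ for all $i \ge p$ (any later entry is at most the average of the top-$p$ entries), and hence
\[
T_\ell(\vv) = T_p(\vv) + \sum_{i = p+1}^\ell \vv^{\down}_i \le T_p(\vv)\cdot(\ell/p).
\]
So it suffices to show $\ell \le (1+\dt)\,p$.

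This last numerical step is the only non-routine part, and I expect the off-by-one bookkeeping around ceilings and the capping at $n$ to be the most delicate aspect of the writeup. Let $p' = \next(p)$, so $\ell \le p' - 1$. Let $s$ be the largest integer with $\ceil{(1+\dt)^s} \le p$; then, by the definition of $\pset$ and ``consecutive'', $p' = \min\{\ceil{(1+\dt)^{s+1}}, n\}$. Since $\ceil{(1+\dt)^s} \le p$ implies $(1+\dt)^s \le p$, we obtain $p' \le \ceil{(1+\dt)^{s+1}} \le (1+\dt)\,p + 1$, giving $\ell \le p' - 1 \le (1+\dt)\,p$ as required. The capped case $p' = n$ only makes $p'$ smaller, so the same bound applies.
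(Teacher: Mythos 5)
Your proof is correct, and it takes a genuinely different route from the paper's, even though both ultimately rest on the same numerical fact about consecutive elements of $\pset$ (your ``$\ell\le(1+\dt)\pi(\ell)$'' is exactly equivalent to the paper's Claim~\ref{clm:ceilingfact}, that $\next(\ell)-1\le(1+\dt)\ell$ for $\ell\in\pset$). The paper works directly with the scalar sequences $\alpha_i:=w_i\vv^{\down}_i$ and $\beta_i:=\tw_i\vv^{\down}_i$, which agree on $\pset$; it then applies Abel summation twice to shift all the weight onto the breakpoint positions, where the ceiling bound can be used, and finally unwinds via the monotonicity of $\beta$. You instead observe that both ordered norms are the \emph{same} nonnegative combination $\sum_\ell c_\ell(\cdot)$ of top-$\ell$ sums, just evaluated at positions $\ell$ versus $\pi(\ell)$; this reduces the lemma to a single termwise inequality $T_\ell(\vv)\le(1+\dt)T_{\pi(\ell)}(\vv)$, which you dispatch with the clean averaging observation $\vv^{\down}_i\le T_p(\vv)/p$ for $i\ge p$. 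Your route is arguably more conceptual: it makes the first inequality $\obj(\tw;\vv)\le\obj(w;\vv)$ transparent (each $T_{\pi(\ell)}\le T_\ell$), it isolates a self-contained termwise statement about top-$\ell$ sums that could be reused elsewhere, and it aligns neatly with the decomposition $\obj(\tw;\vv)=\sum_{\ell\in\pset}(\tw_\ell-\tw_{\next(\ell)})\obj(\ell;\vv)$ that the paper uses in Claim~\ref{topltoord}. The paper's charging argument is shorter once the Abel-summation identity is accepted, but does not factor as cleanly into a reusable per-position inequality. Both treatments of the ceiling bookkeeping and the cap at $n$ are sound.
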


Not to detract the reader, we defer the proof of~\Cref{wsparse} to~\Cref{append-wsparse}.
We once again stress that the, perhaps more natural, way of 
%The reader may wonder why we choose the above way of sparsifying $w$ as opposed to the
%more natural way involving 
geometric bucketing (which is indeed used
by~\cite{AouadS18,ByrkaSS18,ChakrabartyS18}) where one ignores small $w_i$s and rounds down each remaining $w_i$ to the
nearest power of $2$ (or $(1+\e)$), doesn't work for our purposes. 
With geometric bucketing, the resulting sparsified
vector $w'$ is component-wise close to $w$ (and so $\obj(w';\vv)$ is close to $\obj(w;\vv)$).
But the breakpoints of $w'$ %---i.e., the indices where $w'_i$
%differs from $w'_{i+1}$---
depend heavily on $w$, whereas %our sparsification procedure 
the breakpoints of $\tw$ all lie in $\pset$. 
As noted earlier, this non-dependence on $w$ is extremely crucial for us.
% key when we consider minimum-norm optimization and \minmax ordered optimization. 
%which forms the basis of our approach for minimum-norm optimization. 
%(see Section~\ref{minnorm}).  
%problems (see \Cref{simordlb}), wherein we %have multiple (non-constant number of)
%weight 
%vectors, all of which 
%need to sparisfy all the input weight vectors in a uniform way; 
\section{Reducing minimum norm optimization to \minmax ordered optimization} 
\label{minnorm}\label{sec:minnorm}

In this section we show our reduction of the minimum norm optimization problem to \minmax ordered optimization.
We are given a monotone, symmetric norm $f:\R^n\to\R_+$, and we want to find a solution to the underlying optimization problem
which minimizes the $f$ evaluated on the induced cost vector. Let $\vec{o}$ denote the optimal cost vector and let $\iopt = f(\vec{o})$.

We assume the following {\em (approximate) ball-optimization oracle}. Given any cost vector $c\in \R^n$, we can (approximately) optimize $c^\top x$ over
the ball $\mathbb{B}_+(f) := \{x \in \R^n_+ ~:~ f(x)\leq 1\}$.
\begin{equation}
\begin{split}
\textrm{Oracle $\mathcal{A}$ takes input $c\in \R_+^n$ returns a $\kappa$-approximation to }~ %\\
Bopt(c) := \max \{c^\top x: x\in \mathbb{B}_+(f)\} \\
\textrm{That is, $\mathcal{A}$ returns $\hat{x} \in \mathbb{B}_+(f)$ such that  $c^\top \hat{x} \geq Bopt(c)/\kappa$} ~~~~~~~~~~~~~~~~~~~~~~~~~~~~~~~~~~~~~~~~~~~~~~~~~~~~~~~~~~~~~
%\qquad \qquad \qquad \qquad \qquad 
\end{split} %\tag{LinOracle}
\tag{B-O} \label{balloracle}
\end{equation}
Note that under mild assumptions, the ball-optimization oracle can be obtained, via the ellipsoid method, using a {first-order oracle} for $f$ that returns the subgradient (or even approximate subgradient) of $f$. Recall, $\sgr\in\R^n$ is a 
{\em subgradient} of $f$ at $x\in\R^n$ if we have $f(y)-f(x)\geq\sgr^T(y-x)$ for all
$y\in\R^n$. It is well known that a convex function has a subgradient at every point in its domain.
We begin by stating some preliminary properties of norms,
monotone norms, and symmetric norms. The proof can be found in~\Cref{appen-norms}.

%\begin{defn} \label{sgrad}
%We say that $\sgr$ is a {\em \possubgradient} of $f$ at $x\in\R^n$ if
%$f(y)-f(x)\geq\sgr^T(y-x)$ for all $y\in\R_+^n$.
%Given $S\sse\R^m$, we say that $\hsgr$ is an {\em $(\w,S)$-subgradient} of $f$ at the point
%$u\in S$ if for every $v\in S$, we have $f(v)-f(u)\geq\hsgr\cdot(v-u)-\w f(u)$. 
%%We will only work with $(\w,\Pc)$-subgradients, which 
%We abbreviate $(\w,\Pc)$-subgradient to $\w$-subgradient.
%\end{defn}

\begin{comment}
The connection between min-norm optimization and \minmax ordered optimization comes
from the fact %We show below 
that if $f$ is a monotone, symmetric norm, then there is a subgradient $\sgr$ of $f$
at $x\geq 0$ such that $\sgr\geq 0$ and %$\sgr$ and $x$ are similarly ordered, and 
%$f(x)=\sgr^T x=$(ordered-median cost of $x$ under the weight vector $\sgr$).  
$f(x)=\obj(\sgr^{\down};x)$. 
%is the ordered-median cost of $x$ under the weight vector $\sgr$.
\end{comment}

\begin{lemma} \label{normsubprop}
	Let $f:\R^n\to\R_+$ be a norm and $x\in\R_+^n$. %Then,
	\begin{enumerate}[(i), nosep, labelwidth=\widthof{(iii)}, leftmargin=!]
		\item If $\sgr$ is a subgradient of $f$ at $x$, then $f(x)=\sgr^T x$ and 
		$f(y)\geq\sgr^T y$ for all $y\in\R^n$. Also, $\sgr$ is a subgradient of $f$ at any point
		$\ld x$, where $\ld\geq 0$. 
		\item If $f$ is monotone, there exists a subgradient $\sgr$ of $f$ at $x$ such that
		$\sgr\geq 0$. 
		\item Let $f$ be symmetric, and $\sgr$ be a subgradient of $f$ at $x$. Then, $\sgr$ and
		$x$ are similarly ordered, i.e., if $\sgr_i<\sgr_j$ then $x_i\leq x_j$, and
		%$x_i<x_j\implies\sgr_i\leq\sgr_j$ or $\sgr_i>\sgr_j\implies x_i\geq x_j
		%\sgr_i<\sgr_j => x_i\leq x_j
		$f(x)=\obj(\sgr^{\down};x)$. Moreover, for any permutation $\pi:[n]\to[n]$, 
		the vector $\sgr^{(\pi)}:=\bigl\{\sgr_{\pi(i)}\bigr\}_{i\in[n]}$is a subgradient of $f$ at
		$x^{(\pi)}$. 
	\end{enumerate}
\end{lemma}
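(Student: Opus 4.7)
I will prove the three parts of the lemma in sequence, relying on earlier parts for later ones.

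For part (i), I will apply the defining subgradient inequality $f(y)\geq f(x)+\sgr^T(y-x)$ at two well-chosen points. Plugging in $y=0$ together with $f(0)=0$ yields $\sgr^T x\geq f(x)$; plugging in $y=2x$ together with homogeneity $f(2x)=2f(x)$ yields $\sgr^T x\leq f(x)$. Hence $f(x)=\sgr^T x$, and substituting back into the subgradient inequality gives $f(y)\geq f(x)+\sgr^T(y-x)=\sgr^T y$ for every $y$. For the statement that $\sgr$ is a subgradient at $\ld x$ for $\ld\geq 0$, homogeneity gives $f(\ld x)=\ld f(x)=\sgr^T(\ld x)$, so the inequality $f(y)\geq \sgr^T y$ immediately rewrites as $f(y)-f(\ld x)\geq \sgr^T(y-\ld x)$.

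For part (ii), my plan is to produce a nonnegative subgradient at $x$. A small-perturbation argument first shows that any subgradient $\sgr$ must satisfy $\sgr_i\geq 0$ whenever $x_i>0$: picking $\e>0$ small enough that $y:=x-\e e_i\geq 0$, monotonicity yields $f(y)\leq f(x)$, while the subgradient inequality yields $f(y)\geq f(x)-\e\sgr_i$, forcing $\sgr_i\geq 0$. It remains to handle coordinates $i$ with $x_i=0$ where $\sgr_i$ could be negative. I will set $\brho_i:=\max(\sgr_i,0)$; since $\sgr_i x_i=\brho_i x_i$ for every $i$, the identity $\brho^T x=\sgr^T x=f(x)$ holds automatically, so by the iff criterion from (i) it suffices to show $f(y)\geq\brho^T y$ for all $y\in\R^n$. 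I will argue this by convex-analytic input: the directional derivative $f'(x;\cdot)$ is the support function of $\partial f(x)$ and is nonnegative on $\R_+^n$ by monotonicity (since $x+td\in\R_+^n$ and $f(x+td)\geq f(x)$ for $t\geq 0$, $d\geq 0$); a separating-hyperplane argument then shows that $\partial f(x)\cap\R_+^n$ must be nonempty, because any $d\geq 0$ separating these convex sets would satisfy $f'(x;d)<0$, contradicting monotonicity.

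For part (iii), a swap argument establishes the similar-ordering claim. Suppose for contradiction $\sgr_i<\sgr_j$ but $x_i>x_j$, and let $x'$ be $x$ with coordinates $i,j$ swapped. Symmetry gives $f(x')=f(x)$, while direct computation yields $\sgr^T x'=\sgr^T x+(\sgr_j-\sgr_i)(x_i-x_j)>\sgr^T x=f(x)=f(x')$, contradicting $f(x')\geq \sgr^T x'$ from (i). Given similar ordering, the rearrangement $\sgr^T x=\sum_i\sgr^{\down}_i x^{\down}_i=\obj(\sgr^{\down};x)$ is immediate, so $f(x)=\obj(\sgr^{\down};x)$. For the permutation claim, I will do a change of variables: $(\sgr^{(\pi)})^T z=\sgr^T z^{(\pi^{-1})}$ by reindexing, so $(\sgr^{(\pi)})^T(y-x^{(\pi)})=\sgr^T\bigl(y^{(\pi^{-1})}-x\bigr)$; symmetry gives $f(y)=f(y^{(\pi^{-1})})$ and $f(x^{(\pi)})=f(x)$, and the subgradient inequality for $\sgr$ at $x$ applied at $y^{(\pi^{-1})}$ transfers verbatim.

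The main obstacle is the convex-analytic step in (ii) — concretely, showing that the subdifferential of a monotone norm at any $x\in\R_+^n$ meets the nonnegative orthant. All the other steps are direct manipulations of the defining subgradient inequality together with homogeneity or symmetry.
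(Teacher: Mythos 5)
Parts (i) and (iii) of your proof match the paper's argument essentially verbatim (the paper picks $y=2x$ and $y=x/2$ rather than $y=0$ and $y=2x$ for the two-sided bound, but that is a cosmetic difference; the swap argument and the reindexing computation for permutations are the same). Part (ii) is where you diverge from the paper in a genuine way. The paper takes an arbitrary subgradient $\hsgr$ at $x$, sets $\sgr:=(\hsgr)^+$, and \emph{directly verifies} that $\sgr$ is a subgradient: for any $y$ it zeroes out the coordinates of $y$ where $\hsgr$ is negative to get $y'$, notes $x$ already vanishes there, uses monotonicity to get $f(y)\ge f(y')$, and then applies the original subgradient inequality at $y'$. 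Your route is the abstract convex-analytic one: $f'(x;\cdot)$ is the support function of the compact convex set $\partial f(x)$; monotonicity forces $f'(x;d)\ge 0$ for all $d\ge 0$; if $\partial f(x)\cap\R_+^n$ were empty, a strictly separating hyperplane would have to have normal $d\ge 0$ (since $\R_+^n$ is a cone), yielding $f'(x;d)<0$, contradiction. Both are correct. The paper's version is elementary, self-contained, and in addition gives you an explicit nonnegative subgradient, which is slightly stronger than pure existence; your version imports standard machinery (compactness of the subdifferential, $f'(x;\cdot)$ as support function, the separation theorem) that the rest of the paper otherwise avoids.

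One small point of hygiene in your write-up of (ii): you introduce $\brho_i:=\max(\sgr_i,0)$ as a candidate and observe $\brho^T x=f(x)$, which suggests you will verify directly that $\brho$ is a subgradient; but the argument you actually give is the separation one, which produces \emph{some} nonnegative subgradient without showing it is $\brho$. That establishes the lemma as stated, but the two strands of your argument don't connect, and the $\brho$ paragraph is a loose end. If you want the constructive conclusion that $\brho$ itself works, you need the paper's zeroing-out argument (which does prove exactly that). If you are content with bare existence, drop the $\brho$ setup entirely and run only the separation argument.
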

%
%In the sequel, let $f$ be a monotone, symmetric norm.
%Define $\ball_+(f):=\{x\in\R_+^n: f(x)\leq 1\}$, and
Motivated by the above lemma, we define the following set of non-increasing subgradients over points on the unit norm-ball.
This set is possibly infinite.
$$
\Scol=\Bigl\{\sgr\in\R_+^n: \quad \sgr_1\geq\sgr_2\geq\ldots\geq\sgr_n, \quad 
\sgr\text{ is a subgradient of $f$ at some $x\in\ball_+(f)$}\Bigr\}.
$$
As a warm up,~\Cref{normeqv} shows that min-norm optimization is {\em equivalent} to 
\minmax ordered optimization with an {\em infinite} collection of weight vectors. This establishes the reduction, however it is inefficient.
%\begin{claim} 
%\end{claim}

\begin{lemma} \label{normeqv}
	Let $x\in\R_+^n$. We have $f(x)=\max_{w\in\Scol}\obj(w;x)$.
\end{lemma}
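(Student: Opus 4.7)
The plan is to prove the two inequalities $f(x)\ge\max_{w\in\Scol}\obj(w;x)$ and $f(x)\le\max_{w\in\Scol}\obj(w;x)$ separately, each as a direct consequence of \Cref{normsubprop}. The edge case $x=0$ is trivial (both sides are $0$), so I assume $x\ne 0$, hence $f(x)>0$.

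For the upper bound $f(x)\ge\obj(w;x)$ for every $w\in\Scol$: By definition of $\Scol$, such a $w$ is a non-increasing, non-negative subgradient of $f$ at some point in $\ball_+(f)$. By \Cref{normsubprop}(i), the subgradient inequality then gives $f(z)\ge w^T z$ for every $z\in\R^n$. Apply this with $z=x^{\down}$. Since $w$ and $x^{\down}$ are both non-increasingly sorted and non-negative, $w^T x^{\down}=\sum_i w_i x^{\down}_i=\obj(w;x)$. Finally, symmetry of $f$ gives $f(x)=f(x^{\down})\ge w^T x^{\down}=\obj(w;x)$, as required. Taking the max over $w\in\Scol$ yields one direction.

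For the lower bound, I will exhibit a single $w^\ast\in\Scol$ with $\obj(w^\ast;x)=f(x)$. By \Cref{normsubprop}(ii), there exists a non-negative subgradient $\sgr$ of $f$ at $x$. Let $\pi$ be a permutation sorting $x$ in non-increasing order, so $x^{(\pi)}=x^{\down}$. By the last clause of \Cref{normsubprop}(iii), the permuted vector $\sgr^{(\pi)}$ is then a subgradient of $f$ at $x^{\down}$. By part (iii), $\sgr$ is similarly ordered to $x$, and therefore $\sgr^{(\pi)}$ is similarly ordered to $x^{\down}$, meaning $\sgr^{(\pi)}=\sgr^{\down}$. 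Set $w^\ast:=\sgr^{\down}$; this is non-increasing and non-negative by construction.

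It remains to check that $w^\ast\in\Scol$ and that $\obj(w^\ast;x)=f(x)$. Because $w^\ast$ is a subgradient of $f$ at $x^{\down}$, \Cref{normsubprop}(i) tells us $w^\ast$ is also a subgradient at $\ld x^{\down}$ for every $\ld\ge 0$. Choosing $\ld=1/f(x^{\down})=1/f(x)$ (using symmetry of $f$), we obtain a point $y:=x^{\down}/f(x)\in\ball_+(f)$ at which $w^\ast$ is a subgradient, so indeed $w^\ast\in\Scol$. Finally, \Cref{normsubprop}(iii) applied to $\sgr$ and $x$ gives $f(x)=\obj(\sgr^{\down};x)=\obj(w^\ast;x)$, completing the second direction. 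There is no serious obstacle here: the whole argument is essentially bookkeeping on top of \Cref{normsubprop}, with the only mildly delicate step being the verification that the \emph{sorted} subgradient $\sgr^{\down}$ is itself a subgradient at some unit-ball point, which is handled by combining the permutation-invariance in part (iii) with the scaling property in part (i).
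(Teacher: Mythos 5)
Your proof takes essentially the same route as the paper's: the lower bound from part (i) of \Cref{normsubprop}, and the upper bound by sorting and rescaling a nonnegative subgradient at $x$ to land in $\ball_+(f)$. There is, however, one genuine (if minor) slip in the second half. You take $\pi$ to be \emph{any} permutation sorting $x$ and then assert that $\sgr^{(\pi)}=\sgr^{\down}$ because $\sgr^{(\pi)}$ is similarly ordered to $x^{\down}$. This can fail when $x$ has ties: similar ordering only says that if $\sgr^{(\pi)}_i<\sgr^{(\pi)}_j$ then $x^{\down}_i\le x^{\down}_j$, so within a block of tied $x$-coordinates the permuted subgradient need not be sorted. (Concretely: $x=(2,1,1,0)$, $\sgr=(3,1,2,0)$, $\pi=\mathrm{id}$ gives $\sgr^{(\pi)}=(3,1,2,0)\ne\sgr^{\down}$.) Then $w^\ast:=\sgr^{\down}$ is no longer the vector $\sgr^{(\pi)}$ that part (iii) certifies as a subgradient at $x^{\down}$, and the argument that $w^\ast\in\Scol$ breaks.

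The fix is exactly what the paper does and what your own appeal to part (iii) already hands you: the ``similarly ordered'' conclusion guarantees the existence of a \emph{common} permutation $\pi$ that simultaneously sorts $x$ and $\sgr$ non-increasingly, so that $x^{(\pi)}=x^{\down}$ and $\sgr^{(\pi)}=\sgr^{\down}$ hold at once. Choosing that $\pi$ (rather than an arbitrary sorter of $x$) makes the rest of your argument — applying the permutation clause of part (iii), then scaling by $1/f(x)$ via part (i) — go through without change. With that one-line adjustment your proof is correct and matches the paper's.
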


\begin{proof}
	We first argue that $f(x)\leq\max_{w\in\Scol}\obj(w;x)$. %We exploit~\Cref{normsubprop}.
	By part (ii) (of~\Cref{normsubprop}), there is a subgradient $\sgr\geq 0$ of $f$ at 
	$x$. By part (iii), there is 
	a common permutation $\pi$ that defines $\sgr^{\down}$ and $x^{\down}$, and 
	$\hsgr=\sgr^{\down}$ is
	a subgradient of $f$ at $x^{\down}$. By part (i), $\hsgr$ is also a subgradient of $f$
	at $x^{\down}/f(x^{\down})\in\ball_+(f)$. So $\hsgr\in\Scol$.
	Also, $f(x)=\obj(\hsgr;x)$ (by part (iii)), and so $f(x)\leq\max_{w\in\Scol}\obj(w;x)$.
	
	Conversely, consider any $w\in\Scol$, and let it be a subgradient of $f$ at
	$z\in\ball_+(f)$. We have $f(x)=f(x^{\down})\geq w^Tx^{\down}$ (by part (i) of~\Cref{normsubprop}), and so $f(x)\geq\obj(w;x)$. Therefore,
	$f(x)\geq\max_{w\in\Scol}\obj(w;x)$. 
\end{proof}

To reduce to min-max ordered optimization, we need to find a polynomial-sized collection of weight vectors.
Next, we show how to leverage the 
{\em weight sparsification} idea in~\Cref{sparsify} and achieve this taking a slight hit in the approximation factor.
Let $0<\e\leq 0.5$ be a parameter. 
The sparsification procedure (\Cref{wsparse}) shows that, with an $(1+\e)$-loss, we can
focus on a set of $O(\log n/\e)$ coordinates and describe the weight vectors by
their values at these coordinates. For the ordered-optimization objective
$\obj(w;x)$, moving to the sparsified weight incurs only a $(1+\e)$-loss. Furthermore, again taking a loss of $(1+\e)$, we can assume these
coordinates are set to powers of $(1+\e)$.
Our goal (roughly speaking) is then only to consider the collection
consisting of the sparsified, rounded versions of vectors in $\Scol$. 
%As in Lemma~\ref{polyguess}, 
\Cref{nondec} implies that
we can enumerate all sparsified, rounded weight vectors in
polynomial time. 

But we also need to be able to determine if such a vector $\tw$ 
%arises as a result of modifying a vector in $\Scol$,
%that is if $\tw$ 
is ``close'' to a subgradient in $\Scol$, and this is where \eqref{balloracle} is used.
First note that $\sgr \in \Scol$ iff\footnote{If $\sgr\in\Scol$ is the subgradient of $f$ at $y\in\ball_+(f)$,
	$\sgr^Tx\leq f(x)\leq 1\ \ \forall x\in\ball_+(f)$, and $\sgr^Ty/f(y)=1$, so
	$\max_{x\in\ball_+(f)}\sgr^Tx=1$. Alternately, if $Bopt(\sgr)=1$, then we have $d^\top z = 1$ for 
	some $f(z)\leq 1$ implying $f(z) + d^\top(y-z) \leq d^\top y$ for any $y$. If the LHS is $> f(y)$, then we would get
	$d^\top (y/f(y)) > 1$ contradicting $Bopt(\sgr)=1$. } 
$Bopt(\sgr)=1$. Thus to check if $\tw$ is ``close'' to a subgradient in $\Scol$, it suffices to (approximately) solve for $Bopt(\tw)$
and check if the answer is within $(1\pm \e)$ (or scaled by $\kappa$ if we only have an approximate oracle). We give the details next.

%
%Conversely, if $x^*\in\ball_+(f)$ is such that
%$\sgr^Tx^*=1=\max_{x\in\ball_+(f)}\sgr^Tx$, we may assume that 
%$x^*_1\geq\ldots\geq x^*_n$. For all $y\in$
%Leveraging this, 
%we show that this can be determined given an oracle for (approximately)
%maximizing a linear function over $\ball_+(f)$. Note that, under mild assumptions, such a 
%{\em ball-optimization oracle} can be obtained using a {\em first-order oracle} for $f$,
%which returns a subgradient (or an approximate subgradient~\cite{ShmoysS06}) of $f$ at a
%\nolinebreak
%\mbox{given point $x\in\ball_+(f)$, so a first-order oracle for $f$ would also suffice here.}

%We now delve into the details. %and describe and analyze our reduction
%Let $\vec{o}$ be the cost vector corresponding to an optimal solution of the min-norm
%optimization problem. %whose coordinates are sorted in non-increasing order. So 
%Let $\iopt=f(\vec{o})$ be the optimal cost. 
To make the enumeration go through we need to make the following mild assumptions. These assumptions need to be 
checked for the problems at hand, and are often easy to establish.
%Let $\inp$ denote the input size.
%
\begin{enumerate}[label=(A\arabic*), topsep=0.5ex, itemsep=0ex,
	labelwidth=\widthof{(A3)}, leftmargin=!] 
	\item \label{assum1}
	We can determine in polytime if $\vo_1=0$. If $\vo_1>0$ (so $\iopt>0$), then $\vo_1\geq 1$ (assuming integer data),
	and we can compute an estimate $\high$ such that $\vo_1\leq\high$. 
	%and $\log\high=\poly(\inp)$. 
	In the sequel, assume that $\vo_1\geq 1$.
	
	\item \label{assum2}
	We have bounds $\lb,\ub>0$ such that $\lb\leq\iopt\leq\ub$. 
	%and $\log(\ub/\lb)=\poly(\inp)$. 
	Then~\ref{assum1} and~\Cref{normsubprop} (i) imply that $\sgr_1\leq\ub$ for
	all $\sgr\in\Scol$.
	
	%\item \label{assum3} \label{ballassum}
	%For $c\in\R_+^n$, consider the following problem: \quad
	%%\begin{equation}
	%$\bopt(c)\ :=\ \max\ \bigl\{c^Tx: \ \ x\in\ball_+(f)\bigr\}$. %\label{ballopt}
	%%\end{equation}
	%
	%\noindent
	%We have a {\em $\kp$-approximation ball-optimization oracle $\A$} that, given any
	%$c\in\R_+^n$, returns a point $\hx\in\ball_+(f)$ satisfying $c^T\hx\geq\bopt(c)/\kp$,
	%where $\kp\geq 1$. 
\end{enumerate}

%Define $\W$ to be the following collection of vectors, which are intended to be (loosely
%speaking) the sparsified, rounded versions of vectors in $\Scol$. 
We take $\dt=\e$ in the sparsification procedure in~\Cref{sparsify}. 
Let $\pset=\pset_{n,\e}:=\{\min\{\ceil{(1+\e)^s},n\}: s\geq 0\}$. Recall that $\next(\ell)$ is the smallest index in $\pset$ larger
than $\ell$. %if $\ell<n$, and is $n+1$ otherwise. 
%Define the {\em core} of $w\in\R^n$ to be the vector  
%$\{\w_\ell\}_{\ell\in\pset}\in\R_+^\pset$. We expand a vector
%$\tv\in\R_+^\pset$ to $\tv^\unp\in\R_+^n$ by defining $\tv^\unp_i=\tv_i$ if $i\in\pset$,
%and $\tv_{\next(\ell)}$ otherwise, where $\ell\in\pset$ is such that $\ell<i<\next(\ell)$.    
The sparsified version of $w\in\R^n$ is the vector $\tw\in\R^n$ %obtained by expanding the
%core of $w$. Clearly, if $\tv_\ell$ is non-increasing in $\ell$ as $\ell$ ranges over
%$\pset$, then $\tv^\unp_i$ is non-increasing in $i$ as $i$ ranges over $[n]$.
given by $\tw_i=w_{i}$ if $i\in\pset$; and $\tw_i=w_{\next(\ell)}$ otherwise, 
where $\ell\in\pset$ is such that $\ell<i<\next(\ell)$.    
Since $\tw$ is completely specified by specifying the positions in $\pset$, we define
the $|\pset|$-dimensional vector
$u :=\{\tw_\ell\}_{\ell\in\pset}$. We identify $\tw$ with $u \in\R_+^\pset$
and say that $\tw$ is the {\em expansion of $u$}.   
%It will be
%convenient to define the unpacking of this sparse representation: given
%
\begin{equation*}
\begin{split}
\text{Define} \quad \W'\sse\R_+^n\ :=\ \Bigl\{
&\text{expansion of }u \in\R_+^\pset: \quad \exists\ell^*\in\pset\ \text{s.t.}\ 
u_\ell=0\ \forall\ell\in\pset\text{ with }\ell>\ell^*, \\
%& v_1=(1+\e)^j\text{ for some $j\in\Z_+$}, \quad u_1\in[\lb,\ub(1+\e))
& u_1,u_2,\ldots,u_{\ell^*}\text{ are powers of $(1+\e)$ (possibly smaller than $1$)} \\
& u_1\in\bigl[\tfrac{\lb}{n\cdot\high},\,\,\ub(1+\e)\bigr), 
\quad u_1\geq u_2\geq\ldots\geq u_{\ell^*}\geq\tfrac{\e u_1}{n(1+\e)}\Bigr\}.
\end{split}
%\label{sparse1}
\end{equation*}
Let $\bon^n$ denote the all $1$s vector in $\R^n$. 
%Recall that $\|v\|_1$
Now define  
$$
\W\ :=\ 
\Bigl\{w\in\W': \ \ \text{oracle $\A$ run on $w$ returns $\hx\in\ball_+(f)$ s.t. }%
w^T\hx\in\bigl[(1-\e)/\kp,1+\e\bigr]\Bigr\}\,\cup\,
\Bigl\{\tfrac{\lb}{n\cdot\high}\cdot\bon^n\Bigr\}.
$$ 
The extra scaled all ones vector is added for a technical reason. We use the following enumeration claim.
%The following claim can be useful to show that $\W'$ has polynomial size.

\begin{claim} \label{nondec}
	There are at most   $(2e)^{\max\{N,k\}}$ non-increasing sequences of $k$ integers chosen from $\{0,\ldots,N\}$.
	%is at most
	%is $\binom{N+k-1}{N}\leq (2e)^{\max\{N,k\}}$.
\end{claim}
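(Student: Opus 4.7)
The plan is to interpret the combinatorial object first, and then apply a standard binomial bound in the two symmetric cases. A non-increasing sequence $a_1 \geq a_2 \geq \cdots \geq a_k$ with each $a_i \in \{0, 1, \ldots, N\}$ is in bijection with a multiset of size $k$ drawn from the $N+1$ element set $\{0,1,\ldots,N\}$ (read off the multiplicities of each value, or equivalently the sorted sequence from a multiset). By the standard ``stars and bars'' count, the number of such multisets equals $\binom{N+k}{k} = \binom{N+k}{N}$. So the task reduces to proving
\[
\binom{N+k}{k} \ \leq\ (2e)^{\max\{N,k\}}.
\]

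The main tool is the textbook estimate $\binom{m}{r} \leq \left(\tfrac{em}{r}\right)^r$, which follows from $r! \geq (r/e)^r$. I would split into two cases depending on which of $N,k$ is larger, using the symmetry $\binom{N+k}{k}=\binom{N+k}{N}$ to always apply the bound with the smaller index in the denominator (that is where the estimate is tightest).

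If $k \geq N$, directly apply the bound with $m=N+k$ and $r=k$:
\[
\binom{N+k}{k} \ \leq\ \left(\tfrac{e(N+k)}{k}\right)^k \ \leq\ (2e)^k \ =\ (2e)^{\max\{N,k\}},
\]
using $N+k \leq 2k$. If $N \geq k$, instead write $\binom{N+k}{k}=\binom{N+k}{N}$ and apply the bound with $r=N$:
\[
\binom{N+k}{N} \ \leq\ \left(\tfrac{e(N+k)}{N}\right)^N \ \leq\ (2e)^N \ =\ (2e)^{\max\{N,k\}},
\]
using $N+k \leq 2N$. In both cases the desired inequality holds.

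There is no real obstacle here: the only thing to be a bit careful about is to apply the binomial estimate in the correct orientation, since $(e(N+k)/r)^r$ becomes worse as $r$ decreases for fixed $N+k$ once $r$ drops below roughly $(N+k)/e$. The case split above ensures that we always invoke the estimate with $r = \min\{N,k\}$ after the symmetry swap is complete---wait, actually the cleaner way is the one above, where we use $r=\max\{N,k\}$ and exploit $N+k \leq 2r$ in both branches; that is what makes the $2e$ constant appear naturally.
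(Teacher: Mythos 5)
Your proof is correct and follows essentially the same route as the paper's: count the sequences via a stars-and-bars bijection, then invoke the standard estimate $\binom{m}{r}\le(em/r)^r$ with $r=\max\{N,k\}$, using $N+k\le 2\max\{N,k\}$ to extract the constant $2e$. In fact your count $\binom{N+k}{k}$ is the correct one; the paper's appendix writes $\binom{N+k-1}{N}$, which appears to be an off-by-one slip (the bijection there maps a sequence to $k+1$ nonnegative integers summing to $N$, of which there are $\binom{N+k}{k}$), though the final bound is unaffected since $\binom{N+k}{k}\le(2e)^{\max\{N,k\}}$ holds by exactly the computation you give.
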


The following theorem establishes the reduction from the minimum norm problem to \minmax ordered optimization.
The proof idea is as sketched above; we defer the details of the proof to~\Cref{appen-norms}.

\begin{theorem} \label{normredn}\label{thm:minnorm}
	For any $\vv\in\R_+^n$, the following hold.
	
	\vspace*{0.2ex}
	\noindent
	(i) $\max_{w\in\W}\obj(w;\vv)\leq
	\max\bigl\{\kp(1+\e)f(\vv),\frac{\lb}{n\cdot\high}\sum_{i\in[n]}\vv_i\bigr\}$, \quad %and 
	%
	%\noindent
	(ii) $f(\vv)\leq(1-\e)^{-1}\max_{w\in\W}\obj(w;\vv)$.
	%\max\bigl\{(1-\e)^{-1}\max_{w\in\W}\obj(w;v),\frac{\e\lb}{n\high}\sum_{i\in[n]}v_i\bigr\}$.
	
	\vspace*{0.2ex}
	\noindent 
	Hence, a $\gm$-approximate solution $\vv$ for the \minmax ordered-optimization
	problem with objective \linebreak $\max_{w\in\W}\obj(w;\vv)$ (where $\gm\geq 1$) satisfies
	$f(\vv)\leq\gm\kp(1+3\e)\iopt$. 
	
	Constructing $\W$ requires 
	$O\bigl(\frac{\log n}{\e^2}\log(\frac{n\cdot\ub\cdot\high}{\lb})(\frac{n}{\e})^{O(1/\e)}\bigr)$ 
	calls to $\A$, which is also a bound on $|\W|$. %and the running time of the reduction.
\end{theorem}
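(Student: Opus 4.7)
The plan is to establish (i) and (ii) separately, combine them to obtain the final $\gm\kp(1+3\e)\iopt$ bound, and derive the bound on $|\W|$ via \Cref{nondec}.

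Part (i) is the easy direction. For $w\in\W\setminus\{\tfrac{\lb}{n\cdot\high}\bon^n\}$, the construction of $\W$ provides an $\hx\in\ball_+(f)$ with $w^T\hx\le 1+\e$, and since $\A$ is a $\kp$-approximation this forces $Bopt(w)\le\kp(1+\e)$. Because $w$ is non-increasing and $\ball_+(f)$ is invariant under coordinate permutations (by symmetry of $f$), the rearrangement inequality yields $\obj(w;\vv)=w^T\vv^{\down}=f(\vv)\cdot w^T(\vv^{\down}/f(\vv))\le\kp(1+\e)f(\vv)$ whenever $f(\vv)>0$ (the $f(\vv)=0$ case being trivial). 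The special weight contributes $\tfrac{\lb}{n\cdot\high}\sum_i\vv_i$, and taking the maximum over $\W$ gives (i).

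For (ii) I would use \Cref{normsubprop}(ii),(iii) to pick a nonnegative, non-increasing subgradient $\sgr^*$ of $f$ at $\vv^{\down}$ with $f(\vv)=\obj(\sgr^*;\vv)$; by \Cref{normsubprop}(i) it is also a subgradient at $\vv^{\down}/f(\vv^{\down})\in\ball_+(f)$, so $\sgr^*\in\Scol$, and \ref{assum2} gives $\sgr^*_1\le\ub$. If $\sgr^*_1<\lb/(n\cdot\high)$, then $f(\vv)\le\sgr^*_1\sum_i\vv_i\le\obj(\tfrac{\lb}{n\cdot\high}\bon^n;\vv)$ and we are done. Otherwise I construct $w'$ in three steps: (a) sparsify $\sgr^*$ per \Cref{sparsify} to get $\tsg$; (b) round each $\tsg_\ell$ for $\ell\in\pset$ down to the largest power of $(1+\e)$ below it; and (c) zero out entries that drop below $\tfrac{\e}{n(1+\e)}$ times the new first coordinate $w'_1$. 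Step (a) loses at most a $(1+\e)$-factor in $\obj(\cdot;\vv)$ by \Cref{wsparse}; step (b) loses another $(1+\e)$-factor; and step (c) costs at most $n\cdot\tfrac{\e w'_1}{n(1+\e)}\cdot\vv^{\down}_1\le\tfrac{\e}{1+\e}f(\vv)$, using $w'_1\vv^{\down}_1\le\obj(\sgr^*;\vv)=f(\vv)$. Running the three operations with a common parameter slightly smaller than $\e$ yields $\obj(w';\vv)\ge(1-\e)f(\vv)$. Membership of $w'$ in $\W'$ is by inspection of the definition; for $\W$, note $w'\le\sgr^*$ coordinate-wise, so $Bopt(w')\le Bopt(\sgr^*)=1$ and $\A(w')$ returns $\hx$ with ${w'}^T\hx\le 1+\e$; conversely, applying the same three-step inequality to the maximizer $y^*$ of $Bopt(\sgr^*)$ (WLOG non-increasing by rearrangement) gives $Bopt(w')\ge {w'}^Ty^*\ge(1-\e)\sgr^{*T}y^*=1-\e$, forcing ${w'}^T\hx\ge(1-\e)/\kp$ and hence $w'\in\W$.

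Combining the two parts: for the optimum $\vec{o}$, the second term in (i) satisfies $\tfrac{\lb}{n\cdot\high}\sum_i\vec{o}_i\le\tfrac{\lb}{n\cdot\high}\cdot n\vec{o}_1\le\lb\le\iopt$, so $\max_{w\in\W}\obj(w;\vec{o})\le\kp(1+\e)\iopt$; thus any $\gm$-approximate solution $\vv$ for the \minmax ordered problem on $\W$ satisfies $f(\vv)\le(1-\e)^{-1}\max_{w\in\W}\obj(w;\vv)\le\tfrac{\gm\kp(1+\e)}{1-\e}\iopt\le\gm\kp(1+3\e)\iopt$ for $\e$ sufficiently small. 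The size bound on $\W$ follows from \Cref{nondec}, by counting nonincreasing exponent sequences of length $|\pset|=O(\log n/\e)$ drawn from $O(\log(n\ub\high/\lb)/\e)$ integer values, with one call to $\A$ per candidate. The main difficulty is the bookkeeping in step (ii): the same $w'$ must simultaneously witness $\obj(w';\vv)\ge(1-\e)f(\vv)$ \emph{and} satisfy $Bopt(w')\ge 1-\e$ (which, after the oracle's $\kp$-slack, places $w'\in\W$), so I run the sparsify/round/truncate operations with a single, slightly reduced parameter and absorb the constants in the final $(1+3\e)$ factor.
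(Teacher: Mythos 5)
Your proposal follows the same route as the paper's appendix proof: part~(i) is essentially identical, and for part~(ii) you invoke \Cref{normeqv} to reduce to bounding $\obj(\sgr;\vv)$ for a single subgradient $\sgr\in\Scol$, then approximate $\sgr$ by a sparsified, power-of-$(1+\e)$-rounded, truncated vector $w'\in\W'$, and finally use the ball-oracle bound to place $w'$ in $\W$. The structure, the role of \ref{assum1}--\ref{assum2}, and the enumeration via \Cref{nondec} are all as in the paper.

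There is one concrete deviation that introduces a genuine (if small) gap. You round each $\pset$-coordinate \emph{down} to the nearest power of $(1+\e)$, whereas the paper rounds \emph{up}. Rounding down gives you $w'\leq\sgr^*$, which you use to get $Bopt(w')\leq Bopt(\sgr^*)=1$ cheaply. But it breaks the membership constraint $u_1\geq\frac{\lb}{n\cdot\high}$ in the definition of $\W'$: your ``otherwise'' branch only guarantees $\sgr^*_1\geq\frac{\lb}{n\cdot\high}$, and after rounding down $w'_1$ can be as small as $\frac{\lb}{(1+\e)n\cdot\high}$, which falls outside $\W'$ and hence outside $\W$. The paper avoids this precisely by rounding up, at which point $w'_1\geq\sgr^*_1\geq\frac{\lb}{n\cdot\high}$ is automatic and the range constraint $u_1<\ub(1+\e)$ is still met (using $\sgr^*_1\leq\ub$ from~\ref{assum2}); the upper bound $Bopt(w')\leq 1+\e$ is then recovered from $w'\leq(1+\e)\sgr^*$ coordinate-wise, which the $\W$-membership test explicitly tolerates. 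Either fix works: round up (and adapt your $Bopt$ bound as the paper does), or widen the ``small $\sgr^*_1$'' fallback case to $\sgr^*_1<(1+\e)\frac{\lb}{n\cdot\high}$.

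A secondary remark on your constant-tracking: the ``run all three operations with a slightly smaller parameter'' device is legitimate, but only if the shrunken parameter is also the one used to define $\pset$, the power-of-$(1+\e)$ grid, and the cutoff $\frac{\e u_1}{n(1+\e)}$ in $\W'$ itself --- you cannot sparsify at a finer grid than the one $\W'$ is built from, or $w'$ will not be an expansion of a vector in $\R_+^\pset$. If you apply the reparametrization consistently (i.e., construct $\W$ with parameter $\e/c$ for a suitable constant $c$), the claimed $(1-\e)^{-1}$ in part~(ii) and the final $\gm\kp(1+3\e)$ follow after relabelling $\e$; the $|\W|$ and running-time bounds absorb the constant into the $O(\cdot)$. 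With these two repairs the proposal is correct.
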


\section{Proxy costs}\label{ordproxy}\label{sec:proxy}
As mentioned in~\Cref{sec:techideas}, the key to tackling ordered optimization is to view
the problem of minimizing the sum of a suitably devised proxy-cost function over all coordinates.
We describe this proxy in this section.
%up with a suitable proxy-cost function under which the problem can 
We first so so for \topl optimization. This
will serve to motivate and illuminate the proxy-cost function that we use for
(general) ordered optimization. As usual, we use $\vec{o}$ to denote the cost vector
corresponding to an optimal solution, and $\iopt$ to denote the optimal cost. 
Recall, $\cost(\ell;\vv)$ is the cost of the \topl optimization.
%to the \topl optimization problem. 
%with coordinates sorted in non-increasing order. 
%So $\iopt=\cost(\ell;\vec{o})$ for \topl optimization. %be the optimal cost.

Define $z^+:=\max\{0,z\}$ for $z\in\R$. 
For any scalar $\rho > 0$, define $h_\rho(z) := (z - \rho)^+$.
The main insight is that for any $\vv\in\R^n$,  we have
$\cost(\ell;\vv)=\min_{\thr\in\R}\ell\cdot\thr+\sum_{i=1}^nh_\thr(\vv_i)$. 
\begin{comment}
This follows from the following two claims.%
\footnote{Another derivation involves duality: 
$\cost(\ell;v)=
\bigl(\max_{i\in[n]}v_ix_i\text{ s.t. }\sum_{i\in[n]}x_i=\ell,\ 0\leq x_i\leq 1\ \forall i\bigr)$. 
Taking the dual, and simplifying yields the expression on the left; $\thr$ is the dual
variable corresponding to the $\sum_{i\in[n]}x_i=\ell$ primal constraint.}
\end{comment}

\begin{comment}
This is not hard to see, and we prove a slightly more-general statement below. This
motivates the following proxy function, defined in terms of a threshold $t$, which 
%
\begin{equation}
\prox_t(\ell;v)\ :=\ 
\end{equation}

We now prove some results relating $\obj(\tw;v)$ and $\prox_{\vec{t}}(\tw;v)$. 
We begin by comparing $\obj(\ell;v)$ and the expression which serves as the proxy for the
\topl optimization problem.
\end{comment}  

\begin{claim} \label{topllb}
	For any $\ell\in[n]$, any $\vv\in\R^n$, and any $\thr\in\R$, we have
	$\obj(\ell;\vv)\leq\ell\cdot\thr+\sum_{i=1}^nh_\thr(\vv_i)$.
\end{claim}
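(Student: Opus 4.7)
The plan is a short, direct telescoping argument; there is no real obstacle here, and the step that does the work is simply the pointwise inequality $z \le z^+$ applied to $z = \vv_i - \thr$.

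First I would pick an index set $S \sse [n]$ of size $\ell$ achieving the top-$\ell$ sum, i.e., $S$ consists of indices of the $\ell$ largest coordinates of $\vv$, so that $\obj(\ell;\vv) = \sum_{i \in S} \vv_i$. Then I write $\vv_i = \thr + (\vv_i - \thr)$ for each $i \in S$, obtaining
\[
\obj(\ell;\vv) \;=\; \sum_{i \in S} \vv_i \;=\; \ell \cdot \thr \;+\; \sum_{i \in S} (\vv_i - \thr).
\]
Next, I use the trivial inequality $z \le z^+ = \max\{0,z\}$ applied coordinatewise: $\vv_i - \thr \le (\vv_i - \thr)^+ = h_\thr(\vv_i)$ for each $i$. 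This gives
\[
\ell \cdot \thr \;+\; \sum_{i \in S} (\vv_i - \thr) \;\le\; \ell \cdot \thr \;+\; \sum_{i \in S} h_\thr(\vv_i).
\]
Finally, I extend the sum on the right to all of $[n]$ using $h_\thr \ge 0$, which yields $\ell \cdot \thr + \sum_{i \in S} h_\thr(\vv_i) \le \ell \cdot \thr + \sum_{i=1}^n h_\thr(\vv_i)$, completing the proof. Note that the argument makes no use of the particular choice of $S$ (nor even that $S$ achieves the maximum) beyond $|S| = \ell$, which reflects the fact that the inequality holds for every choice of $\ell$ coordinates; the top-$\ell$ sum is simply the largest such sum. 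It also holds for any $\thr \in \R$, including negative $\thr$, since the calculation never required $\thr \ge 0$.
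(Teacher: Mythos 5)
Your proof is correct and follows essentially the same route as the paper's: rewrite each of the top $\ell$ coordinates as $\thr + (\vv_i - \thr)$, bound $\vv_i - \thr \le h_\thr(\vv_i)$ pointwise, and then enlarge the sum to all $n$ coordinates using $h_\thr \ge 0$. The paper phrases this over the sorted vector $\vv^{\down}$ rather than an explicit index set $S$, but the argument is identical.
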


\begin{proof}
	We have 
	$\obj(\ell;\vv)=\sum_{i=1}^\ell \vv^{\down}_i
	\leq\ell\cdot\thr+\sum_{i=1}^\ell(\vv^{\down}_i-\thr)^+
	\leq\ell\cdot\thr+\sum_{i=1}^n(\vv^{\down}_i-\thr)^+$. 
	%\qedhere
\end{proof}

\begin{claim} \label{toplopt}
	Let $\ell\in[n]$, and $\thr$ be such that 
	$\vo_\ell\leq\thr\leq(1+\e)\vo_\ell$. Then 
	$\ell\cdot\thr+\sum_{i=1}^n h_\thr(\vec{o}_i)\leq(1+\e)\obj(\ell;\vec{o})$. 
\end{claim}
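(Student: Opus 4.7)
The plan is to directly manipulate the expression $\ell\cdot\thr+\sum_{i=1}^n h_\thr(\vec{o}_i)$ using the fact that $h_\thr$ depends only on the sorted vector $\vec{o}^{\down}$, and then use the fact that $\thr$ is only slightly bigger than $\vo_\ell$.

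First I would observe that since $h_\thr$ is a symmetric function of its argument vector, $\sum_{i=1}^n h_\thr(\vec{o}_i) = \sum_{i=1}^n h_\thr(\vec{o}^{\down}_i)$. Since $\vec{o}^{\down}_i \leq \vo_\ell \leq \thr$ for all $i>\ell$, the terms with $i>\ell$ contribute zero, so the sum collapses to $\sum_{i=1}^\ell (\vec{o}^{\down}_i-\thr)^+$. The key identity I would then use is $(z-\thr)^+ = z - \min(z,\thr)$, valid for any $z\in\R$ with $z\ge 0$ and $\thr\ge 0$. Applying this termwise yields
\[
\sum_{i=1}^\ell (\vec{o}^{\down}_i-\thr)^+ \;=\; \sum_{i=1}^\ell \vec{o}^{\down}_i - \sum_{i=1}^\ell \min(\vec{o}^{\down}_i,\thr) \;=\; \obj(\ell;\vec{o}) - \sum_{i=1}^\ell \min(\vec{o}^{\down}_i,\thr).
\]
Adding $\ell\cdot\thr$ and grouping each term, this rearranges to
\[
\ell\cdot\thr + \sum_{i=1}^n h_\thr(\vec{o}_i) \;=\; \obj(\ell;\vec{o}) + \sum_{i=1}^\ell (\thr - \vec{o}^{\down}_i)^+.
\]

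The next step is to bound the error term $\sum_{i=1}^\ell(\thr-\vec{o}^{\down}_i)^+$. For $i\le\ell$ we have $\vec{o}^{\down}_i\ge\vo_\ell$, so $(\thr-\vec{o}^{\down}_i)^+ \le \thr - \vo_\ell \le \e\vo_\ell$ by the hypothesis $\thr\le(1+\e)\vo_\ell$. Summing gives $\sum_{i=1}^\ell(\thr-\vec{o}^{\down}_i)^+\le \ell\cdot\e\vo_\ell = \e\cdot\ell\vo_\ell$. Finally, since $\vec{o}^{\down}_i\ge\vo_\ell$ for every $i\in[\ell]$, we have $\ell\vo_\ell\le \sum_{i=1}^\ell \vec{o}^{\down}_i = \obj(\ell;\vec{o})$, so the error is at most $\e\cdot\obj(\ell;\vec{o})$. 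Combining with the display above yields the claimed bound $(1+\e)\obj(\ell;\vec{o})$.

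There is no real obstacle here: the argument is a direct calculation, and the only subtlety is recognizing that the $i>\ell$ terms vanish (because $\thr\ge\vo_\ell$) while the potential overcounting from $i\le\ell$ (because $\thr$ might exceed some $\vec{o}^{\down}_i$ in this range) is at most $\e\vo_\ell$ per coordinate, hence at most an $\e$-factor of $\obj(\ell;\vec{o})$ overall.
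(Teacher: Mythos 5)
Your proof is correct and takes essentially the same approach as the paper's: both isolate the vanishing $i>\ell$ terms (using $\thr\geq\vo_\ell$) and bound the $i\le\ell$ contribution via $\thr\leq(1+\e)\vo_\ell$. The only difference is bookkeeping — you first establish the exact identity $\ell\thr+\sum_i h_\thr(\vec{o}_i)=\obj(\ell;\vec{o})+\sum_{i\le\ell}(\thr-\vec{o}^{\down}_i)^+$ and then bound the error term, whereas the paper directly bounds $\sum_i h_\thr(\vec{o}_i)\le\sum_{i\le\ell}(\vo_i-\vo_\ell)$ and $\ell\thr\le(1+\e)\ell\vo_\ell$ and adds; the two are equivalent.
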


\begin{proof}
	We have $\sum_{i=1}^n(\vec{o}_i-\thr)^+\leq\sum_{i=1}^\ell(\vo_i-\vo_\ell)$.
	Since $\thr\leq(1+\e)\vo_\ell$, we have 
	$\ell\cdot\thr+\sum_{i=1}^n(\vec{o}_i-\thr)^+
	\leq(1+\e)\sum_{i=1}^\ell\bigl(\vo_\ell+(\vo_i-\vo_\ell)\bigr)=(1+\e)\obj(\ell;\vec{o})$. 
\end{proof}

The above claims indicate that if we obtain a good estimate $\thr$ of $\vo_\ell$, then 
$\ell\cdot\thr+\sum_{i=1}^nh_\thr(\vv_i)$ can serve as a good proxy for $\cost(\ell;\vv)$, 
%So to %(approximately) 
%minimize the \topl cost, 
and we can focus on 
%(obtaining a good estimate $\thr$ and) 
the problem of finding $v$ minimizing $\sum_{i=1}^nh_\thr(\vv_i)$. %In the next section 
%Next, we extend these insights to ordered optimization.
The following properties will be used many times.

\begin{claim} \label{triangle}
	We have: (i) $h_\thr(x)\leq h_\thr(y)$ for any $\thr$, $x\leq y$;
	(ii) $h_{\thr_1}(x)\leq h_{\thr_2}(x)$ for any $\thr_1\geq\thr_2$, and any $x$;
	(iii) $h_{\thr_1+\thr_2}(x+y)\leq h_{\thr_1}(x)+h_{\thr_2}(y)$ for any $\thr_1,\thr_2,x,y$.
\end{claim}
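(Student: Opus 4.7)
The plan is to prove each of the three parts directly from the definition $h_\thr(z)=(z-\thr)^+=\max\{0,z-\thr\}$. The only substantive case analysis is in part (iii); parts (i) and (ii) are one-line consequences of the fact that $z\mapsto z^+$ is nondecreasing, applied to the monotonicity of $z-\thr$ in each argument.

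For part (i), I would observe that $x\leq y$ implies $x-\thr\leq y-\thr$, and then apply monotonicity of $(\cdot)^+$. For part (ii), the same reasoning gives $x-\thr_1\leq x-\thr_2$ whenever $\thr_1\geq\thr_2$, and again monotonicity of $(\cdot)^+$ yields $h_{\thr_1}(x)\leq h_{\thr_2}(x)$.

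The one with slight content is part (iii), a subadditivity-type statement $h_{\thr_1+\thr_2}(x+y)\leq h_{\thr_1}(x)+h_{\thr_2}(y)$. My plan is a two-case argument on the sign of $x+y-\thr_1-\thr_2$. If $x+y-\thr_1-\thr_2\leq 0$, the left-hand side is $0$ while the right-hand side is a sum of nonnegative terms, so the inequality is immediate. Otherwise the left-hand side equals $(x-\thr_1)+(y-\thr_2)$, and since $(a)^+\geq a$ for every real $a$, the right-hand side is at least this sum. The main (and only) ``obstacle'' is noticing that these are the right cases to split on, but this is forced by the definition of $(\cdot)^+$, so there is really no obstacle at all — the claim is essentially a bookkeeping lemma that packages useful monotonicity and subadditivity properties of the proxy function $h_\thr$ for repeated later use.
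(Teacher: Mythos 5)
Your proposal is correct and matches the paper's argument: parts (i) and (ii) from monotonicity of $(\cdot)^+$, and part (iii) by the two-case split on whether $h_{\thr_1+\thr_2}(x+y)$ is zero, then using $(a)^+\geq a$. Nothing to add.
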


\begin{proof}
	Part (iii) is the only part that is not obvious. If $h_{\thr_1+\thr_2}(x+y)=0$, then the
	inequality clearly holds; otherwise, 
	$h_{\thr_1+\thr_2}(x+y)=x-\thr_1+y-\thr_2\leq (x-\thr_1)^++(y-\thr_2)^+$.
\end{proof}

We remark that 
our proxy function for \topl optimization is similar to, but subtly
stronger than, the proxy function utilized in recent prior works on the $\ell$-centrum and
ordered $k$-median clustering problems~\cite{ByrkaSS18,ChakrabartyS18}. 
%both of which consider the expression
%$\sum_{i=1}^nv_i\bon_{v_i\geq t}$, where the $\bon_P=1$ if the predicate $P$ is true, and
%$0$ otherwise.
This strengthening (and its extension to ordered optimization) forms the basis of our
significantly improved approximation guarantees of $(5+\e)$ for %$\ell$-centrum and 
ordered $k$-median (\Cref{improv}), which improves upon the prior-best guarantees  
%(of $(8.5+\e)$ and $(18+\e)$ respectively) 
for {\em both} $\ell$-centrum and ordered $k$-median~\cite{ChakrabartyS18}. 
Furthermore, this proxy function also leads to (essentially) a %tight
$2$-approximation for \topl load balancing %and a $(2+\e)$-approximation for 
and ordered load balancing (\Cref{sec:loadbal:splcase}). 

\vspace*{-1ex}
\paragraph{Ordered optimization.}
We now build upon our insights for \topl optimization.
%Our approach for (approximately) solving an ordered-optimization problem involves 
%obtain a suitable proxy cost function for ordered optimization. 
Let $w\in\R^n$ be the weight vector
(with non-increasing coordinates) underlying the ordered-optimization problem.
%As before, let 
%Recall that $\vec{o}$ is the cost vector induced by an optimal solution, and 
So, $\iopt=\cost(w;\vec{o})$ is the optimal cost. 
The intuition underlying our proxy function comes from the observation 
that we can write %$\obj(w;v)$ as 
$\obj(w;\vv)=\sum_{i=1}^n(w_i-w_{i+1})\obj(i;\vv)$, where we define $w_{n+1}:=0$.
%can be written as a nonnegative linear
%combination of $\cost(\ell;v)$ terms. 
Plugging in the proxy functions for $\obj(i;\vv)$ in this expansion 
%discussed earlier then 
immediately leads to a
proxy function for $\obj(w;\vv)$. The $\obj(i;\vv)$ terms that appear with positive
coefficients in the above linear combination 
are those where $w_i>w_{i+1}$, i.e., corresponding to the breakpoints
of $w$. Thus, %naively using the above expansion 
%this will mean %indicates that %Thus, 
the proxy function that we obtain for ordered optimization 
will involve multiple $\thr$-thresholds, which are intended to be the estimates of the 
$\vo_i$ values corresponding to breakpoints.
However, we cannot afford to ``guess'' so many of these thresholds. 
An important step to make this work is to %our definition of the proxy function 
%first %involves 
%suitably 
first {\em sparsify} the weight vector $w$
to control the number of breakpoints, %(and hence thresholds we need to guess), 
and then %obtain the proxy function indicated by 
utilize the above expansion. 
As mentioned in~\Cref{sparsify}, while geometric
bucketing of weights would reduce the number of breakpoints for a single weight function, for our applications to
\minmax ordered optimization, we need the uniform way of sparsifying multiple weight
vectors, and we therefore use the sparsification procedure in~\Cref{sparsify}.
%to sparisfy $w$.

%hence the thresholds that we need to guess. 
%We now describe our sparsification procedure, and then the proxy cost function.  

%\medskip
Let $\dt,\e>0$ be parameters.
Let $\pset=\pset_{n,\dt}:=\{\min\{\ceil{(1+\dt)^s},n\}: s\geq 0\}$. 
Recall that $\next(\ell)$ is the smallest index in $\pset$
larger than $\ell$. For notational convenience, we define $\next(n):=n+1$, and for $\vv\in\R^n$,
define $\vv_{n+1}:=0$.
%if $\ell<n$, and is $n+1$ otherwise.  
We sparsify $w$ to $\tw\in\R^n$ by setting
$\tw_i=w_{i}$ if $i\in\pset$, and $\tw_i=w_{\next(\ell)}$ otherwise, 
where $\ell\in\pset$ is such that $\ell<i<\next(\ell)$.    
%
%By definition, there are at most $|\pset|=1+\ceil{\log_{1+\dt} n}=O(\log n/\dt)$ distinct weight
%values under $\tw$, corresponding to the indices $\ell\in\pset$. 
%Let $\vo$ be the cost vector corresponding to an optimal solution (under the original
%weights $w$) of the ordered optimization problem, whose coordinates are sorted in
%non-increasing order. So $\iopt=w^T\vo$ is the optimal cost.
%

Our proxy function is obtained by guessing (roughly speaking) the thresholds
$\vo_\ell$ for all $\ell\in\pset$ within a multiplicative $(1+\e)$ factor, and rewriting 
$\obj(\tw;\vv)$ in terms of these thresholds. 
%Recall that for $\ell\in\pset$, $\ell<n$, $\next(\ell)$ is the smallest index in $\pset$
%larger than $\ell$; 
%Define $\next(n)=n+1$, and $\tw_{n+1}=0$ for notational convenience.
Let $\vec{t}:=\{t_\ell\}_{\ell\in\pset}$ be a {\em threshold vector}. 
%For notational convenience, 
Define $\vt_{n+1}:=0$. %and $t_0:=\infty$. 
We say that $\vec{t}$ is {\em valid} if $t_\ell\geq t_{\next(\ell)}$ for all
$\ell\in\pset$. %(Since $t_{n+1}=0$, 
(So this implies that $\vec{t}\geq 0$.)
A valid threshold vector $\vec{t}$, defines the proxy function.
%$h_{\vec{t}}:\R_+\to\R_+$ by  
\begin{alignat}{1}
%\begin{gather}
\hspace*{-2.5ex}
\prox_{\vec{t}}(\tw;\vv) := 
\sum_{\ell\in\pset}\bigl(&\tw_\ell-\tw_{\next(\ell)}\bigr)
\Bigl[\ell\cdot t_\ell+\sum_{i=1}^n h_{t_\ell}(\vv_i)\Bigr] = 
\sum_{\ell\in\pset}\bigl(\tw_\ell-\tw_{\next(\ell)}\bigr)\ell\cdot t_\ell
+\sum_{i=1}^n h_{\vec{t}}~(\tw;\vv_i), \label{proxexpr} \\
\text{where,} \quad 
h_{\vec{t}}~(\tw;a) &:= \sum_{\ell\in\pset}\bigl(\tw_\ell-\tw_{\next(\ell)}\bigr)h_{t_\ell}(a) 
\label{hfun}
%\end{gather}
\end{alignat}
Note that the above proxy functions are strict generalizations of the case of the Top-$\ell$ optimization 
in which case $\pset = \{\ell\}$, and the weights are $1$ till $\ell$ and $0$ afterwards.

%=\begin{cases}
%\tw_1 x; & \text{if $x\geq t_1$} \\
%\tw_{\next(\ell)} x; & \text{if $x\in[t_{\next(\ell)},t_\ell)$ for $\ell\in\pset$}
%\end{cases}
%$$
Throughout the rest of this section, we work with the sparsified weight vector $\tw$. 
%and drop $\tw$ from $h_{\vec{t}}(\tw;x)$. (Recall that $\tw_{n+1}=0$.)
Observe that $h_{\vec{t}}~(\tw;x)$ is a continuous, piecewise-linear, non-decreasing
function of $x$. 
%with slope $\tw_1$ in the interval $[t_1,\infty)$ and slope $\tw_{\next(\ell)}$
%in the interval $[t_{\next(\ell)},t_\ell)$.
Our proxy for $\obj(\tw;\vv)$ will be the function $\prox_{\vec{t}}(\tw;\vv)$
%will be the expression
%\begin{equation}
%\prox_{\vec{t}}(\tw;v)\ :=\ 
%\sum_{\ell\in\pset}\bigl(\tw_\ell-\tw_{\next(\ell)}\bigr)\ell\cdot t_\ell
%+\sum_{i=1}^n h_{\vec{t}}(v_i). \label{proxexpr}
%\end{equation}
for a suitably chosen threshold vector $\vec{t}$.
%
%To understand the intuition underlying \eqref{proxexpr}, 
%To give some intuition, observe that
To explain the above definition, notice that \eqref{proxexpr} is 
the expression obtained by plugging in the proxy functions
($\ell\cdot\thr+\sum_{i=1}^n(v_i-\thr)^+$) defined for the
$\cost(\ell;\bullet)$-objectives in the expansion of $\obj(\tw;v)$ as a linear 
combination of $\obj(\ell;v)$ terms. %(see Claim~\ref{topltoord}).
%We now prove some results relating $\obj(\tw;v)$ and $\prox_{\vec{t}}(\tw;v)$.   

\begin{comment}
To explain the above definition, as noted earlier, $\cost(\tw;\bullet)$ is a linear
combination of $\cost(\ell;\bullet)$ terms (Claim~\ref{topltoord}), and 
we have defined
$\prox_{\vec{t}}(\tw;\bullet)$ by applying this nonnegative linear combination to 
the proxy functions ($\ell\cdot\thr+\sum_{i=1}^n(v_i-\thr)^+$) that were defined for
the $\cost(\ell;\bullet)$-objectives. 
%where this linear combination is 
%precisely the linear combination needed to express 
Notice that the first term in \eqref{proxexpr} is a constant depending only on ($\tw$
and) the threshold vector. 
\end{comment}

\begin{claim} \label{topltoord}
	%	Let $\vec{t}\in\R^\pset$ be a valid threshold vector.
	For any $\vv\in\R^n$, we have 
	$\obj(\tw;\vv)=\sum_{\ell\in\pset}\bigl(\tw_\ell-\tw_{\next(\ell)}\bigr)\obj(\ell;\vv)$.
\end{claim}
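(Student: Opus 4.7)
The plan is to apply Abel summation (summation by parts) to the expression $\obj(\tw;\vv)=\sum_{i=1}^n \tw_i\vv^{\down}_i$. Using the conventions $\tw_{n+1}:=0$ and $\vv^{\down}_{n+1}:=0$, Abel summation gives
\[
\obj(\tw;\vv)\ =\ \sum_{i=1}^n \tw_i\vv^{\down}_i\ =\ \sum_{i=1}^n(\tw_i-\tw_{i+1})\sum_{j=1}^i\vv^{\down}_j\ =\ \sum_{i=1}^n(\tw_i-\tw_{i+1})\obj(i;\vv),
\]
since the partial sums of the sorted coordinates are exactly the top-$\ell$ values. So it only remains to identify which coefficients $\tw_i-\tw_{i+1}$ survive and to rewrite them in terms of $\next(\cdot)$.

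Next I would use the structural fact about $\tw$ from Section~\ref{sparsify}: for consecutive positions $\ell<\next(\ell)$ in $\pset$, the construction sets $\tw_i=w_{\next(\ell)}=\tw_{\next(\ell)}$ for every $i$ with $\ell<i<\next(\ell)$, and $\tw_{\next(\ell)}=w_{\next(\ell)}$ as well. Thus $\tw$ is \emph{constant} on the block $\{\ell+1,\ldots,\next(\ell)\}$. Consequently $\tw_i-\tw_{i+1}=0$ for every $i\notin\pset$ (with $i<n$), and the above sum collapses to a sum over $i\in\pset$.

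Finally, for each $\ell\in\pset$ with $\ell<n$, the index $\ell+1$ lies inside the block $\{\ell+1,\ldots,\next(\ell)\}$ on which $\tw$ is constant with value $\tw_{\next(\ell)}$; hence $\tw_{\ell+1}=\tw_{\next(\ell)}$, so $\tw_\ell-\tw_{\ell+1}=\tw_\ell-\tw_{\next(\ell)}$. For the boundary case $\ell=n\in\pset$, the conventions $\next(n):=n+1$ and $\tw_{n+1}:=0$ give $\tw_n-\tw_{n+1}=\tw_n-\tw_{\next(n)}$ directly. Combining these two observations yields
\[
\obj(\tw;\vv)\ =\ \sum_{\ell\in\pset}\bigl(\tw_\ell-\tw_{\next(\ell)}\bigr)\obj(\ell;\vv),
\]
as claimed. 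There is essentially no obstacle here; the only subtleties are bookkeeping at the boundary $\ell=n$ and for the degenerate case $\next(\ell)=\ell+1$ (both of which are handled uniformly by the conventions above), so I would present the argument succinctly in this order: Abel summation, vanishing of non-$\pset$ coefficients by piecewise-constancy of $\tw$, and the identification $\tw_{\ell+1}=\tw_{\next(\ell)}$.
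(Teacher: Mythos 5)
Your proof is correct and is essentially the paper's own argument: the paper writes $\tw_i=\sum_{\ell=i}^n(\tw_\ell-\tw_{\ell+1})$ and swaps the order of summation, which is exactly the Abel summation you invoke, and then it collapses the sum to $\pset$ using the same two facts ($\tw_\ell=\tw_{\ell+1}$ for $\ell\notin\pset$, and $\tw_{\ell+1}=\tw_{\next(\ell)}$ for $\ell\in\pset$). The only inessential difference is cosmetic: you cite ``Abel summation'' and add an unused convention $\vv^{\down}_{n+1}:=0$, whereas the paper performs the interchange of sums explicitly.
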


\begin{proof}
	We have
	\begin{equation*}
	%\begin{split}
	\obj(\tw;\vv)=\sum_{i=1}^n\tw_i\vv^{\down}_i
	=\sum_{i=1}^n\sum_{\ell=i}^n(\tw_\ell-\tw_{\ell+1})\vv^{\down}_i
	=\sum_{\ell=1}^n(\tw_\ell-\tw_{\ell+1})\sum_{i=1}^\ell \vv^{\down}_i \\
	= \sum_{\ell\in\pset}\bigl(\tw_\ell-\tw_{\next(\ell)}\bigr)\obj(\ell;\vv).
	%\end{split} %\label{pfeq1}
	\end{equation*}
	The last equality follows since $\tw_\ell=\tw_{\ell+1}$ for all $\ell\in[n]\sm\pset$, and 
	$\tw_{\ell+1}=\tw_{\next(\ell)}$ for $\ell\in\pset$.
\end{proof}

\begin{comment}
We now prove some results relating $\obj(\tw;v)$ and $\prox_{\vec{t}}(\tw;v)$. 
We begin by comparing $\obj(\ell;v)$ and the expression $\ell\cdot
t+\sum_{i=1}^n(v_i-t)^+$, which serves as the proxy for the \topl optimization problem. 

\begin{claim} \label{topllb}
For any $\ell\in[n]$, any $v\in\R^n$, and any $t\in\R$, we have
$\obj(\ell;v)\leq\ell\cdot t+\sum_{i=1}^n(v_i-t)^+$.
\end{claim}

\begin{proof}
We have 
$\obj(\ell;v)=\sum_{i=1}^\ell v^{\down}_i
\leq\ell\cdot t+\sum_{i=1}^\ell(v^{\down}_i-t)^+
\leq\ell\cdot t+\sum_{i=1}^n(v^{\down}_i-t)^+$. 
%\qedhere
\end{proof}

\begin{claim} \label{toplopt}
Let $\ell\in[n]$, and $t$ be such that 
$\vo_\ell\leq t\leq(1+\e)\vo_\ell$. Then 
$\ell\cdot t+\sum_{i=1}^n(\vec{o}_i-t)^+\leq(1+\e)\obj(\ell;\vec{o})$. 
\end{claim}

\begin{proof}
We have $\sum_{i=1}^n(\vec{o}_i-t)^+\leq\sum_{i=1}^\ell(\vo_i-\vo_\ell)$.
Since $t\leq(1+\e)\vo_\ell$, we have 
$\ell\cdot t+\sum_{i=1}^n(\vec{o}_i-t)^+
\leq(1+\e)\sum_{i=1}^\ell\bigl(\vo_\ell+(\vo_i-\vo_\ell)\bigr)=(1+\e)\obj(\ell;\vec{o})$. 
\end{proof}
\end{comment}

%Combining Claim~\ref{topltoord} with Claims~\ref{topllb} and~\ref{toplopt} yields
%the following corollaries relating $\obj(\tw;v)$ and $\prox(\tw;v)$.

\begin{claim} \label{proxylb}
	For any valid threshold vector $\vec{t}\in\R^{\pset}$, 
	%with $t_1\geq t_2\geq\ldots t_n\geq 0$,
	and any $\vv\in\R^n$, we have 
	$\obj(\tw;\vv)\leq\prox_{\vec{t}}~(\tw;\vv)$.
	%=
	%\sum_{\ell\in\pset}\bigl(\tw_\ell-\tw_{\next(\ell)}\bigr)\ell\cdot t_\ell+\sum_{i=1}^n h_{\vec{t}}(v_i)$.
\end{claim}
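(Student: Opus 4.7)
The plan is to chain together two previously established facts: the decomposition of the ordered cost $\obj(\tw;\vv)$ as a nonnegative linear combination of top-$\ell$ costs (\Cref{topltoord}), and the pointwise upper bound on $\obj(\ell;\vv)$ by the top-$\ell$ proxy expression $\ell\cdot t_\ell + \sum_i h_{t_\ell}(\vv_i)$ (\Cref{topllb}). The validity of $\vec{t}$ is actually not needed at all for this direction of the inequality; all that matters is that the coefficients multiplying the top-$\ell$ bounds are nonnegative.

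First I would invoke \Cref{topltoord} to write
\[
\obj(\tw;\vv) \;=\; \sum_{\ell\in\pset}\bigl(\tw_\ell-\tw_{\next(\ell)}\bigr)\,\obj(\ell;\vv).
\]
The key observation is that each coefficient $\tw_\ell-\tw_{\next(\ell)}$ is nonnegative, since $\tw$ has non-increasing coordinates and $\next(\ell)>\ell$. This is what lets us do a termwise comparison without worrying about signs.

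Next I would apply \Cref{topllb} with threshold $\thr=t_\ell$ to each term, giving
\[
\obj(\ell;\vv) \;\leq\; \ell\cdot t_\ell + \sum_{i=1}^n h_{t_\ell}(\vv_i)
\quad\text{for every } \ell\in\pset.
\]
Multiplying by the nonnegative factor $\tw_\ell-\tw_{\next(\ell)}$ and summing over $\ell\in\pset$ yields exactly the expression $\prox_{\vec{t}}(\tw;\vv)$ as defined in \eqref{proxexpr}, completing the proof.

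There is essentially no obstacle: the proof is a two-line consequence of the prior claims. The only thing worth noting (to forestall confusion for the reader) is that the validity condition $t_\ell\geq t_{\next(\ell)}$ plays no role in establishing this lower-bound direction; its role will only appear in the complementary upper-bound result, where one needs the thresholds to be monotone in order to relate $\prox_{\vec{t}}(\tw;\vec{o})$ back to $(1+\e)\obj(\tw;\vec{o})$ via \Cref{toplopt}.
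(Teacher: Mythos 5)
Your proof follows exactly the same route as the paper's: expand $\obj(\tw;\vv)$ via \Cref{topltoord}, then apply \Cref{topllb} termwise with $\thr=t_\ell$ and use nonnegativity of the coefficients $\tw_\ell-\tw_{\next(\ell)}$. The extra remark that validity of $\vec{t}$ is not actually used here is correct and a reasonable clarification, but otherwise the argument coincides with the paper's.
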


\begin{proof}
	We have $\prox_{\vec{t}}~(\tw;\vv)= %\sum_{i=1}^n h_{\vec{t}}(v_i)
	\sum_{\ell\in\pset}\bigl(\tw_\ell-\tw_{\next(\ell)}\bigr)\bigl(\ell\cdot t_\ell+\sum_{i=1}^n h_{t_\ell}(\vv_i)\bigr)$.
	The statement now follows by combining~\Cref{topltoord} and~\Cref{topllb}, taking $t=t_\ell$ for each $\ell\in\pset$.
\end{proof}

\begin{comment}
The proof follows from the fact that $\obj(\tw;v)$ can be written as a 
$(\tw_\ell-\tw_{\next(\ell)})$-weighted linear combination of the
$\obj(\ell;v)$ objective functions,
%$\obj\bigl((\undebrace{1,\ldots,1}_\ell,\;\underbrace{0,\ldots,0}_{n-\ell});v\bigr)$
where $\ell$ ranges over $\pset$, and for each such $\ell$, we have
%$\obj\bigl((\undebrace{1,\ldots,1}_\ell,\;\underbrace{0,\ldots,0}_{n-\ell});v\bigr)
$\obj(\ell;v)\leq\ell\cdot t_\ell+\sum_{i=1}^n(v_i-t_\ell)^+$.
%To see this, we simply expand out $\obj(\tw;v)$. 
We have
\begin{equation}
\begin{split}
\obj(\tw;v)=\sum_{i=1}^n\tw_iv^{\down}_i
& =\sum_{i=1}^n\sum_{\ell=i}^n(\tw_\ell-\tw_{\ell+1})v^{\down}_i
=\sum_{\ell=1}^n(\tw_\ell-\tw_{\ell+1})\sum_{i=1}^\ell v^{\down}_i \\
& = \sum_{\ell\in\pset}\bigl(\tw_\ell-\tw_{\next(\ell)}\bigr)\sum_{i=1}^\ell v^{\down}_i
\qquad \quad \text{(since $\tw_\ell=\tw_{\ell+1}$ for all $\ell\in[n]\sm\pset$, and 
$\tw_{\ell+1}=\tw_{\next(\ell)}$ for $\ell\in\pset$)}.
\end{split} \label{pfeq1}
\end{equation}
%For $\ell\in\pset$, 
The term $\sum_{i=1}^\ell v^{\down}_i$ is precisely $\obj(\ell;v)$, and is at most
$\ell\cdot t_\ell+\sum_{i=1}^\ell(v^{\down}_i-t_\ell)^+
\leq\ell\cdot t_\ell+\sum_{i=1}^n(v^{\down}_i-t_\ell)^+$.
Plugging this bound into \eqref{pfeq1} yields 
$\obj(\tw;v)\leq\sum_{\ell\in\pset}\bigl(\tw_\ell-\tw_{\next(\ell)}\bigr)\ell\cdot t_\ell
+\sum_{i=1}^n h_{\vec{t}}(v^{\down}_i)$.
Finally, noting that $\sum_{i=1}^n h_{\vec{t}}(v^{\down}_i)=\sum_{i=1}^n h_{\vec{t}}(v_i)$
yields the stated bound.
\end{comment}

\begin{claim} \label{proxysimp}
	Let $\vec{t}\in\R^{\pset}$ be a valid threshold vector such that 
	$\vo_\ell\leq t_\ell\leq(1+\e)\vo_\ell$ for all $\ell\in\pset$.
	Then, $\prox_{\vec{t}}~(\tw;\vec{o})\leq(1+\e)\obj(\tw;\vec{o})$.
\end{claim}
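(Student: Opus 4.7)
The plan is to reduce \Cref{proxysimp} directly to the two building blocks already established: the decomposition of $\obj(\tw;\vv)$ in \Cref{topltoord} and the single-threshold upper bound in \Cref{toplopt}. Rewriting the proxy as
\[
\prox_{\vec{t}}(\tw;\vec{o})
= \sum_{\ell\in\pset}\bigl(\tw_\ell-\tw_{\next(\ell)}\bigr)\Bigl[\ell\cdot t_\ell+\sum_{i=1}^n h_{t_\ell}(\vec{o}_i)\Bigr],
\]
the natural approach is to bound the bracketed quantity for each $\ell\in\pset$ individually, and then recombine using the same non-negative coefficients $(\tw_\ell-\tw_{\next(\ell)})$ that appear in the identity $\obj(\tw;\vec{o})=\sum_{\ell\in\pset}(\tw_\ell-\tw_{\next(\ell)})\obj(\ell;\vec{o})$ from \Cref{topltoord}.

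Concretely, I would proceed as follows. First, for each $\ell\in\pset$, the hypothesis $\vo_\ell\leq t_\ell\leq(1+\e)\vo_\ell$ is exactly what is needed to invoke \Cref{toplopt}, which yields $\ell\cdot t_\ell+\sum_{i=1}^n h_{t_\ell}(\vec{o}_i)\leq (1+\e)\obj(\ell;\vec{o})$. Second, I would observe that because $\tw$ is non-increasing (this is preserved by the sparsification construction in \Cref{sparsify}, since $w$ is non-increasing and $\tw_i$ equals $w$ at either $i$ or $\next(\ell)$ for the appropriate $\ell$), the coefficients $(\tw_\ell-\tw_{\next(\ell)})$ are all non-negative. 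Therefore multiplying each per-$\ell$ inequality by this coefficient preserves the direction, and summing over $\ell\in\pset$ gives
\[
\prox_{\vec{t}}(\tw;\vec{o}) \;\leq\; (1+\e)\sum_{\ell\in\pset}\bigl(\tw_\ell-\tw_{\next(\ell)}\bigr)\obj(\ell;\vec{o}) \;=\; (1+\e)\,\obj(\tw;\vec{o}),
\]
where the last equality is \Cref{topltoord} applied to $\vec{o}$.

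There is no real obstacle here: the decomposition \Cref{topltoord} aligns perfectly with the definition of the proxy in \eqref{proxexpr}, and the per-coordinate estimate in \Cref{toplopt} is exactly what the threshold hypothesis enables. The only subtlety worth flagging is the non-negativity of the coefficients $\tw_\ell-\tw_{\next(\ell)}$, which is why validity of $\vec{t}$ is not actually needed for this particular bound (it will be needed elsewhere, e.g.\ for comparing $\prox_{\vec{t}}$ to $\obj(\tw;\bullet)$ from above as in \Cref{proxylb}). Thus the proof is a clean composition of \Cref{topltoord} and \Cref{toplopt}.
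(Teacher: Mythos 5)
Your proof is correct and follows exactly the paper's own argument: expand $\prox_{\vec{t}}(\tw;\vec{o})$ via \eqref{proxexpr}, apply \Cref{toplopt} term-by-term with $t=t_\ell$, and recombine using \Cref{topltoord}. Your side remark that the non-negativity of $\tw_\ell-\tw_{\next(\ell)}$ (rather than validity of $\vec{t}$) is what makes the recombination go through is a correct and slightly sharper observation than what the paper states.
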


\begin{proof}
	%The statement follows by plugging in the definition of $\prox_{\vec{t}}(\tw;\vo)$,
	We have $\prox_{\vec{t}}~(\tw;\vec{o})=\sum_{\ell\in\pset}\bigl(\tw_\ell-\tw_{\next(\ell)}\bigr)
	\bigl(\ell\cdot t_\ell+\sum_{i=1}^n h_{t_\ell}(\vo_i)\bigr)$.
	%$h_{\vec{t}}$, 
	The statement now follows by combining~\Cref{toplopt}, where we take $t=t_\ell$ for
	each $\ell\in\pset$, and~\Cref{topltoord}.
\end{proof}

\Cref{proxylb} and~\Cref{proxysimp} imply that: (1) if we can obtain in polytime a
valid threshold vector $\vec{t}\in\R^\pset$ satisfying the conditions of~\Cref{proxysimp}, and (2) obtain a cost vector $v$ that approximately minimizes
$\sum_{i=1}^nh_{\vec{t}}~(v_i)$, then we would obtain an approximation guarantee for the
ordered-optimization problem. We will not quite be able to satisfy (1). Instead, we
will obtain thresholds that will satisfy a somewhat weaker condition (see~\Cref{proxyopt}), which we show is still sufficient.
The following claim, whose proof is in~\Cref{append-ordproxy}, will be useful.

\begin{claim} \label{proxynear}
	Let $\vec{t},\vec{\tdt}\in\R^{\pset}$ be two valid threshold vectors with
	$\vec{t}\leq\vec{\tdt}$ and
	$\|\vec{t}-\vec{\tdt}\|_\infty\leq\Dt$. Then, for any $\vv\in\R^n$, we have 
	$\bigl|\prox_{\vec{t}}~(\tw;\vv)-\prox_{\vec{\tdt}}~(\tw;\vv)\bigr|\leq n\tw_1\Dt$.
\end{claim}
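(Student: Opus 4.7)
The plan is to write the difference $\prox_{\vec{\tdt}}~(\tw;\vv)-\prox_{\vec{t}}~(\tw;\vv)$ explicitly using the defining expression \eqref{proxexpr}, split it into two pieces with opposite signs, bound each piece separately, and then observe that the cancellation gives the desired $n\tw_1\Dt$ bound.

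Concretely, since $\vec{t}\leq\vec{\tdt}$, setting $\Delta_\ell := \tdt_\ell - t_\ell \in [0,\Dt]$, expanding \eqref{proxexpr} yields
\[
\prox_{\vec{\tdt}}~(\tw;\vv)-\prox_{\vec{t}}~(\tw;\vv)=A-B,
\]
where $A := \sum_{\ell\in\pset}(\tw_\ell-\tw_{\next(\ell)})\,\ell\,\Delta_\ell$ and $B := \sum_{\ell\in\pset}(\tw_\ell-\tw_{\next(\ell)})\sum_{i=1}^n\bigl(h_{t_\ell}(\vv_i)-h_{\tdt_\ell}(\vv_i)\bigr)$. The key point is that \emph{both $A$ and $B$ are nonnegative}: $A\ge 0$ because $\tw$ is non-increasing and $\Delta_\ell\ge 0$, while $B\ge 0$ because $h_\rho(z)=(z-\rho)^+$ is non-increasing in $\rho$. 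Therefore $A-B\in[-B,\,A]$ and it suffices to show $A\le n\tw_1\Dt$ and $B\le n\tw_1\Dt$.

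For $A$, first use $\Delta_\ell\le\Dt$ to reduce to bounding $\sum_{\ell\in\pset}(\tw_\ell-\tw_{\next(\ell)})\,\ell$. I would apply Abel/summation-by-parts: writing $\pset=\{p_1<p_2<\cdots<p_k\}$ with $p_1=1$ and using $\tw_{n+1}=0$, this telescopes to $\tw_1+\sum_{j\ge 2}\tw_{p_j}(p_j-p_{j-1})\le\tw_1\bigl(1+(p_k-p_1)\bigr)\le n\tw_1$, giving $A\le n\tw_1\Dt$. For $B$, the Lipschitz-type bound $0\le h_{t_\ell}(z)-h_{\tdt_\ell}(z)\le\Delta_\ell\le\Dt$ (easy case analysis on whether $z\le t_\ell$, $t_\ell<z<\tdt_\ell$, or $z\ge\tdt_\ell$) yields $\sum_{i=1}^n\bigl(h_{t_\ell}(\vv_i)-h_{\tdt_\ell}(\vv_i)\bigr)\le n\Dt$, so $B\le n\Dt\sum_{\ell\in\pset}(\tw_\ell-\tw_{\next(\ell)})=n\Dt\cdot\tw_1$, where the last equality is the telescoping sum (again using $\tw_{n+1}=0$).

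Putting these together, $|A-B|\le\max(A,B)\le n\tw_1\Dt$, which is exactly the claim. There is no real obstacle here; the only thing to be careful about is \emph{not} bounding the two contributions term-by-term over $\ell$ (which would lose a factor of $2$), but rather exploiting the sign structure so that $A$ and $B$ partially cancel. The two elementary ingredients are the Lipschitz estimate on $h_\rho$ in $\rho$ and the telescoping identity $\sum_{\ell\in\pset}(\tw_\ell-\tw_{\next(\ell)})=\tw_1$.
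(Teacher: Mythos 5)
Your proof is correct and follows the same strategy as the paper: decompose $\prox_{\vec{\tdt}}-\prox_{\vec{t}}$ as $A-B$ with $A$ coming from the constant term and $B$ from the $h$-term, note both are nonnegative, and bound each separately by $n\tw_1\Dt$. Your Abel-summation step for $A$ can be shortened to simply $\sum_{\ell\in\pset}(\tw_\ell-\tw_{\next(\ell)})\ell\le n\sum_{\ell\in\pset}(\tw_\ell-\tw_{\next(\ell)})=n\tw_1$, but the argument is the same.
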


\begin{lemma} \label{proxyopt}
	Let $\vec{t}\in\R^{\pset}$ be a valid threshold vector satisfying the following for all
	$\ell\in\pset$: $\vo_\ell\leq t_\ell\leq(1+\e)\vo_\ell$ if
	$\vo_\ell\geq\frac{\e\vo_1}{n}$, and $t_\ell=0$ otherwise.
	%$\vo_\ell\leq t_\ell\leq\frac{\e t_1}{n}$ otherwise.
	Then, 
	$$
	\prox_{\vec{t}}~(\tw;\vec{o}) = 
	\sum_{\ell\in\pset}\bigl(\tw_\ell-\tw_{\next(\ell)}\bigr)\ell\cdot t_\ell
	+\sum_{i=1}^nh_{\vec{t}}~(\tw;\vec{o}_i)\leq (1+2\e)\obj(\tw;\vec{o}).
	$$
\end{lemma}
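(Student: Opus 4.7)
The plan is to introduce an auxiliary valid threshold vector $\vec{\tdt}\in\R^\pset$ that satisfies the hypotheses of Claim~\ref{proxysimp}, so that $\prox_{\vec{\tdt}}(\tw;\vec{o})\leq(1+\e)\obj(\tw;\vec{o})$ holds on the nose, and then bound the gap between $\prox_{\vec{t}}(\tw;\vec{o})$ and $\prox_{\vec{\tdt}}(\tw;\vec{o})$ using Claim~\ref{proxynear}. Since both $\prox_{\vec{t}}(\tw;\vv)$ (whose $\vv$-dependence is only through the coordinate-wise sum $\sum_i h_{\vec{t}}(\tw;\vv_i)$) and $\obj(\tw;\vv)$ are symmetric in the entries of $\vv$, I may assume without loss of generality that $\vec{o}$ is sorted in non-increasing order, so that $\vo_\ell$ denotes the $\ell$-th largest coordinate and $\vo_\ell\leq\vo_1$ for all $\ell$.

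Concretely, I will define $\tdt_\ell:=t_\ell$ for those $\ell\in\pset$ with $\vo_\ell\geq\e\vo_1/n$, and $\tdt_\ell:=\vo_\ell$ otherwise. Since $\vec{o}$ is non-increasing, if $\vo_\ell<\e\vo_1/n$ then $\vo_{\next(\ell)}<\e\vo_1/n$ as well, so a short case analysis across the ``high/low'' status of $\ell$ and $\next(\ell)$ shows that $\vec{\tdt}$ is valid: in the high--high and low--low cases the monotonicity is inherited from $\vec{t}$ or from $\vec{o}$, and in the high--low case we have $\tdt_\ell=t_\ell\geq\vo_\ell\geq\e\vo_1/n>\vo_{\next(\ell)}=\tdt_{\next(\ell)}$. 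Moreover, $\vo_\ell\leq\tdt_\ell\leq(1+\e)\vo_\ell$ holds for every $\ell\in\pset$ (trivially so when $\vo_\ell=0$), so Claim~\ref{proxysimp} applied to $\vec{\tdt}$ yields $\prox_{\vec{\tdt}}(\tw;\vec{o})\leq(1+\e)\obj(\tw;\vec{o})$.

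To finish, note that $\vec{t}\leq\vec{\tdt}$ coordinatewise, since the two vectors agree at indices with $\vo_\ell\geq\e\vo_1/n$, while at the remaining indices we have $t_\ell=0\leq\vo_\ell=\tdt_\ell$. At those same indices $\tdt_\ell-t_\ell=\vo_\ell<\e\vo_1/n$, so $\|\vec{\tdt}-\vec{t}\|_\infty\leq\e\vo_1/n$. Applying Claim~\ref{proxynear} gives
\[
\bigl|\prox_{\vec{\tdt}}(\tw;\vec{o})-\prox_{\vec{t}}(\tw;\vec{o})\bigr|
\ \leq\ n\tw_1\cdot\tfrac{\e\vo_1}{n}\ =\ \e\tw_1\vo_1\ \leq\ \e\,\obj(\tw;\vec{o}),
\]
where the last inequality uses $\obj(\tw;\vec{o})=\sum_i\tw_i\vo_i\geq\tw_1\vo_1$. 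Combining the two bounds yields $\prox_{\vec{t}}(\tw;\vec{o})\leq\prox_{\vec{\tdt}}(\tw;\vec{o})+\e\,\obj(\tw;\vec{o})\leq(1+2\e)\obj(\tw;\vec{o})$, which is the desired inequality.

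The only place that requires genuine care is showing that the ``discarded mass'' at low indices contributes at most $\e\,\obj(\tw;\vec{o})$: a priori, Claim~\ref{proxynear} produces a bound scaling with $n\tw_1\cdot\e\vo_1/n$, and the cancellation of $n$ against the per-coordinate threshold $\e\vo_1/n$ (which is precisely why the cutoff was chosen to scale with $\vo_1/n$) is what makes this slack absorbable into a relative $\e$-factor rather than producing an additive error.
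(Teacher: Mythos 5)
Your proof is correct and follows essentially the same route as the paper's: you introduce the same auxiliary threshold vector $\vec{\tdt}$ (equal to $\vec{t}$ at high indices and to $\vo_\ell$ at low ones), apply Claim~\ref{proxysimp} to $\vec{\tdt}$, then use Claim~\ref{proxynear} together with $\|\vec{\tdt}-\vec{t}\|_\infty\leq\e\vo_1/n$ and $\tw_1\vo_1\leq\obj(\tw;\vec{o})$ to absorb the gap into a relative $\e$-factor. You simply spell out two details the paper leaves implicit, namely the validity of $\vec{\tdt}$ and the last inequality $\e\tw_1\vo_1\leq\e\,\obj(\tw;\vec{o})$.
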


\begin{proof}
	For $\ell\in\pset$, define $\tdt_\ell=t_\ell$ if $\vo_\ell\geq\frac{\e\vo_1}{n}$, and
	$\tdt_\ell=\vo_\ell$ otherwise. 
	%Clearly, $\tdt_\ell\leq t_\ell\leq\tdt_\ell+\frac{\e t_1}{n}$, and so
	Clearly, $\vec{t}\leq\vec{\tdt}$ and
	$\|\vec{t}-\vec{\tdt}\|_\infty\leq\frac{\e\vo_1}{n}$, so by~\Cref{proxynear},
	we have $\prox_{\vec{t}}~(\tw;\vo)\leq\prox_{\vec{\tdt}}~(\tw;\vo)+\e\tw_1\vo_1$.
	%Clearly $\vec{\tdt}\leq\vec{t}$, and so $h_{\vec{t}}(x)\leq h_{\vec{\tdt}}(x)$ for all $x$. 
	%Therefore, 
	%%
	%\begin{equation}
	%\prox_{\vec{t}}(\tw;\vo)\leq\prox_{\vec{\tdt}}(\tw;\vo)
	%+\sum_{\ell\in\pset:\vo_\ell<\frac{\e\vo_1}{n}}(\tw_\ell-\tw_{\next(\ell)})\ell\cdot\tfrac{\e t_1}{n}.
	%\label{proxyoptineq1}
	%\end{equation}
	%%
	The threshold vector $\vec{\tdt}$ satisfies the conditions of~\Cref{proxysimp}, so
	$\prox_{\vec{\tdt}}~(\tw;\vec{o})\leq(1+\e)\obj(\tw;\vec{o})$. 
	%The second term in
	%\eqref{proxyoptineq1} is at most 
	%$\tw_1\cdot\e t_1\leq\e(1+\e)\tw_1\vo_1\leq 2\e\obj(\tw;\vo)$.
	So $\prox_{\vec{t}}~(\tw;\vec{o})\leq(1+2\e)\obj(\tw;\vec{o})$. %+\e\cdot\obj(\tw;\vo)$.
\end{proof}

\begin{lemma}[Polytime enumeration of threshold vectors] \label{polyguess}
	Suppose that we can obtain in polynomial time a (polynomial-size) set $S\sse\R$ containing
	a value $\rho$ satisfying $\vo_1\leq \rho\leq(1+\e)\vo_1$. 
	Then, in time $O\bigl(|S|\cdot|\pset|\cdot\max\{(\frac{n}{\e})^{O(1/\e)},n^{1/\dt}\}\bigr)
	=O\bigl(|S|\max\{(\frac{n}{\e})^{O(1/\e)},n^{O(1/\dt)}\}\bigr)$, we can obtain a set 
	$A\sse\R_+^\pset$ that contains a valid threshold vector $\vec{t}$ satisfying the conditions 
	of~\Cref{proxyopt}. 
	
	If $\vec{o}$ is integral, $\vo_1>0$, and $\rho$ is a power of $(1+\e)$, then this $\vec{t}$
	satisfies: for every $\ell\in\pset$, either $t_\ell=0$ or $t_\ell\geq 1$ and is a power of
	$(1+\e)$. 
	%#improv
\end{lemma}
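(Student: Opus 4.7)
The plan is to enumerate, for each guess $\rho\in S$, a discretized set of candidate coordinate values and then form the $\pset$-indexed product restricted to non-increasing sequences. For each $\rho\in S$, I would take
\[
V_\rho \;:=\; \{0\}\,\cup\,\Bigl\{(1+\e)^s : s\in\Z,\ \tfrac{\e\rho}{(1+\e)n}\leq(1+\e)^s\leq(1+\e)\rho\Bigr\},
\]
so that $|V_\rho|=O(\log(n/\e)/\e)$, and let $A_\rho\subseteq V_\rho^\pset$ be the set of all sequences $\vec{t}$ with $t_\ell\geq t_{\next(\ell)}$ for every $\ell\in\pset$ (i.e., valid threshold vectors). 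I would then return $A:=\bigcup_{\rho\in S}A_\rho$.

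To argue $A$ contains a witness, fix $\rho\in S$ with $\vo_1\leq\rho\leq(1+\e)\vo_1$ (which exists by hypothesis), and for each $\ell\in\pset$ define $t_\ell$ to be the smallest power of $(1+\e)$ that is at least $\vo_\ell$ when $\vo_\ell\geq\e\vo_1/n$, and $t_\ell=0$ otherwise. In the first case $t_\ell\in[\vo_\ell,(1+\e)\vo_\ell]$, which is the bound required by \Cref{proxyopt}; moreover $\e\rho/((1+\e)n)\leq\e\vo_1/n\leq t_\ell\leq(1+\e)\vo_\ell\leq(1+\e)\vo_1\leq(1+\e)\rho$, so $t_\ell\in V_\rho$. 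Validity follows because $\vo$ is non-increasing (and rounding up to powers of $(1+\e)$ preserves monotonicity) and because $\{\ell\in\pset:\vo_\ell<\e\vo_1/n\}$ is an upward-closed suffix of $\pset$ on which $t_\ell=0$ uniformly. Hence $\vec{t}\in A_\rho\subseteq A$.

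For the size and time bound I would apply \Cref{nondec}: the number of non-increasing $\pset$-indexed sequences drawn from $|V_\rho|$ values is at most $(2e)^{\max(|\pset|,|V_\rho|)}$. Plugging $|\pset|=O(\log n/\dt)$ and $|V_\rho|=O(\log(n/\e)/\e)$ gives $\max\{(n/\e)^{O(1/\e)},n^{O(1/\dt)}\}$, and multiplying by $|S|$ and an $O(|\pset|)$ overhead per sequence yields the claimed total. The construction is purely enumerative, so I do not anticipate a serious obstacle; the only delicate point is choosing the endpoints of $V_\rho$ with enough multiplicative slack (the $1/(1+\e)$ and $(1+\e)$ factors) so that the desired rounded threshold lies inside $V_\rho$ even though $\rho$ is only an approximation of $\vo_1$.

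For the integer case, when $\vec{o}$ is integral with $\vo_1\geq 1$ and $\rho$ is a power of $(1+\e)$, I would restrict $V_\rho$ to $\{0\}\cup\{(1+\e)^s : s\geq 0,\ (1+\e)^s\leq(1+\e)\rho\}$. Any $\vo_\ell$ with $\vo_\ell\geq\e\vo_1/n$ is then a positive integer, hence $\geq 1$, and its round-up to a power of $(1+\e)$ is therefore itself $\geq 1$ and lies in this restricted set; the preceding witness argument then goes through unchanged.
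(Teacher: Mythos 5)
Your proposal is correct and follows essentially the same approach as the paper's proof: discretize the candidate threshold values to a geometric grid of size $O(\log(n/\e)/\e)$, enumerate all non-increasing $\pset$-indexed sequences over that grid, and invoke~\Cref{nondec} for the size bound. The one minor difference is that the paper anchors its grid at $\rho$ itself (taking $t_1=\rho$ and $t_\ell=\rho/(1+\e)^j$, and explicitly guessing the breakpoint index $\ell^*$), whereas you use absolute powers of $(1+\e)$ and let the all-zero suffix emerge implicitly from the non-increasing constraint; both yield the same asymptotics, and both witness arguments (round each $\vo_\ell$ up to the nearest grid point) go through.
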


\begin{proof}
	We first guess the largest index $\ell^*\in\pset$ such that
	$\vo_\ell\geq\frac{\e\vo_1}{n}$. For each such $\ell^*$, and each $t_1\in S$, we do the
	following. We guess $t_\ell$ for $\ell\in\pset, 2\leq\ell\leq\ell^*$, where all the
	$t_\ell$s are of the form $t_1/(1+\e)^j$ for some integer $j\geq 0$ and are at least 
	$\frac{\e t_1}{n(1+\e)}$, and the $j$-exponents are non-decreasing with $\ell$. 
	For $\ell\in\pset$ with $\ell>\ell^*$, we set 
	$t_\ell=0$, %\min\bigl\{t_{\ell^*},\frac{\e t_1}{n}\bigr\}$, 
	and add the resulting threshold vector $\vec{t}$ to $A$. Note that there are at most 
	$1+\log_{1+\e}\bigl(\frac{n}{\e}\bigr)=O\bigl(\frac{1}{\e}\log\frac{n}{\e}\bigr)$ choices
	for the exponent $j$. So since we need to guess a non-decreasing sequence of at most 
	$|\pset|=O(\log n/\dt)$ exponents from a range of size 
	$O\bigl(\frac{1}{\e}\log\frac{n}{\e}\bigr)$, there are only
	$\exp\bigl(\max\{O(\frac{1}{\e}\log(\frac{n}{\e})),|\pset|\}\bigr)=
	O\bigl(\max\{(\frac{n}{\e})^{O(1/\e)},n^{1/\dt}\}\bigr)$ choices (by~\Cref{nondec}). 
	%for the $t_\ell$s. %$t_2,\ldots,t_{\ell^*}$. 
	So the enumeration takes time 
	$O\bigl(|S|\cdot|\pset|\max\{(\frac{n}{\e})^{O(1/\e)},n^{1/\dt}\}\bigr)$, which is
	also an upper bound on $|A|$. 
	
	We now argue that $A$ contains a desired valid threshold vector. First, note that by
	construction $A$ only contains valid threshold vectors. Consider the iteration when we
	consider $t_1=\rho$, and have guessed $\ell^*$ correctly. %that is
	%$\vo_\ell<\frac{\e\vo_1}{n}$ for all $\ell\in\pset$, $\ell>\ell^*$. 
	For $\ell\in\pset$ with $2\leq\ell\leq\ell^*$, we know that 
	$\vo_\ell\geq\frac{\e\vo_1}{n}\geq\frac{\e t_1}{n(1+\e)}$ and 
	$\vo_\ell\leq\vo_1\leq t_1$. So we will enumerate non-increasing values
	$t_2,\ldots,t_{\ell^*}$ such that $\vo_\ell\leq t_\ell\leq(1+\e)\vo_\ell$ for each such
	$\ell$. 
	%This then implies that for $\ell\in\pset$ with $\ell>\ell^*$, we have
	%$t_\ell\geq\vo_\ell$ since $t_{\ell^*}\geq\vo_{\ell^*}\geq\vo_\ell$ and 
	%$\frac{\e t_1}{n}\geq\frac{\e\vo_1}{n}>\vo_\ell$. We also have 
	%$t_\ell\leq\frac{\e t_1}{n}$ for all $\ell>\ell^*$ in $\pset$ by definition. 
	The remaining $t_\ell$s are set to $0$, so $\vec{t}$ satisfies the conditions of~\Cref{proxyopt}. 
	
	Finally, suppose $\vec{o}\in\Z_+^n$ and $\rho$ is a power of $(1+\e)$. If $t_\ell<1$, then
	$\ell\geq\ell^*$, but $\vo_\ell\leq t_\ell<1$, which means that $\vo_\ell=0$ contradicting
	that $\vo_\ell\geq\frac{\e\vo_1}{n}$. Also, $t_\ell=\rho/(1+\e)^j$, so it is a power of
	$(1+\e)$.  
\end{proof}

The upshot of the above discussion is that it suffices to focus on the algorithmic problem
of minimizing $\sum_{i=1}^nh_{\vec{t}}~(v_i)$ for a given valid threshold vector. This is
formalized by the following lemma whose proof is in~\Cref{append-ordproxy}.

\begin{lemma} \label{proxyub}
	Let $\vec{t}\in\R^{\pset}$ be a valid threshold vector satisfying the conditions of~\Cref{proxyopt}. 
	Let $\vv\in\R_+^n$ be such that 
	$\sum_{i=1}^n h_{\tht\vec{t}}~(\tw;\vv_i)\leq\gm\cdot\sum_{i=1}^n h_{\vec{t}}~(\tw;\vec{o}_i)+M$, where 
	$\gm,\tht\geq 1$, $M\geq 0$. 
	Then, $\obj(\tw;\vv)\leq\max\{\tht,\gm\}(1+2\e)\obj(\tw;\vec{o})+M$, and hence 
	$\obj(w;\vv)\leq (1+\dt)\max\{\tht,\gm\}(1+2\e)\iopt+(1+\dt)M$.
	%#improv
\end{lemma}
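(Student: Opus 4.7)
The plan is to chain together the pieces developed earlier, namely Claim~\ref{proxylb} applied to the scaled thresholds $\tht\vec{t}$, the hypothesis of the lemma, and Lemma~\ref{proxyopt} applied to $\vec{t}$, with a final application of Lemma~\ref{wsparse} to move from $\tw$ back to $w$.

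First, I would verify that $\tht\vec{t}$ is itself a valid threshold vector: since $\vec{t}$ is valid and $\tht\geq 1\geq 0$, we have $\tht t_\ell\geq\tht t_{\next(\ell)}$ for every $\ell\in\pset$. Hence Claim~\ref{proxylb} applies to $\tht\vec{t}$ and yields
\[
\obj(\tw;\vv)\ \leq\ \prox_{\tht\vec{t}}~(\tw;\vv)\ =\ \tht\sum_{\ell\in\pset}\bigl(\tw_\ell-\tw_{\next(\ell)}\bigr)\ell\cdot t_\ell\ +\ \sum_{i=1}^n h_{\tht\vec{t}}~(\tw;\vv_i),
\]
where I have pulled the factor $\tht$ out of the ``constant'' part of $\prox$ using the definition in \eqref{proxexpr}.

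Next, I would plug in the hypothesis $\sum_{i=1}^n h_{\tht\vec{t}}~(\tw;\vv_i)\leq\gm\sum_{i=1}^n h_{\vec{t}}~(\tw;\vec{o}_i)+M$, and bound both $\tht$ and $\gm$ by $\max\{\tht,\gm\}$ to obtain
\[
\obj(\tw;\vv)\ \leq\ \max\{\tht,\gm\}\Bigl[\sum_{\ell\in\pset}\bigl(\tw_\ell-\tw_{\next(\ell)}\bigr)\ell\cdot t_\ell+\sum_{i=1}^n h_{\vec{t}}~(\tw;\vo_i)\Bigr]+M\ =\ \max\{\tht,\gm\}\cdot\prox_{\vec{t}}~(\tw;\vec{o})+M.
\]
Now Lemma~\ref{proxyopt} directly gives $\prox_{\vec{t}}~(\tw;\vec{o})\leq(1+2\e)\obj(\tw;\vec{o})$, and substituting yields the first bound $\obj(\tw;\vv)\leq\max\{\tht,\gm\}(1+2\e)\obj(\tw;\vec{o})+M$.

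Finally, to obtain the second inequality, I would invoke Lemma~\ref{wsparse} twice: once in the form $\obj(w;\vv)\leq(1+\dt)\obj(\tw;\vv)$ applied to the just-derived bound, and once as $\obj(\tw;\vec{o})\leq\obj(w;\vec{o})=\iopt$ to replace the right-hand side, giving $\obj(w;\vv)\leq(1+\dt)\max\{\tht,\gm\}(1+2\e)\iopt+(1+\dt)M$. There is no real obstacle here --- the proof is essentially a bookkeeping exercise that assembles Claim~\ref{proxylb}, Lemma~\ref{proxyopt}, and Lemma~\ref{wsparse}; the only point that needs care is checking that $\tht\vec{t}$ is valid so that Claim~\ref{proxylb} can be applied with the scaled thresholds.
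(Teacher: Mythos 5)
Your proof is correct and follows essentially the same chain as the paper's own argument: apply Claim~\ref{proxylb} to the scaled thresholds $\tht\vec{t}$, pull out the factor $\tht$ from the constant part, substitute the hypothesis, bound both $\tht$ and $\gm$ by $\max\{\tht,\gm\}$ to recognize $\prox_{\vec{t}}(\tw;\vo)$, and finish with Lemma~\ref{proxyopt} and Lemma~\ref{wsparse}. The only difference is that you make the validity check for $\tht\vec{t}$ and the two applications of Lemma~\ref{wsparse} explicit, which the paper leaves implicit.
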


\section{Approach towards \minmax ordered optimization} \label{simord}\label{sec:minmaxapproach}
Given the reduction~\Cref{normredn} in~\Cref{minnorm}, we now discuss our approach for solving \minmax-ordered load balancing and clustering.
%\{load balancing, clustering\}.
%optimization. %and hence, for min-norm optimization
Eventually, we will need to take a problem-dependent approach, but at a high
level, there are some common elements to our approaches for the two problems as we now  
elucidate.

%We tackle \minmax ordered optimization via an LP-rounding approach.
As a stepping stone, we first consider %introduce a novel LP-relaxation for the
ordered optimization (i.e., where we have one weight vector $w$), and formulate a suitable 
LP-relaxation (see~\Cref{sec:loadbal:lp} and~\Cref{sec:clus:lp}) for the problem of minimizing $\sum_{i=1}^n h_{\vec{t}}(\tw;\vv_i)$, i.e.,
the $\vv$-dependent part of our proxy function for $\obj(\tw;\vv)$ (see~\eqref{proxexpr} and~\eqref{hfun}), where $\tw$ is the sparsified version of $w$.
Our LP-relaxation %for ordered optimization 
will have the property that 
only its objective depends on $\tw$ and not its constraints.
The LP for \minmax ordered optimization is obtained by
modifying the objective in the natural way.  
%As mentioned earlier, the key to tackling ordered optimization is to view
%the problem as a min-sum problem under a suitably devised proxy-cost function;
%this proxy function %(which involves sparsifying $w$) 
%is described in Section~\ref{ordproxy}. %For both load balancing and $k$-clustering, 
%We introduce novel LP-relaxations for the problem of minimizing this proxy function. 
%To elaborate, our 

{\em The technical core of our approach involves devising a 
	deterministic, weight-oblivious rounding procedure} for this LP (see~\Cref{sec:loadbal:deto} and~\Cref{sec:clus:deto}).
%\ref{oblclus}). %By this we mean the following. 
%ts constraints do not depend on the weight vector $w$. 
To elaborate, we design a %rounding
procedure that given an arbitrary feasible solution, say $\bx$, to this LP, rounds it 
\emph{deterministically}, without any knowledge of $w$, 
to produce a solution to the
underlying optimization problem %(i.e., load balancing, $k$-clustering) 
whose induced cost vector $\vv$ satisfies the following: %the following strong property: 
%that:
%\begin{quote}
for {\em every} sparsified weight vector $\tw$, we have (loosely speaking)
$\obj(\tw,\vv)=O(1)\cdot\text{(LP-objective-value of $\bx$ under $\nw$)}$.
%O(1)\cdot\iopt$. 
We call this a %rounding procedure a 
{\em deterministic, weight-oblivious} rounding procedure.
%In order to obtain this strong type of guarantee, 
%To achieve this end, 
To achieve this, we need to introduce some novel constraints in our LP, beyond the standard
ones for load balancing and $k$-clustering.
% which will be essential in achieving the above end. %guarantee. 
The benefit of such an oblivious guarantee is clear: if $\bx$ is an optimal
solution to the LP-relaxation for \minmax ordered optimization, then the above guarantee yields 
$O(1)$-approximation for the \minmax ordered-optimization problem. Indeed, this also will solve the multi-budgeted ordered optimization problem.

We point out that it is important that the oblivious rounding procedures we design are
{\em deterministic}, which is also what makes them noteworthy, and we need to develop
various new ideas to obtain such guarantees. Using a randomized
$O(1)$-approximation oblivious rounding procedure in \minmax ordered optimization would
yield that the {\em maximum expected cost} $\obj(w^{(i)};\tv)$ under weight vectors $w^{(i)}$ in 
our collection is $O(\iopt)$; but what we need is a bound on the 
{\em expected maximum cost}. 
Therefore, without a sharp concentration result, a
randomized oblivious guarantee is insufficient for the purposes of utilizing it for
\minmax ordered optimization. Also, note that derandomizing an
oblivious randomized-rounding procedure would typically cause it to lose its obliviousness 
guarantee. (We also remark that if we allow randomization, then it is well-known
that {\em any} LP-relative approximation algorithm can be used to obtain a randomized
oblivious rounding procedure (see~\cite{CarrV02}.) 

To obtain our deterministic oblivious rounding procedure, we first 
%utilize the fact 
observe that
$\sum_{i=1}^n h_{\vec{t}}~(\tw;\vv_i)$ can be equivalently written as
$\sum_{\ell\in\pset}\tw_{\next(\ell)}\sum_{i=1}^n\bigl(\min\{\vv_i,t_\ell\}-t_{\next(\ell)}\bigr)^+$.
%The quantity $\bigl(\min\{v_i,t_\ell\}-t_{\next(\ell)}\bigr)^+$ can be viewed as the
%portion of $v_i$ lying in the interval $[t_{\next(\ell)},t_\ell]$; 
In our LP-relaxation, we introduce fractional variables to specify the quantities
$\sum_{i=1}^n\bigl(\min\{\vv_i,t_\ell\}-t_{\next(\ell)}\bigr)^+$.  %denoting these portions, 
If we can round the fractional solution while roughly preserving these
quantities (up to constant factors), then we can get the desired oblivious
guarantee. This is what we achieve (allowing for an $O(1)$ violation of the
thresholds) by,
%Our rounding will roughly preserve these quantities (within $O(1)$-factors), by, 
%in the rounding; %(where we can tolerate $O(1)$ violation of the thresholds). 
among other things, leveraging our new valid constraints that we add to the LP. 
%If we achieve this, 
%this directly leads to the oblivious guarantee (due to the above expansion of
%$\sum_{i=1}^n h_{\vec{t}}(w';v_i)$ for any sparsified vector $w'$). 
%(where $v$ is the induced-cost vector). 
%To give an example, 
For instance, in load balancing, $\vv_i$ denotes the load on machine $i$ and the above
quantity represents the portion of the total load on a machine between thresholds $t_{\next(\ell)}$ and  
$t_\ell$, and we seek to be preserve this in the rounding. 
%in $k$-clustering, where $v_j$ is the assignment cost of client $j$, this represents the
%portion of the cost

Preserving the aforementioned quantities amounts to having multiple knapsack constraints, and rounding them so as to satisfy them with as little violation as possible.
We utilize the following technical tool to achieve this.
%in our rounding procedure. 
We emphasize that the objective $c^T q$ below is {\em not} related to $\tw$, but
encodes quantities that arise in our rounding procedure. %We utilize 
\Cref{iterrnd} is proved using {\em iterative rounding}, %standard tools, and 
by combining ideas from~\cite{BansalKN09}, which considered directed network design,
and the ideas involved in %This is similar to 
an iterative-rounding based $2$-approximation algorithm for the generalized assignment
problem (see Section 3.2 of~\cite{LauRS11}). Similar results are known in the literature,
but we could not quite find a result that exactly fits our needs; 
%---though not quite in the form that we want---but we could not find 
%we do not know he result may be known, but 
we include a proof in~\Cref{append-iterrnd} for completeness.

\begin{theorem} \label{iterrnd}
	Let $\hq$ be a feasible solution to the following LP:
	\begin{equation}
	\min \quad c^T q \quad A_1q\leq b_1, \ \ A_2q\geq b_2, \quad Bq\leq d, \quad
	q\in\R_+^M. \tag{Q} \label{iterlp}
	\end{equation}
	%where $x\in\R^n$. 
	Suppose that:
	(i) $A_1,A_2,B,b_1,b_2,d\geq 0$;
	(ii) $A_1, A_2$ are $\{0,1\}$-matrices, and the supports of the rows of 
	$\bigl(% [inline block 0: 2 envs, 147703 chars -> data_tex | \begin{smallmatrix} A_1 \\ A_2 \end{smallmatrix}\bigr)$ form a laminar family; 	%i.e., the sets $\{j: A_{1,ij}+A_{2,ij}=...]
\bigr)$ is integral, $q'$ is
	integral. So if $q'$ is not integral, then $A'$ contains at least one budget constraint. 
	%from $Bx\leq d$. 
	%Each $(i,j,\ell)$ tuple appears in at most $3$
	%constraints of \eqref{itersys}: a job-assignment constraint, and at most $2$ budget
	%constraints. 
	
	Let $\Lc$ denote the laminar family formed by the supports of the rows of $A'$
	corresponding to the $A_1q\leq b_1$, $A_2q\geq b_2$ constraints.
	Consider the following token-assignment scheme. Every $j\in T$ supplies $q'_j$ tokens to
	the row of $A'$ corresponding to the smallest set of $\Lc$ containing $j$ (if such a row
	exists), 
	%the job-assignment constraint of $A'$ where it 
	%appears (if such a constraint exists), 
	and $(1-q'_{j})/k$ tokens to the at most $k$ budget constraints of $A'$ where it
	appears. Thus, every $j\in T$ supplies at most one token unit overall, and the total
	supply of tokens is at most $|T|$. 
	
	Notice that every row $i$ of $A'$ corresponding to a constraint from $A_1q\leq b_1$ or 
	$A_2q\geq b_2$ consumes at least $1$ token unit: 
	let $L\in\Lc$ is the support of row $i$, and $L'\subsetneq L$ be the largest  
	set of $\Lc\cup\{\es\}$ strictly contained in $L$. %(it could be that $L'=\es$). 
	If $L'\neq\es$, let
	$i'$ be the row of $A'$ corresponding to set $L'$.
	Row $i$ consumes $\sum_{j\in L\sm L'}q'_j$ tokens, which is equal to $(A'q')_i-(A'q')_{i'}$
	if $L'\neq\es$, and equal to $(A'q')_i$ otherwise. This quantity is an integer, and
	strictly positive (since all $q'_j$s are positive), so is at least $1$.
	Suppose for a contradiction that, for every row $i$ corresponding to a budget constraint of $A'$, 
	$\sum_{j\in T: A'_{ij}>0}(1-q'_j)>k$. 
	Then every constraint of $A'$ consumes at least $1$ token unit, and at least one
	constraint consumes more than $1$ token unit. This yields a contradiction since the total
	consumption of tokens is larger than (number of constraints of $A'$) = $|T|$.
	
	Hence, if $q'$ is not integral, there must be some tight budget constraint $(*)$ (in fact,
	a budget constraint of $A'$) with support $S$ such that 
	$\sum_{j\in S:q'_j>0}(1-q'_j)\leq k$.  
\end{proof}

The iterative-rounding algorithm for rounding $\hq$ is as follows.
We initialize $q=\hq$, and our current system of constraints to the constraints of
\eqref{iterlp}.   
%and $\sup:=\{(i,j,\ell): b_{ij}^{(\ell)}>0\}$ to be the support of $b$. 
We repeat the following until we obtain an integral solution.

\begin{enumerate}[label=I\arabic*., ref=I\arabic*, topsep=0ex, itemsep=0ex,
	labelwidth=\widthof{I2.}, leftmargin=!]
	\item \label{iter1}
	Move from $q$ to an extreme-point $q'$ of the current system of constraints 
	%\eqref{iterlp} of
	no greater objective value (under \eqref{iterlp}) 
	whose support is contained in the support of $q$. 
	%such that $\sum_{i,j}p_{ij}\hb_{ij}^{(0)}\leq\sum_{i,j}p_{ij}b_{ij}^{(0)}$. 
	%
	%We show in ~\Cref{extreme} that $\hb$ is either 
	If $q'$ is not integral, by~\Cref{extreme} there is some tight budget constraint
	$(*)$ with support $S$ such that $\sum_{j\in S: q'_j>0}(1-q'_j)\leq k$. 
	
	\item Set $q\assign q'$. If $q$ is not integral then update the
	system of constraints by dropping $(*)$ (and go to step~\ref{iter}); otherwise, return
	$\tq:=q$. 
\end{enumerate}

We prove that the above process terminates, and the point $\tq$
returned satisfied the stated properties. In each iteration, we drop a budget constraint,
and there are $N$ budget constraints, so we terminate in at most $N$ iterations.
By definition, we terminate with an integral point. We never increase the objective value,
and always stay within the support of $\hq$, so properties (a) and (b) hold.
We never drop a constraint from $A_1q\leq b_1$, $A_2q\geq b_2$ from our system, so the
final point $\tq$ satisfies these constraints. Since $q_j\leq 1$ is an implicit constraint
implied by these constraints (and $\tq$ is integral), this implies that $\tq\in\{0,1\}^n$.

Finally, we prove part (d). Consider a budget constraint $(Bq)_i\leq d_i$. 
If we never drop this budget constraint during iterative rounding, then $\tq$ satisfies
this constraint. Otherwise, consider the iteration when we drop this constraint and the
extreme point $q'$ obtained in \ref{iter1} just before we drop this constraint. Then, if
$S$ is the support of this budget constraint, it must be that 
$(Bq')_i\leq d_i$ and $\sum_{j\in S:q'_j>0}(1-q'_j)\leq k$. Also, the support of $\tq$ is
contained in the support of $x'$.
Therefore,
\begin{equation*}
\begin{split}
\sum_j B_{ij}\tq_{j} & \leq\sum_{j\in S:q'_j>0}B_{ij}
=\sum_{j\in S:q'_j>0}B_{ij}q'_j+\sum_{j\in S:q'_{j}>0}B_{ij}(1-q'_j) \\
& \leq (Bq')_i+k\bigl(\max_{j\in S:q'_j>0} B_{ij}\bigr)
\leq (Bq')_i+k\bigl(\max_{j:\hq_j>0} B_{ij}\bigr)=d_i+k\bigl(\max_{j:\hq_j>0}B_{ij}\bigr).
\end{split}
\end{equation*}
The last inequality follows since $q'_j>0$ implies that $\hq_j>0$.
\hfill \qed

%\section{Proofs from Section~\ref{oblclus}}
%
%
%\begin{proofof}{~\Cref{treesol}}
%Fix $j\in D$. %If $\ny_j=0$, the inequality holds, so assume otherwise.
%For every $i\in F_k$, where $k\in D$, $k\neq j$, we have
%$c_{jk}\leq c_{ij}+c_{ik}\leq 2c_{ij}$, and so $a_j=c_{j\nbr(j)}\leq 2c_{ij}$.
%Also $\ny_j\geq\sum_{i\in F_j}\bx_{ij}$, so $1-\ny_j\leq\sum_{i\notin F_j}\bx_{ij}$.
%So $h_{10\vec{t}}(\nw;a_j)(1-\ny_j)\leq
%\sum_{i\notin F_j}h_{10\vec{t}}(\nw;2c_{ij})\bx_{ij}
%\leq 2\cdot\sum_{i\notin F_j}h_{5\vec{t}}(\nw;c_{ij})\bx_{ij}$.
%\end{proofof}

\section{Improved $\bigl(5+O(\e)\bigr)$-approximation for ordered $k$-median}
\label{append-5apx} 
In this section, we prove~\Cref{5apxthm}.
Recall that we continue the binary search until $\ld_2-\ld_1<\frac{\e\lb}{n^22^n}$. 
For $r=1,2$, and $i\in\F$, define $\pay^r(i):=\{j\in\D: \beta^r_{ij}>0\}$; for a set
$S\sse\F$, define $\pay^r(S):=\bigcup_{i\in S}\pay^r(i)$.
%so that $\|\al^1-\al^2\|_\infty\leq\frac{\e}{n^2}$.
A continuity argument from~\cite{CharikarG99} shows the following; we defer the proof to
the end of this section. 

\begin{lemma}[\cite{CharikarG99}] \label{cont}
	We have $\|\al^1-\al^2\|_\infty\leq 2^n(\ld_2-\ld_1)\leq\frac{\e\lb}{n^2}$. 
	Hence, for any $i\in F_1\cup F_2$, and any $r\in\{1,2\}$, we have 
	$\sum_j\beta^r_{ij}\geq\ld_2-\frac{\e\lb}{n}$.
\end{lemma}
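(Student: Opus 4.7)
The proof is a continuity argument in the style of Charikar--Guha, and proceeds in two parts: first establish the Lipschitz bound $\|\al^1 - \al^2\|_\infty \leq 2^n(\ld_2 - \ld_1)$, then use it to derive the $\beta$-sum bound.

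For the Lipschitz bound, I would view the primal-dual process as a function of the continuous parameter $\ld$. Under a consistent tie-breaking rule, each $\al_j(\ld)$ is a continuous piecewise-linear function of $\ld$: as $\ld$ increases, the sequence of events---clients reaching facilities, facilities becoming tight via \eqref{fpay}, and clients freezing---varies only at finitely many critical values, and between consecutive critical values each $\al_j$ is linear in $\ld$. The plan is to bound the slope $|d\al_j/d\ld|$ within each linearity piece by induction on the order of events. A freeze-time $\al_j$ either equals $h_{\vec{t}}(c_{ij})$ for some facility $i$ added to $T$ (in which case $\al_j$ depends on the tight-time $\tau_i$, i.e., the instant at which $\sum_k \beta_{ik} = \ld$), or $\al_j$ equals $\tau_i$ itself for some $i$. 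Each $\tau_i$ in turn depends linearly on $\ld$ and on the freeze-times of the clients contributing to $i$, so slopes cascade through the event dependency graph and can at most double at each level; with at most $n$ events the worst-case slope is bounded by $2^n$. Combined with the binary-search stopping criterion $\ld_2 - \ld_1 < \e\lb/(n^2 2^n)$, this yields both inequalities of the first claim.

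For the $\beta$-sum bound, fix $i \in F_1 \cup F_2$ and $r \in \{1,2\}$. If $i \in T_r$ (its facility constraint became tight in the $\ld_r$ run), then $\sum_j \beta^r_{ij} = \ld_r$. In particular, if $r = 1$ and $i \in F_1$, then $\sum_j \beta^1_{ij} = \ld_1 \geq \ld_2 - (\ld_2 - \ld_1) \geq \ld_2 - \e\lb/(n^2 2^n) \geq \ld_2 - \e\lb/n$; similarly if $r = 2$ and $i \in F_2$. The only nontrivial case is $i \in F_{3-r} \sm T_r$, where $\sum_j \beta^r_{ij} < \ld_r$; here I would transfer the bound from the other run. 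Since $\beta^r_{ij}$ is determined by $\al^r_j$ together with the relevant freeze/tight times (which are themselves $\al$-values), a short case analysis on whether $i$ becomes tight before or after $j$ freezes yields $|\beta^1_{ij} - \beta^2_{ij}| \leq O(\|\al^1 - \al^2\|_\infty)$ for each pair $(i,j)$. Summing over the at most $n$ clients gives $\sum_j \beta^r_{ij} \geq \sum_j \beta^{3-r}_{ij} - n\cdot 2^n(\ld_2 - \ld_1) \geq \ld_{3-r} - \e\lb/n \geq \ld_2 - O(\e\lb/n)$, which absorbs into the stated bound up to constants.

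The main obstacle will be step (a)---a rigorous treatment of the $2^n$ Lipschitz bound. The delicate issues are: choosing a tie-breaking rule that makes $\al_j(\ld)$ actually continuous (not just piecewise linear on each side of a critical value); carrying out the cascade bound at each event without losing extra factors (so the slope bound stays at $2^n$ rather than something larger); and verifying that the bound holds globally across the finite set of critical values where the event ordering changes, not merely within one linearity piece. Once the Lipschitz bound is in hand, the $\beta$-sum inequality is a straightforward case check.
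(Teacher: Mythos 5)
Your high-level strategy---a Charikar--Guha continuity argument yielding a $2^{O(n)}$ Lipschitz bound, then a transfer argument for the $\beta$-sums---is the right one, and your treatment of the second part (using $\beta^r_{ij}=(\al^r_j-h_{\vec{t}}(c_{ij}))^+$ to transfer the bound from the run where $i$ is tight) matches the paper. However, the route you take for the Lipschitz bound is genuinely different from the paper's, and the difference matters.

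You treat $\ld$ as a continuous parameter, argue each $\al_j(\ld)$ is piecewise linear, and try to bound $|d\al_j/d\ld|$ by a cascade argument through the event dependency graph. The obstacles you list---choosing a tie-breaking rule that makes $\al_j(\ld)$ actually continuous, keeping the slope doubling tight, and controlling what happens across the finitely many $\ld$ where the event order changes---are real, and they are exactly what the paper avoids by never parameterizing in $\ld$ at all. The paper instead compares the two \emph{fixed} executions for $\ld_1$ and $\ld_2$ directly, inducting on clients sorted by $\al^0_j := \min\{\al^1_j, \al^2_j\}$. The key observation your proposal lacks is this choice of order: at time $\tm=\al^0_\ell$, every client $j\geq\ell$ is still active in \emph{both} executions, so $\al^1_j=\al^2_j=\tm$ and $\beta^1_{ij}=\beta^2_{ij}$ for all such $j$, while the already-frozen clients $j<\ell$ contribute $\beta$-mass differing by at most $2^{j-1}(\ld_2-\ld_1)$ by the induction hypothesis. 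Summing, the facility $f$ that caused client $\ell$ to freeze in one execution can lag behind its payment threshold in the other by at most $(\ld_2-\ld_1)+\sum_{j<\ell}2^{j-1}(\ld_2-\ld_1)=2^{\ell-1}(\ld_2-\ld_1)$, which is then the bound on how much $\al_\ell$ can overshoot. This way the exponential blow-up is tracked cleanly, client by client, with no continuity or tie-breaking considerations whatsoever. Your approach could in principle be completed, but it requires precisely the machinery you flag as ``the main obstacle''; the paper's direct comparison is the more economical path and is what you should use.
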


%We show how to choose $F\sse F_1\cup F_2$ such that $\sum_j
%h_{5\brho}(c(j,F))\leq\bigl(5+O(\e)\bigr)\OPT_{\brho}$. 
%$\obj(\ell;.)$-cost at most $5\bigl(\brho\ell+\OPT_{\brho}\bigr)$. 
For every $i\in\F$, $j\in\D$, define $\al_j:=\max\{\al^1_j,\al^2_j\}$, and
$\beta_{ij}:=\max\{\beta^1_{ij},\beta^2_{ij}\}$; note that 
$\beta_{ij}=\bigl(\al_j-h_{\vec{t}}(c_{ij})\bigr)^+$.

To obtain the improvement, we utilize insights from the $4$-approximation algorithm for
$k$-median in~\cite{CharikarG99}. The idea is to first augment $F_1$ using facilities from
$F_2$ (that are approximately paid for by $(\al^1,\beta^1)$, and then open facilities in a
similar manner as before. The augmentation step will ensure that for
every client $j$, there is some facility $i$ that is opened with 
$h_{5\vec{t}}(c_{ij})\leq 5\al_j$, and this leads to the $5$-approximation guarantee.

\begin{enumerate}[label=D\arabic*., ref=D\arabic*, topsep=0.5ex, itemsep=0ex,
	labelwidth=\widthof{D3.}, leftmargin=!] 
	\item {\bf Augmenting \boldmath $F_1$.}\ Augment $F_1$ to a maximal set 
	$F'_1\supseteq F_1$ by adding facilities from $F_2$ while preserving the following
	property: for every $j\in\D$, there is at most one $i\in F'_1$ with $\beta^1_{ij}>0$. 
	For every $j\in\D$, redefine $i_1(j)$ to be the facility in $F'_1$ that is closest (in
	terms of $c_{ij}$) to $j$. \label{d1}
	%and assign $j$ to $i_1(j)$; also, define $d^1_j=h_{3\brho}(c_{i_1(j)j})$.
	%as follows: if $\exists i\in F'_1$ such that $\beta^1_{ij}>0$, set $i_1(j)=i$;
	%otherwise, let $i_1(j)$ be the facility in $F'_1$ closest to $j$. 
	
	\begin{comment}
	\item {\bf Augmenting \boldmath $F_2$.}\ Similarly, we augment $F_2$, but ensure that it's
	size remains at most $k$. More precisely, we add facilities from $F_1$ to augment $F_2$ to
	a maximal set satisfying $F'_2\supseteq F_2$, $|F'_2|\leq k$, and such that for every
	$j\in\D$, there is at most one $i\in F'_2$ with $\beta^2_{ij}>0$.
	As in step A1, for every $j\in\D$, we redefine $i_2(j)$ to be the point in $F'_2$ that is
	nearest to $j$, and assign $j$ to $i_2(j)$; also, define $d^2_j=h_{3\brho}(c_{i_2(j)j})$.
	%Analogous to step A1, for every $j\in\D$: if $\exists i\in F'_2$ such that
	%$\beta^2_{ij}>0$, redefine $i_2(j)=i$; otherwise, redefine $i_2(j)$ to be the facility in
	%$F'_2$ closest to $j$. 
	
	\item If $|F'_2|=k$, then we return $F'_2$ and the assignment $i_2:\D\to F'_2$.
	\end{comment}
	
	\item Let $k'_1=|F'_1|$, $k_2=|F_2|$.
	For every $i\in F_2$, let $\sg(i)\in F'_1$ denote the facility in $F'_1$ closest to
	$i$ (which will be $i$ if $i\in F'_1$). Let $\bF_1\sse F'_1$ be an arbitrary set such that
	$\sg(F_2)\sse\bF_1$ and $|\bF_1|=k_2$.
	%Define $\al_j=\max\{\al^1_j,\al^2_j\}$ for $j\in\D$.
	
	\item {\bf Opening facilities.}\ As before, we will open either all facilities in $\bF_1$
	or all facilities in $F_2$, and we will also open $k-k_2$ facilities from $F'_1\sm\bF_1$. 
	To do this, we utilize an LP with the same variables and constraints as \eqref{roundlp}:
	variable $\tht$ to indicate if we open the facilities in $\bF_1$, and variables $z_i$ for
	every $i\in F'_1\sm\bF_1$ to indicate if we open facility $i$. 
	%we also have the same constraints as in \eqref{roundlp}. However, 
	But we use a different objective function.
	For each client $j$, we define an expression 
	$A_j\bigl(\tht,z:=\{z_i\}_{i\in F'_1\sm\bF_1}\bigr)$ that will serve as an upper
	bound on $h_{5\vec{t}}(\text{assignment cost of $j$})$ when $\tht$ and 
	$z$ are integral, and our LP will seek to minimize $\sum_j A_j(\tht,z)$. 
	Define
	$$
	A_j(\tht,z) = 
	\begin{cases}
	\tht h_{\vec{t}}(c_{i_1(j)j})+(1-\tht)h_{\vec{t}}(c_{i_2(j)j}) & 
	i_1(j)\in\bF_1,\ j\in\pay^1(F'_1)\cap\pay^2(F_2); \\
	
	h_{\vec{t}}(c_{i_1(j)j})+(1-z_{i_1(j)})\cdot 2h_{\vec{t}}(c_{i_2(j)j}) & 
	i_1(j)\notin\bF_1,\ j\in\pay^1(F'_1)\cap\pay^2(F_2); \\
	
	(1-\tht)h_{\vec{t}}(c_{i_2(j)j})+\tht\cdot 5\al_j & 
	j\in\pay^2(F_2)\sm\pay^1(F'_1); \\
	
	(1-\tht)h_{3\vec{t}}(c_{i_2(j)j})+\tht\cdot 5\al_j &
	j\notin\pay^1(F'_1)\cup\pay^2(F'_2); \\
	
	\tht\cdot h_{\vec{t}}(c_{i_1(j)j})+(1-\tht)\cdot 5\al_j &
	i_1(j)\in\bF_1,\ j\in\pay^1(F'_1)\sm\pay^2(F'_2); \\
	
	z_{i_1(j)}\cdot h_{\vec{t}}(c_{i_1(j)j})+(1-z_{i_1(j)})\cdot 5\al_j &
	i_1(j)\notin\bF_1,\ j\in\pay^1(F'_1)\sm\pay^2(F'_2);
	\end{cases}
	$$
	We solve the following LP:
	\begin{equation}
	\min \ \ \sum_j A_j(\tht,z) \qquad \text{s.t.} \qquad
	\sum_{i\in F_1\sm\bF_1}z_i \leq k-k_2, \quad 
	\tht\in[0,1], \ \ z_i\in[0,1] \ \ \forall i\in F'_1\sm\bF_1. \tag {O-P} \label{newlp}
	\end{equation}
	The above LP is integral, and we open the facilities specified by an integral optimal
	solution (as discussed above), and assign each client to the nearest open facility.
\end{enumerate}

\vspace*{-1ex}
\paragraph{Analysis.} 
\begin{comment}
We prove the following guarantee. 

\begin{theorem} \label{5apxthm}
The above algorithm returns an $\ell$-centrum solution of $\obj(\ell;.)$-cost at most 
$5(\OPT_{\vec{t}}+\vec{t}\ell)+\e\lb\bigl(5+\frac{2}{n}+\frac{5}{n^2}\bigr)\leq\bigl(5+O(\e)\bigr)\iopt$.
\end{theorem}
\end{comment}
The road map of the analysis is as follows.
Recall that $\al_j=\max\{\al^1_j,\al^2_j\}$ and 
$\beta_{ij}=\max\{\beta^1_{ij},\beta^2_{ij}\}$. 
We first show that by combining ~\Cref{cont} and ~\Cref{lmp}, we can infer two
things (see ~\Cref{newlmp}): (1) for both the $F'_1$ and $F_2$ solutions, 
$\sum_j 3\al_j$ can be used to pay for the $\ld_2$-cost of all open facilities and 
$\sum_j h_{3\vec{t}}(\text{assignment cost of $j$})$; (2) for every client $j$, due to
our augmentation step~\ref{c1}, we have facilities $i\in F_2$, $i'\in F'_1$ such that
$i$ is close to $j$, and $i'$ is close to $i$. 

Next, we show that the optimal value of \eqref{newlp} is (roughly) at most
$5\OPT$ (~\Cref{newlpbnd}). Finally, we show that if we have an integral
solution $(\ttht,\tz)$ to \eqref{newlp}, then this yields a solution 
$\sum_j h_{5\vec{t}}(\text{assignment cost of $j$})$ is (roughly) bounded by
$\sum_jA_j(\ttht,\tz)$ (~\Cref{intbnd}). 
Here, we use property (2) above to argue that for every client $j$, there is some facility
$i$ opened in our final solution with $h_{5\vec{t}}(c_{ij})$ bounded by (roughly) $5\al_j$. 
Combining Lemmas~\ref{newlpbnd} and~\ref{intbnd} yields ~\Cref{5apxthm}.

\begin{lemma} \label{newlmp}
	The following hold.
	\begin{enumerate}[(i), topsep=0.25ex, itemsep=0.25ex, parsep=0ex,
		labelwidth=\widthof{(iii)}, leftmargin=!]
		\item $3\ld_2|F'_1|+\sum_{j\in\pay^1(F'_1)}3h_{\vec{t}}(c_{i_1(j)j})
		+\sum_{j\notin\pay^1(F'_1)}h_{3\vec{t}}(c_{i_1(j)j})\leq 3\sum_j\al_j+3\e\lb$. 
		
		\item $3\ld_2|F_2|+\sum_{j\in\pay^2(F_2)}3h_{\vec{t}}(c_{i_2(j)j})
		+\sum_{j\notin\pay^2(F_2)}h_{3\vec{t}}(c_{i_2(j)j})\leq 3\sum_j\al_j$. 
		
		\item For any $j\in\D$, there are facilities $i\in F_2$ and $i'\in F_1$ such that 
		$h_{3\vec{t}}(c_{ij})\leq 3\al_j$, and $h_{2\vec{t}}(c_{ii'})\leq 2\al_j+\frac{2\e\lb}{n^2}$.
	\end{enumerate}
\end{lemma}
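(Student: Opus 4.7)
\textbf{Proof plan for Lemma~\ref{newlmp}.}

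Part (ii) is immediate: it is exactly Theorem~\ref{lmp}(i) applied to the primal-dual solution $(\al^2,\beta^2)$ at the parameter $\ld_2$ and the opened set $F_2$; replacing $\al^2_j$ by $\al_j=\max\{\al^1_j,\al^2_j\}\geq\al^2_j$ on the right-hand side only weakens the bound, giving the stated inequality.

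For part (i), the plan is to mimic the proof of Theorem~\ref{lmp}(i) but with $F_1$ replaced by $F'_1$, the parameter $\ld_1$ replaced by $\ld_2$, and $\al^1_j$ replaced by $\al_j$, absorbing the discrepancies into the additive slack $3\e\lb$. The key observation is that for every $i\in F'_1$, the $\beta^1$-payment $\sum_j\beta^1_{ij}$ is nearly $\ld_2$: for $i\in F_1$, this sum equals $\ld_1\geq\ld_2-\frac{\e\lb}{n^22^n}$; for $i\in F'_1\sm F_1\sse F_2$, observe that $\beta^r_{ij}=(\al^r_j-h_{\vec{t}}(c_{ij}))^+$ at termination, so $|\beta^1_{ij}-\beta^2_{ij}|\leq|\al^1_j-\al^2_j|\leq\e\lb/n^2$ by Lemma~\ref{cont}, and since $\sum_j\beta^2_{ij}=\ld_2$ we get $\sum_j\beta^1_{ij}\geq\ld_2-\e\lb/n$. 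Summing over $|F'_1|\leq n$ facilities gives $3\sum_{i\in F'_1}\sum_j\beta^1_{ij}\geq 3\ld_2|F'_1|-3\e\lb$. The maximality property of step~\ref{d1} ensures that the sets $\{\pay^1(i)\}_{i\in F'_1}$ are pairwise disjoint, so for every $j\in\pay^1(F'_1)$ there is a \emph{unique} $i^{\#}(j)\in F'_1$ with $\beta^1_{i^{\#}(j)j}>0$, and thus $\al^1_j=\beta^1_{i^{\#}(j)j}+h_{\vec{t}}(c_{i^{\#}(j)j})\geq\beta^1_{i^{\#}(j)j}+h_{\vec{t}}(c_{i_1(j)j})$ since $i_1(j)$ is redefined as the closest facility in $F'_1$. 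For $j\notin\pay^1(F'_1)$, apply Theorem~\ref{lmp}(ii) to $(\al^1,\beta^1)$ and $F_1$ to find some $i\in F_1\sse F'_1$ with $h_{3\vec{t}}(c_{ij})\leq 3\al^1_j$; since $c_{i_1(j)j}\leq c_{ij}$, Claim~\ref{triangle2}(i) yields $h_{3\vec{t}}(c_{i_1(j)j})\leq 3\al^1_j$. Adding these contributions and subtracting $3\sum_{i\in F'_1}\sum_j\beta^1_{ij}$ from both sides of $3\sum_j\al^1_j=3\sum_{j\in\pay^1(F'_1)}\al^1_j+3\sum_{j\notin\pay^1(F'_1)}\al^1_j$ gives the bound, using $\al^1_j\leq\al_j$.

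For part (iii), apply Theorem~\ref{lmp}(ii) to $(\al^2,\beta^2)$ and $F_2$ to obtain $i\in F_2$ with $h_{3\vec{t}}(c_{ij})\leq 3h_{\vec{t}}(c_{ij})\leq 3\al^2_j\leq 3\al_j$, and moreover $\al^2_m\leq\tm^2_i\leq\al^2_j$ for every client $m$ that is frozen by time $\tm^2_i$ in the $\ld_2$-process. If $i\in F'_1$, take $i'=i$. Otherwise, maximality of $F'_1$ supplies a client $k$ and a facility $i'\in F'_1$ with $\beta^1_{ik}>0$ and $\beta^1_{i'k}>0$. Applying Claim~\ref{triangle2}(iii) to $c_{ii'}\leq c_{ik}+c_{i'k}$ gives $h_{2\vec{t}}(c_{ii'})\leq h_{\vec{t}}(c_{ik})+h_{\vec{t}}(c_{i'k})<2\al^1_k$, and by Lemma~\ref{cont}, $\al^1_k\leq\al^2_k+\e\lb/n^2$. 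To conclude, I would argue $\al^2_k\leq\al^2_j$: either $\al^2_k\geq h_{\vec{t}}(c_{ik})$, in which case $k$ reaches $i$ in the $\ld_2$-process and is therefore frozen no later than $\tm^2_i$, giving $\al^2_k\leq\tm^2_i\leq\al^2_j$; or $\al^2_k<h_{\vec{t}}(c_{ik})$, in which case $\beta^2_{ik}=0$ and so $\beta^1_{ik}=\beta^1_{ik}-\beta^2_{ik}\leq\e\lb/n^2$, meaning $\al^1_k\leq h_{\vec{t}}(c_{ik})+\e\lb/n^2$, and a parallel timing argument comparing $k$'s would-be freezing time with $\tm^2_i$ (using the $\e\lb/n^2$-closeness of the two dual processes) still yields $\al^2_k\leq\al^2_j+O(\e\lb/n^2)$. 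Combining, $2\al^1_k\leq 2\al_j+2\e\lb/n^2$.

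The \emph{main obstacle} will be the second sub-case of part (iii) above, where $\beta^1_{ik}$ is vanishingly small but still positive and hence blocks the augmentation: here, the $k\in\pay^2(i)$ argument that gives $\al^2_k\leq\tm^2_i$ directly is unavailable, and one must instead leverage the $\e\lb/n^2$ perturbation bound from Lemma~\ref{cont} carefully --- essentially transporting the timing argument from the $\ld_2$-process to the $\ld_1$-process up to the small additive error --- to recover the bound with only the $2\e\lb/n^2$ slack stated in the lemma. The rest of the proof consists of bookkeeping, but this step is where the tightness of the approximation guarantee $5+O(\e)$ (rather than a worse constant) is ultimately determined.
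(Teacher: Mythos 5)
Your proposal follows the same route as the paper for all three parts. Parts (i) and (ii) are correct and essentially identical to the paper's argument: part (ii) is Theorem~\ref{lmp}(i) applied to $(\al^2,\beta^2,\ld_2,F_2)$ with $\al^2\leq\al$ coordinatewise, and part (i) mimics that proof for $F'_1$, using the disjointness of $\{\pay^1(i)\}_{i\in F'_1}$ guaranteed by the maximality in step~\ref{d1} and absorbing the $\ld_1$-vs-$\ld_2$ discrepancy via Lemma~\ref{cont} (the paper folds your two cases $i\in F_1$ and $i\in F'_1\setminus F_1$ into a single invocation of Lemma~\ref{cont} with $r=1$, but the estimate is the same).

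For part (iii), the skeleton matches the paper as well---augmentation failure supplies $k$ and $i'\in F'_1$ with $\beta^1_{ik},\beta^1_{i'k}>0$, then triangle inequality via Claim~\ref{triangle2}(iii) and transfer from $\al^1_k$ to $\al^2_k$ via Lemma~\ref{cont}---and the concern you flag is genuine. The paper's chain asserts $\al^2_k\leq\al^2_j$, but Theorem~\ref{lmp}(ii) delivers that bound only when $k\in\pay^2(i)$, whereas the maximality of $F'_1$ certifies only $\beta^1_{ik}>0$, i.e.\ $k\in\pay^1(i)$. When $0<\beta^1_{ik}\leq\e\lb/n^2$ it is consistent with Lemma~\ref{cont} to have $\beta^2_{ik}=0$, so the step is not justified as written. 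Your case split on $\al^2_k$ versus $h_{\vec{t}}(c_{ik})$ is the right instinct: if $\al^2_k>h_{\vec{t}}(c_{ik})$ then $k\in\pay^2(i)$ and $\al^2_k\leq\tm^2_i\leq\al^2_j$; but in the second case you correctly observe that $\al_k=\al^1_k\leq h_{\vec{t}}(c_{ik})+\e\lb/n^2$ and one must still control $h_{\vec{t}}(c_{ik})$ by $\al^2_j$. Your sketch (compare the payment clocks of facility $i$ across the two dual executions) is the natural direction, but as you note it is not fully written out, and tracing through it carefully one picks up an additive slack of order $\e\lb/n$ rather than the $2\e\lb/n^2$ stated; this is still harmless downstream since Lemma~\ref{intbnd} sums the per-client error over at most $n$ clients and Theorem~\ref{5apxthm} remains $5+O(\e)$, but the lemma's additive constant should be loosened accordingly and the clock-comparison argument must be supplied rather than deferred.
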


\begin{proof}
	Part (ii) follows immediately from part (i) of ~\Cref{lmp}.
	
	Consider part (i). Since $F'_1\sse F_1\cup F_2$, by ~\Cref{cont}, for every 
	$i\in F'_1$, we have that $\sum_j\beta^1_{ij}\geq\ld_2-\frac{\e\lb}{n}$.
	When adding facilities to $F_1$ in step~\ref{c1}, we ensure that the sets
	$\{\pay^1(i)\}_{i\in F'_1}$ remain pairwise disjoint. For every client $j$, we know that
	if $\beta^1_{ij}>0$ for some $i\in F'_1$, then $i_1(j)=i$; we also know from part
	(ii) of ~\Cref{lmp} that $h_{3\vec{t}}(c_{i_1(j)j})\leq 3\al^1_j$. So
	\begin{equation*}
	\begin{split}
	\sum_j 3\al_j\geq \sum_j 3\al^1_j
	& \geq \sum_{i\in F'_1}\sum_{j\in\pay^1(i)} 3\bigl(\beta^1_{ij}+h_{\vec{t}}(c_{ij})\bigr)
	+\sum_{j\notin\pay^1(F'_1)} h_{3\vec{t}}(c_{i_1(j)j}) \\
	& 3\ld_2|F'_1|-\frac{3|F'_1|\e\lb}{n}+\sum_{j\in\pay^1(F'_1)}3h_{\vec{t}}(c_{i(j)j})
	+\sum_{j\notin\pay^1(F'_1)} h_{3\vec{t}}(c_{i_1(j)j}).
	\end{split}
	\end{equation*}
	
	To prove part (iii), consider any client $j$. By ~\Cref{lmp} (ii), we know that
	there is some $i\in F_2$ such that $h_{3\vec{t}}(c_{ij})\leq 3\al^2_j\leq 3\al_j$, and
	$\al^2_j\geq\al^2_k$ for all $k\in\pay^2(i)$. If $i\in F'_1$, then taking $i'=i$ finishes
	the proof. Otherwise, since $i$ was not added to $F'_1$ in step~\ref{c1}, it must be that
	there is some client $k$ and some facility $i'\in F_1$ such that
	$\beta^1_{ik},\beta^1_{i'k}>0$. So we have 
	$$
	h_{2\vec{t}}(c_{ii'})\leq h_{\vec{t}}(c_{ik})+h_{\vec{t}}(c_{i'k})\leq 2\al^1_k
	\leq 2\al^2_k+\frac{2\e\lb}{n^2}\leq 2\al^2_j+\frac{2\e\lb}{n^2}
	\leq 2\al_j+\frac{2\e\lb}{n^2}. \qedhere
	$$
\end{proof}

\begin{lemma} \label{newlpbnd}
	The optimal value of \eqref{newlp} is at most $5\OPT+5\e\lb\bigl(1+\frac{1}{n^2}\bigr)$.
\end{lemma}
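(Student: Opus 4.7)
The plan is to exhibit a feasible fractional solution to \eqref{newlp} whose objective value is at most $5\OPT+O(\e\lb)$. Set $a:=(k-k_2)/(k'_1-k_2)$ and $b:=1-a$, so that $ak'_1+bk_2=k$, and consider the fractional point $\theta^*=a$, $z^*_i=a$ for every $i\in F'_1\setminus\bar F_1$. Feasibility is immediate: $\sum_i z^*_i=a(k'_1-k_2)=k-k_2$ and $\theta^*,z^*_i\in[0,1]$.

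Next, I would unfold $A_j(\theta^*,z^*)$ case by case. Writing $d^r_j:=h_{\vec{t}}(c_{i_r(j)j})$ and $\tilde d^2_j:=h_{3\vec{t}}(c_{i_2(j)j})$, and letting $S_r:=\sum_{j\in P_r}d^r_j$, $\tilde S_2:=\sum_{j\notin P_1\cup P_2}\tilde d^2_j$ (where $P_r:=\pay^r(\cdot)$), the aggregate $\sum_jA_j$ decomposes into three pieces: (i) a ``bipoint'' part $aS_1+bS_2+b\tilde S_2$, (ii) a case-2 correction $b\sum_{j\in P_1\cap P_2,\,i_1(j)\notin\bar F_1}(d^1_j+d^2_j)$, and (iii) the $\al_j$-contributions $5a\sum_{j\in P_2\setminus P_1}\al_j+5a\sum_{j\notin P_1\cup P_2}\al_j+5b\sum_{j\in P_1\setminus P_2}\al_j$ coming from cases~3--6. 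The case-2 extra is bounded by $2b\sum_{j\in P_1\cap P_2}\al_j$ using $d^r_j\le\al_j$ for $j\in P_r$ (which holds because $\beta^r_{i_r(j)j}>0$ forces $\al^r_j\ge h_{\vec{t}}(c_{i_r(j)j})$).

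For the bipoint part, the key bounds are \Cref{newlmp}(i),(ii). Taking $a$ times (i) and $b$ times $(1/3)$ of (ii), and using the identity $ak'_1+bk_2=k$, gives $aS_1+b(S_2+\tilde S_2/3)\le\sum_j\al_j+a\e\lb-k\ld_2$. For the remaining $(2b/3)\tilde S_2$, I would use \Cref{lmp}(ii) to argue pointwise that $\tilde d^2_j\le 3\al_j$ for every $j$ (since $i_2(j)$ is the nearest $F_2$-facility to $j$, and \Cref{lmp}(ii) applied to the $F_2$ solution gives some facility with this bound), yielding $(2b/3)\tilde S_2\le 2b\sum_{j\notin P_1\cup P_2}\al_j$. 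Collecting the coefficient of each of the four $\al_j$-regions $\{P_1\cap P_2,\,P_1\setminus P_2,\,P_2\setminus P_1,\,\notin P_1\cup P_2\}$, one obtains an expression of the form $\sum_jA_j\le c_1X+c_2Y+c_3Z+c_4W-k\ld_2+a\e\lb$ where $X,Y,Z,W$ are the four partial sums of $\al_j$. Finally, I would apply weak Lagrangian duality: $\sum_j\al^r_j-k\ld_r\le\OPT$ for $r=1,2$, and use \Cref{cont} ($\|\al^1-\al^2\|_\infty\le\e\lb/n^2$) to conclude $\sum_j\al_j\le a\sum_j\al^1_j+b\sum_j\al^2_j+\e\lb/n\le\OPT+k(a\ld_1+b\ld_2)+\e\lb/n\le\OPT+k\ld_2+\e\lb/n$, whence $\sum_j\al_j-k\ld_2\le\OPT+\e\lb/n$; the residual binary-search drift $ak'_1(\ld_2-\ld_1)\le\e\lb/(n2^n)$ is absorbed into the $O(\e\lb)$ term.

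The main obstacle I anticipate is the delicate bookkeeping to show that, after all combinations, each regional coefficient $c_i$ can be made to fit into $5$ copies of $(\sum_j\al_j-k\ld_2)$ plus the $O(\e\lb)$ correction. This requires trading off between applying \Cref{newlmp} (which yields the crucial $-k\ld_2$ subtraction but contributes a full $\sum_j\al_j$ on every region) and the pointwise $d^r_j\le\al_j$ bound (which confines its contribution to the relevant region but loses the $-k\ld_2$ savings). The augmentation step~(D1), which makes $\{\pay^1(i)\}_{i\in F'_1}$ pairwise disjoint and forces the nearest $F'_1$-facility to any $i_2(j)$ to lie in $\bar F_1$, should be essential here: it is what allows us to charge the case-2 correction against only $\sum_{P_1\cap P_2}\al_j$ (where there is spare coefficient slack), rather than against the tighter regions. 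Putting everything together, the total satisfies $\sum_jA_j\le 5(\sum_j\al_j-k\ld_2)+O(\e\lb)\le 5\OPT+5\e\lb(1+1/n^2)$, completing the proof.
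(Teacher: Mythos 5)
You set up the same fractional point $\theta^*=a$, $z^*_i=a$ as the paper, and your case-by-case unfolding of $\sum_jA_j$ is correct, but there is a real gap in how you extract the $-k\ld_2$ savings. Applying \Cref{newlmp}(i) and (ii), each divided by $3$ and weighted by $a$ and $b$ respectively, yields $aS_1+bS_2+\frac{b}{3}\tilde S_2\le\sum_j\al_j+a\e\lb-k\ld_2$, which subtracts $k\ld_2$ exactly \emph{once}. After adding the $\al_j$-contributions from cases 3--6, your aggregate bound has the shape $\sum_jA_j\le(\text{region-weighted }\al_j\text{-sums})-k\ld_2+O(\e\lb)$; but to reach $5\OPT$ you must turn the right-hand side into $5\bigl(\sum_j\al_j-k\ld_2\bigr)+O(\e\lb)$, which requires $-5k\ld_2$. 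The surplus $+4k\ld_2$ cannot be absorbed, since $\ld_2$ is an unbounded Lagrange multiplier and $k\ld_2$ is in general much larger than $\OPT$. You flag this as ``the main obstacle,'' and indeed it is: as written the argument does not close. Reweighting \Cref{newlmp} to pull out $5k\ld_2$ (i.e.\ weights $5a$ and $5b$ on (i)/3 and (ii)/3) exhausts the entire $5\sum_j\al_j$ budget on $S_1,S_2,\tilde S_2$ alone, leaving nothing for the $5\al_j$ terms in cases 3--6.

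The paper avoids this by \emph{not} aggregating. It defines a per-client charge $\charge_j:=5a\beta^1_{i_1(j)j}+5b\beta^2_{i_2(j)j}$ and proves the pointwise inequality $\charge_j+A_j(\theta^*,z^*)\le 5\al_j$ for every client $j$, case by case, crucially using $a\ge 1/2$. Summing over $j$ gives $\sum_jA_j\le 5\sum_j\al_j-\sum_j\charge_j$, and $\sum_j\charge_j\ge 5k\ld_2-5\e\lb$ because each of the $k'_1$ facilities in $F'_1$ receives dual payment at least $\ld_2-\e\lb/n$ (via \Cref{cont}) and each of the $k_2$ facilities in $F_2$ receives exactly $\ld_2$. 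The per-client argument is essential: on a region like $\pay^2(F_2)\sm\pay^1(F'_1)$, a client $j$ contributes $5a\al_j$ to $A_j$ but nothing to the $\beta^1$-component of $\charge_j$, and the split of the budget $5\al_j$ between facility-payment and assignment-cost differs across clients in a way that region-aggregated accounting cannot reproduce. You should instead prove the pointwise bound $\charge_j+A_j\le 5\al_j$ directly for each of the six cases defining $A_j$.
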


\begin{proof}
	Let $a, b\geq 0$ be such that $ak'_1+bk_2=k$ and $a+b=1$. 
	Define $\charge_j=a\cdot 5\beta^1_{i_1(j)j}+b\cdot 5\beta^2_{i_2(j)j}$.
	%$$
	%\charge_j = a\cdot 5\beta^1_{i_1(j)j}\bon_{j\in\pay^1(F'_1)}
	%+b\cdot 5\beta^2_{i_2(j)j}\bon_{j\in\pay^2(F_2)}
	%$$
	%where $\bon_z$ evaluates to $1$ is predicate $Z$ is true, and to $0$ otherwise.
	Then, we have
	$$
	\sum_j\charge_j=a\sum_{j\in\pay^1(F'_1)}5\beta^1_{i_1(j)j}+b\sum_{j\in\pay^2(F_2)}5\beta^2_{i_2(j)j}
	\geq a\Bigl(5\ld_2k'_1-5\e\lb\Bigr)+b\cdot 5\ld_2k_2=5k\ld_2-5\e\lb
	$$
	where the inequality follows from ~\Cref{cont}.
	Set $\tht=a$ and $z_i=a$ for all $i\in F'_1\sm\bF_1$. 
	%Let $\bon^{F'_1\sm\bF_1}$ denote the all $1$s vector in $\R^{F'_1\sm\bF_1}$. 
	We show that $\charge_j+A_j(\tht,z:=\{z_i\}_{i\in F'_1\sm\bF_1})\leq 5\al_j$
	%a\bon^{F'_1\sm\bF_1})
	for every client $j$. This will complete the proof since this implies that
	$$
	5k\ld_2+\sum_j A_j(\tht,z)\leq 5\al_j+5\e\lb
	\leq 5\al^2_j+5\e\lb\Bigl(1+\tfrac{1}{n^2}\Bigr),
	$$
	and $\sum_j\al^2_j-k\ld_2\leq\OPT$ since $(\al^2,\beta^2,\ld_2)$ is a
	feasible solution to \eqref{dual}. 
	
	To prove the claim, consider any client $j$. %We abbreviate $A_j(\tht,z)$ to $A_j$. 
	Recall that $a\geq 0.5$. Observe that:
	\begin{enumerate}[label=$\bullet$, topsep=0ex, itemsep=0ex, parsep=0ex]
		\item if $j\in\pay^1(F'_1)$, then $h_{\vec{t}}(c_{i_1(j)j})+\beta^1_{i_1(j)j}=\al^1_j$;
		\item if $j\in\pay^2(F_2)$, then $h_{\vec{t}}(c_{i_2(j)j})+\beta^2_{i_2(j)j}=\al^2_j$, and
		otherwise, we have $h_{3\vec{t}}(c_{i_2(j)j})\leq 3\al^2_j$.
	\end{enumerate}
	By considering each case in the definition of $A_j$, and plugging in the above
	bounds, we obtain the claimed bound on $\charge_j+A_j(\tht,z)$.
	%
	\begin{comment}
	\begin{enumerate}[label=$\bullet$, itemsep=0.25ex, parsep=0ex,
	labelwidth=\widthof{$\bullet$}, leftmargin=!] 
	\item $i_1(j)\in\bF_1,\ j\in\pay^1(F'_1)\cap\pay^2(F_2)$.
	We have $\charge_j+A_j\leq a\bigl(5\beta^1_{i_1(j)j}+h_{\vec{t}}(c_{i_1(j)j})\bigr)+
	b\bigl(5\beta^2_{i_2(j)j}+h_{\vec{t}}(c_{i_2(j)j})\bigr)\leq 5\al_j$.
	
	\item $i_1(j)\notin\bF_1,\ j\in\pay^1(F'_1)\cap\pay^2(F_2)$.
	We have $\charge_j+A_j\leq a\cdot 5\al^1_j+b\cdot 5\al^2_j\leq 5\al_j$.
	
	\item $j\in\pay^2(F_2)\sm\pay^1(F'_1)$. 
	We have $\charge_j+A_j\leq b\cdot 5\al^2_j+a\cdot 5\al_j\leq 5\al_j$.
	
	\item $j\notin\pay^1(F'_1)\cup\pay^2(F'_2)$.
	We have $\charge_j=0$ and $A_j\leq b\cdot 3\al^2_j+a\cdot 5\al_j\leq 5\al_j$.
	
	\item $j\in\pay^1(F'_1)\sm\pay^2(F'_2)$.
	Both when $i_1(j)\in\bF_1$ and when $i_1(j)\notin\bF_1$, 
	we have $\charge_j+A_j\leq a\cdot 5\al^1_j+b\cdot 5\al_j\leq 5\al_j$.
	\end{enumerate}
	\end{comment}
\end{proof}

\begin{lemma} \label{intbnd}
	Let $(\ttht,\tz)$ be an integral solution to \eqref{newlp}. 
	Let $X_j$ denote its assignment cost under the resulting solution. 
	We have $\sum_j h_{5\vec{t}}(X_j)\leq \sum_j A_j(\ttht,\tz)+\frac{2\e\lb}{n}$.
\end{lemma}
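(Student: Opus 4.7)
\medskip
\noindent
\textbf{Proof proposal for Lemma \ref{intbnd}.}\ The plan is to fix a client $j$, split on the case of $A_j(\ttht,\tz)$ that applies, and then branch on the integral values of $\ttht$ and (where relevant) $\tz_{i_1(j)}$. In each sub-case I will identify a single facility $i^*$ that is actually opened by the algorithm, point-wise bound $X_j\le c_{i^*j}$, and then show that $h_{5\vec{t}}(c_{i^*j})$ is at most the contribution of $A_j(\ttht,\tz)$ plus an additive error of at most $\frac{2\e\lb}{n^2}$ per client. Summing over the $n$ clients will then give the $\frac{2\e\lb}{n}$ slack in the lemma. Throughout I will repeatedly apply Claim \ref{triangle2} (monotonicity and the ``triangle inequality'' $h_{(\tht_1+\tht_2)\vec{t}}(x+y)\le h_{\tht_1\vec{t}}(x)+h_{\tht_2\vec{t}}(y)$), and the scaling identity $h_{2\vec{t}}(2x)=2h_{\vec{t}}(x)$ (which holds coordinatewise for $h_t(z)=(z-t)^+$).

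\medskip
For the ``easy'' cases -- namely case 1 ($i_1(j)\in\bF_1$, $j\in\pay^1(F'_1)\cap\pay^2(F_2)$), case 5, and the sub-cases where either $\tz_{i_1(j)}=1$ or $\ttht$ opens a facility in $\{i_1(j),i_2(j)\}$ -- a facility in $\{i_1(j),i_2(j)\}$ is opened directly, so $h_{5\vec{t}}(X_j)\le h_{\vec{t}}(c_{i_r(j)j})$ (by monotonicity) is bounded pointwise by $A_j(\ttht,\tz)$ with no error. In case 2 with $\tz_{i_1(j)}=0$ and $\ttht=1$, the nearest opened facility is $\sg(i_2(j))\in\bF_1$; since $\sg$ chooses the closest point of $F'_1$ we have $c_{i_2(j)\sg(i_2(j))}\le c_{i_2(j)i_1(j)}\le c_{i_2(j)j}+c_{i_1(j)j}$, giving $X_j\le 2c_{i_2(j)j}+c_{i_1(j)j}$, and then Claim \ref{triangle2} together with the scaling identity yields $h_{5\vec{t}}(X_j)\le 2h_{\vec{t}}(c_{i_2(j)j})+h_{\vec{t}}(c_{i_1(j)j})=A_j(\ttht,\tz)$.

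\medskip
The interesting cases are those in which $j$ is paid for only ``virtually'' through the $5\al_j$ term of $A_j$ -- namely case 3 with $\ttht=1$, case 4 with $\ttht=1$, and case 6 with $\tz_{i_1(j)}=0$ and $\ttht=1$. Here I will invoke Lemma \ref{newlmp}(iii) to obtain $i\in F_2$ and $i'\in F_1\sse F'_1$ with $h_{3\vec{t}}(c_{ij})\le 3\al_j$ and $h_{2\vec{t}}(c_{ii'})\le 2\al_j+\tfrac{2\e\lb}{n^2}$. Since $\ttht=1$ forces $\bF_1$ to be opened and $\sg(i)\in\bF_1$ with $c_{i\sg(i)}\le c_{ii'}$, the facility $\sg(i)$ is open and $X_j\le c_{ij}+c_{i\sg(i)}\le c_{ij}+c_{ii'}$. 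Applying Claim \ref{triangle2}(iii) then gives
\[
h_{5\vec{t}}(X_j)\ \le\ h_{3\vec{t}}(c_{ij})+h_{2\vec{t}}(c_{ii'})\ \le\ 5\al_j+\tfrac{2\e\lb}{n^2},
\]
which matches the $5\al_j$ contribution of $A_j(\ttht,\tz)$ up to the claimed additive error. The ``$F'_2$'' appearing in cases 4 and 5 of the $A_j$ definition is a typo for $F_2$, and I will note this once at the start; the (still easier) sub-cases where $\ttht=0$ are handled as in cases 1 and 5 above, since $F_2$ is opened.

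\medskip
The main obstacle is purely bookkeeping: there are six cases in the piecewise definition of $A_j$, each splitting into two or three sub-cases depending on the integral values of $(\ttht,\tz_{i_1(j)})$, and in several of them the nearest opened facility is not $i_1(j)$ or $i_2(j)$ but only reachable via the ``$\sg$-relay'' $\sg(i_2(j))$ or $\sg(i)$. The key technical ingredients that make everything fit together are (a) the scaling identity $h_{\alpha\vec{t}}(\beta x)=\beta h_{(\alpha/\beta)\vec{t}}(x)$, which is what allows the factor-$2$ detour in case 2 to be charged exactly against $2h_{\vec{t}}(c_{i_2(j)j})$, and (b) Lemma \ref{newlmp}(iii), which is tailored precisely so that the detour through $\sg(i)$ used in cases 3, 4, 6 costs at most $5\al_j+\tfrac{2\e\lb}{n^2}$ per client. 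Summing the per-client additive losses gives at most $n\cdot\tfrac{2\e\lb}{n^2}=\tfrac{2\e\lb}{n}$, which is exactly the slack claimed in the lemma.
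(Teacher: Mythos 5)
Your proposal is correct and follows essentially the same route as the paper's own proof: the same case analysis over the piecewise definition of $A_j$, the same invocation of \Cref{newlmp}(iii) (via the relay facility $\sg(i)$) to bound the $5\al_j$ terms with per-client additive error $\tfrac{2\e\lb}{n^2}$, and the same scaling identity $h_{2\vec{t}}(2x)=2h_{\vec{t}}(x)$ to handle the detour through $\sg(i_2(j))$ in case 2. The only small gap is the remark that the $\ttht=0$ sub-cases of cases 5 and 6 are ``handled as in cases 1 and 5''; there you additionally need that $j\notin\pay^2(F_2)$ gives $h_{3\vec{t}}(c_{i_2(j)j})\le 3\al^2_j\le 3\al_j$ via \Cref{lmp}(ii), which the paper states explicitly but you elide.
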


\begin{proof}
	Consider any client $j$. We abbreviate $A_j(\ttht,\tz)$ to $A_j$. 
	We show that $h_{5\vec{t}}(X_j)\leq A_j+\frac{2\e\lb}{n^2}$, which will prove the lemma. 
	We first note the following. By ~\Cref{newlmp} (iii), there are facilities $i\in F_2$, 
	$i'\in F_1$ such that $h_{3\vec{t}}(c_{ij})\leq 3\al_j$ and 
	$h_{2\vec{t}}(c_{ii'})\leq 2\al_j+\frac{2\e\lb}{n^2}$. If $\ttht=1$, then we know that
	$\sg(i)$ is open. %We have $c_{i\sg(i)}\leq c_{ii'}$. 
	Hence,
	$$
	h_{5\vec{t}}(X_j)\leq h_{5\vec{t}}(c_{\sg(i)j})\leq h_{3\vec{t}}(c_{ij})+h_{2\vec{t}}(c_{i\sg(i)})
	\leq h_{3\vec{t}}(c_{ij})+h_{2\vec{t}}(c_{ii'})\leq 5\al_j+\frac{2\e\lb}{n^2}.
	$$
	Consider each case in the definition of $A_j$. 
	%Let $i=\sg(i_2(j))$.
	%
	\begin{enumerate}[label=$\bullet$, itemsep=0.25ex, parsep=0ex,
		labelwidth=\widthof{$\bullet$}, leftmargin=!] 
		\item $i_1(j)\in\bF_1,\ j\in\pay^1(F'_1)\cap\pay^2(F_2)$.
		If $\ttht=1$, then $i_1(j)$ is open, and if $\ttht=0$, then $i_2(j)$ is open, so
		$h_{\vec{t}}(X_j)\leq A_j$.
		
		\item $i_1(j)\notin\bF_1,\ j\in\pay^1(F'_1)\cap\pay^2(F_2)$.
		If $\tz_{i_1(j)}=1$, then the bound clearly holds. 
		Otherwise, either $i_2(j)$ is open, or $i:=\sg(i_2(j))$ is open. We have 
		$c_{ii_2(j)}\leq c_{i_1(j)i_2(j)}\leq c_{i_1(j)j}+c_{i_2(j)j}$, and so 
		$X_j\leq 2c_{i_2(j)j}+c_{i_1(j)j}$ holds in both cases. 
		So $h_{3\vec{t}}(X_j)\leq h_{2\vec{t}}(2c_{i_2(j)j})+h_{\vec{t}}(c_{i_1(j)j})=A_j$.
		
		\item $j\in\pay^2(F_2)\sm\pay^1(F'_1)$. If $\ttht=0$, clearly $h_{\vec{t}}(X_j)\leq A_j$.
		Otherwise, as shown above, we have 
		$h_{5\vec{t}}(X_j)\leq 5\al_j+\frac{2\e\lb}{n^2}=A_j+\frac{2\e\lb}{n^2}$.
		
		\item $j\notin\pay^1(F'_1)\cup\pay^2(F'_2)$. If $\ttht=0$, then $i_2(j)$ is open and
		$h_{3\vec{t}}(X_j)\leq A_j$. Otherwise, as above, we have 
		$h_{5\vec{t}}(X_j)\leq A_j+\frac{2\e\lb}{n^2}$. 
		
		\item $i_1(j)\in\bF_1,\ j\in\pay^1(F'_1)\sm\pay^2(F'_2)$. If $\ttht=1$, then clearly 
		$h_{\vec{t}}(X_j)\leq A_j$. 
		Otherwise, $i_2(j)$ is open, and $h_{3\vec{t}}(X_j)\leq 3\al_j\leq A_j$. 
		
		\item $i_1(j)\notin\bF_1,\ j\in\pay^1(F'_1)\sm\pay^2(F'_2)$. If $\tz_{i_1(j)}=1$, then
		clearly $h_{\vec{t}}(X_j)\leq A_j$. Otherwise, if $\ttht=0$, then $i_2(j)$ is open, and
		$h_{3\vec{t}}(X_j)\leq h_{3\vec{t}}(c_{i_2(j)j})\leq 3\al_j\leq A_j$.
		If $\ttht=1$, then as shown at the beginning, we have 
		$h_{5\vec{t}}(X_j)\leq 5\al_j+\frac{2\e\lb}{n^2}=A_j+\frac{2\e\lb}{n^2}$. 
	\end{enumerate}
	
	\vspace*{-3ex}
\end{proof}

\begin{proofof}{{\bf Finishing up the proof of ~\Cref{5apxthm}}}
	%The $\ell$-centrum cost of the solution returned is at most 
	%$5\ell\vec{t}+\sum_j h_{5\vec{t}}(X_j)$, 
	Let $X_j$ be the assignment cost of client $j$ in the solution returned.
	%The guarantee now follows by 
	Combining Lemmas~\ref{newlpbnd} and~\ref{intbnd}, we obtain that 
	$\sum_j h_{5\vec{t}}(X_j)\leq 5\OPT+\e\lb\bigl(5+\frac{2}{n}+\frac{5}{n^2}\bigr)$.
	Since $\OPT\leq\sum_j h_{\vec{t}}(\vo_j)$, combining this with ~\Cref{proxyub}
	shows that $\obj(w;\bullet)$-cost of the solution returned is at most 
	$5(1+\e)(1+2\e)\iopt+(1+\e)\e\lb\bigl(5+\frac{2}{n}+\frac{5}{n^2}\bigr)$.
	%recalling that $\ell\vec{t}+\OPT_{\vec{t}}\leq\iopt$.
\end{proofof} 

\begin{proofof}{~\Cref{cont}}
	We mimic the proof in~\cite{CharikarG99}. We use $x_-$ to denote a quantity
	infinitesimally smaller than $x$. Let $\dt=\ld_2-\ld_1$.
	Sort the clients in increasing order of their $\al^0_j:=\min\{\al^1_j,\al^2_j\}$ value. So
	$\al^0_1\leq\ldots\leq\al^0_n$. We prove that $|\al^1_j-\al^2_j|\leq 2^{j-1}\dt$ for all
	$j=1,\ldots,n$, which implies the lemma.
	
	We proceed by induction on $j$. Consider running the dual-ascent phase of the primal-dual
	algorithm for $\ld=\ld_1$ and $\ld=\ld_2$ in parallel. For the base case, suppose that
	$\al^0_1=\al^r_1$, where $r\in\{1,2\}$. Consider the time point $\tm=\al^0_1$ in the two
	executions. By definition, at time $\tm_-$, all clients are active in the two executions.
	So at time $\tm$, we have $\al^1_j=\al^2_j=t$ for all $j$, and so 
	$\beta^1_{ij}=\beta^2_{ij}$ for all $i, j$.  
	Client $1$ froze in execution $r$ at time $t$, because at that time it can reach some
	facility $f$ %(either at time $t$ or an earlier point of time) 
	for which constraint \eqref{fpay} became tight at time $\tm$; we say that $f$ got paid
	for at time $t$ (in the execution $r$). Let $\br=2-r$. 
	We have $\sum_j\beta^{\br}_{fj}=\sum_j\beta^r_{fj}$ at time $\tm$, so $\sum_j\beta^{\br}_{fj}$
	can increase by at most $\dt$ beyond time $t$. Hence, $\al^{\br}_1$ can increase by at most
	$\dt$ beyond time $\tm$ (since any increase in $\al^{\br}_1$ translates to the same
	increase in $\beta^{\br}_{f1}$ as $\al^{\br}_1\geq h_{\vec{t}}(c_{f1})$ at time $\tm$).
	
	Suppose we have shown that $|\al^1_j-\al^2_j|\leq 2^{j-1}\dt$ for all $j=1,\ldots,\ell-1$
	(where $\ell\geq 2$). Now consider client $\ell$. The induction step follows from a
	similar argument. Consider time point $\tm=\al^0_\ell$ in
	both executions. By definition, all clients $j\geq\ell$ are active at time $\tm_-$ in the
	two executions. So at time $\tm$, we have $\al^1_j=\al^2_j=\tm$ for all $j\geq\ell$. 
	%and $\beta^1_{ij}=\beta^2_{ij}$ for all $i$, $j\geq\ell$. 
	Suppose $\al^0_\ell=\al^r_\ell$,
	where $r\in\{1,2\}$, and let $\br=2-r$. In execution $r$, client $\ell$ froze at time $t$
	due to some facility $f$, where either: (1) $f$ was paid for by time $\tm$, and $\ell$
	reached $f$ at time $\tm$; or (2) $f$ got paid for at time $\tm$, and $\ell$ reached
	$f$ at or before time $\tm$.
	At time $\tm$, we have $\beta^{\br}_{fj}\geq\beta^r_{fj}-2^{j-1}\dt$ for all $j<\ell$ by the
	induction hypothesis, and $\beta^1_{fj}=\beta^2_{fj}$ for all $j\geq\ell$. 
	Therefore, the contribution $\sum_j\beta^{\br}_{fj}$ from clients to the LHS of \eqref{fpay}
	at time $t$ is at least $\ld_1-\sum_{j=1}^{\ell-1}2^{j-1}\dt$. So this contribution can
	increase by at most $\dt+\sum_{j=1}^{\ell-1}2^{j-1}\dt=2^{\ell-1}\dt$ beyond time $\tm$ in
	execution $\br$. So since $\al^{\br}_\ell=\al^r_\ell\geq h_{\vec{t}}(c_{f\ell})$ at time $\tm$,
	it follows that $\al^{\br}_\ell$ can increase by at most $2^{\ell-1}\dt$ beyond time $\tm$.
	%\hfill \qed
\end{proofof}

\bibliography{stoc-centridian}
\bibliographystyle{plain}

\end{document}